    \let\Cref\crtCref
    \let\cref\crtcref
\pgfplotsset{compat=1.10}
\newtheorem{theorem}{Theorem}
\newtheorem{corollary}[theorem]{Corollary}
\newtheorem{lemma}[theorem]{Lemma}
\newtheorem{proposition}[theorem]{Proposition}
\theoremstyle{remark}
\crefname{claim}{Claim}{Claims}
\crefname{lemma}{Lemma}{Lemmas}
\crefname{theorem}{Theorem}{Theorems}
\crefname{part}{part}{parts}
\crefname{alt}{alternative}{alternatives}
\crefname{step}{step}{steps}
\crefname{algocf}{Algorithm}{Algorithms}
\crefname{eq}{equality}{equalities}
\crefname{eqs}{equalities}{equalities}
\crefname{eqn}{equation}{equations}
\crefname{ineq}{inequality}{inequalities}
\crefname{ineqs}{inequalities}{inequalities}
\crefname{exp}{expression}{expressions}
\crefname{prop}{property}{properties}
\crefname{props}{properties}{properties}
\crefname{imp}{implication}{implications}
\crefname{eqv}{equivalence}{equivalences}
\crefname{def}{definition}{definitions}
\newcommand{\calG}{\mathcal{G}}
\newcommand{\calI}{\mathcal{I}}
\newcommand{\calP}{\mathcal{P}}
\newcommand{\calQ}{\mathcal{Q}}
\newcommand{\cI}{\mathcal{I}}
\newcommand{\RR}{\mathbb{R}}
\newcommand{\NN}{\mathbb{N}}
\newcommand{\EE}{\mathbb{E}}
\newcommand{\PP}{\mathbb{P}}
\newcommand{\bfp}{p}
\newcommand{\bfP}{P}
\newcommand{\bfq}{q}
\newcommand{\bfs}{s}
\newcommand{\bfx}{x}
\newcommand{\bfz}{z}
\newcommand{\bflambda}{\lambda}
\newcommand{\ie}{i.e.,\xspace}
\newcommand{\indfam}{\calI}
\newcommand{\indins}{\Psi}
\newcommand{\MEC}{\normalfont\textsc{M}}
\newcommand{\OPT}{\normalfont\textsc{Opt}}
\DeclareMathOperator{\spn}{span}
\DeclareMathOperator{\score}{\sigma}
\newcommand{\nalgtwopar}{{\normalfont\textsc{Partition}}}
\newcommand{\algtwopar}{{\normalfont\textsc{Par}}}
\newcommand{\nalgprukna}{{\normalfont\textsc{KnapsackPlurality}}}
\newcommand{\algprukna}{{\normalfont\textsc{KPR}}}
\newcommand{\algkgreedy}{{\normalfont\textsc{KGr}}}
\newcommand{\algkgreedyex}{{\normalfont\textsc{EKGr}}}
\newcommand{\nalgpruknadet}{{\normalfont\textsc{DetKnapsackPlurality}}}
\newcommand{\algpruknadet}{{\normalfont\textsc{DKPR}}}
\newcommand{\nalgprugen}{{\normalfont\textsc{MatroidPlurality}}}
\newcommand{\algprugen}{{\normalfont\textsc{MPR}}}
\newcommand{\nalgvertex}{{\normalfont\textsc{VertexPartition}}}
\newcommand{\algvertex}{{\normalfont\textsc{VP}}}
\newcommand{\nalgmatroidpar}{{\normalfont\textsc{BestFromPartition}}}
\newcommand{\algmatroidpar}{{\normalfont\textsc{BP}}}
\definecolor{color0}{RGB}{255,217,50}
\definecolor{color1}{cmyk}{1, .37, 0, 0}
\colorlet{color1trans}{color1!60!white}
\definecolor{color2}{cmyk}{0, .50, .95, .05}
\definecolor{color3}{cmyk}{.18, .98, .18, 0}
\colorlet{color3trans}{color3!80!white}
\begin{document}
\title{Impartial Selection Under Combinatorial Constraints}
\author[J.~Cembrano]{Javier Cembrano}
\address[J.~Cembrano]{Department of Algorithms and Complexity, Max Planck Institut für Informatik, Germany}
\email{\href{mailto:jcembran@mpi-inf.mpg.de}{jcembran@mpi-inf.mpg.de}}
\urladdr{\url{https://sites.google.com/view/javier-cembrano/}}

\author[M.~Klimm]{Max Klimm}
\address[M.~Klimm]{Institute for Mathematics, Technische Universität Berlin, Germany}
\email{\href{mailto:klimm@math.tu-berlin.de}{klimm@math.tu-berlin.de}}
\urladdr{\url{https://www3.math.tu-berlin.de/disco/team/klimm/}}

\author[A.~Merino]{Arturo Merino}
\address[A.~Merino]{Institute of Engineering Sciences, Universidad de O'Higgins, Chile}
\email{\href{mailto:arturo.merino@uoh.cl}{arturo.merino@uoh.cl}}
\urladdr{\url{https://amerino.cl}}

\begin{abstract}
Impartial selection problems are concerned with the selection of one or more agents from a set based on mutual nominations from within the set.
To avoid strategic nominations of the agents, the axiom of impartiality requires that the selection of each agent is independent of the nominations cast by that agent.
This paper initiates the study of impartial selection problems where the nominations are weighted and the set of agents that can be selected is restricted by a combinatorial constraint.
We call a selection mechanism \mbox{$\alpha$-optimal} if, for every instance, the ratio between the total sum of weighted nominations of the selected set and that of the best feasible set of agents is at least~$\alpha$.
We show that a natural extension of a mechanism studied for the selection of a single agent remains impartial and \mbox{$\frac{1}{4}$-optimal} for general independence systems, and we generalize upper bounds from the selection of multiple agents by parameterizing them by the girth of the independence system.
We then focus on independence systems defined by knapsack and matroid constraints, giving impartial mechanisms that exploit a greedy order of the agents and achieve approximation ratios of~$\frac{1}{3}$ and~$\frac{1}{2}$, respectively, when agents cast a single nomination. For graphic matroids, we further devise an impartial and \mbox{$\frac{1}{3}$-optimal} mechanism for an arbitrary number of unweighted nominations.
\end{abstract}

\maketitle 

\section{Introduction}
The organization of science itself involves several situations that require impartial judgment and review by peers. A prominent example is the selection process for publication venues, such as the proceedings of the \emph{Conference on Web and Internet Economics (WINE)}. For this and similar events, it is customary that there is a substantial overlap between the set of program committee members and authors of papers. Program committee members provide reviews for papers that (in a slight simplification of the matter) can be viewed as assigning a numerical score, and the goal is to select papers with an overall high score for the conference subject to a limit on the number of papers that can be accepted.
In a similar vein, \citet{merrifield2009telescope} proposed to allot telescope time to researchers based on a peer review of the proposals among the researchers that submitted proposals themselves, a mechanism that was later applied by the Sensors and Sensing Systems program of the National Science Foundation (NSF).

Such processes of peer review are clearly an important pillar of the scientific method, but in situations when people both provide and receive reviews, it suffers from the fact that providing particularly low reviews for others may increase the chances of the own papers or projects being successful.
For the allocation of telescope time, \citeauthor{merrifield2009telescope} address this concern by providing extra benefits for researchers whose reviews coincide with other reviews. This method, however, comes with further downsides since the original goal should be to elicit the projects most worthy of the telescope time, which may not necessarily coincide with the projects submitted by the most accurate reviewers. Moreover, while it may motivate some reviewers to submit honest reviews, it does not provide any theoretical guarantees in this regard.

To prevent incentive issues like these, \citet{alon2011sum} and \citet{holzman2013impartial} introduced the concept of the \emph{impartiality} of a selection mechanism. While they studied this concept only for the selection of a single entity or a fixed number of $k$ entities, respectively, and only based on binary scores, it is straightforward to generalize this concept to more general selection problems in the following way.
Formally, consider a finite set of agents $E$ that cast scores for each other, i.e., for every $i,j \in E$ with $i \neq j$ the value $w_{ij} \geq 0$ is the score assigned by agent~$i$ to agent~$j$. A randomized selection mechanism takes the agents' mutual scores as input and returns a lottery over subsets of agents.
A selection mechanism is called \emph{impartial} if the probability of selection of an agent is independent of the scores cast by this agent for other agents.
Put differently, impartiality requires that there are no two inputs to the mechanism that differ only in the scores cast by agent~$i$ and where the probability of selection of agent~$i$ for the two inputs differ.
Impartiality is clearly an appealing concept for peer review applications since it provides the theoretical guarantee that no agent may tamper with its own selection by manipulating the scores it submits.
Unfortunately, it is easy to convince ourselves that straightforward mechanisms that simply select a subset of agents with the highest total score are not impartial. This makes it natural to study the approximation guarantees of impartial mechanisms. For $\alpha \in [0,1]$, we call a selection mechanism \emph{$\alpha$-optimal} if, for every instance, the ratio between the (expected) total score of the selected agents and the total score of any feasible set of agents is at least $\alpha$.
We then call $\alpha$ the \emph{approximation ratio} of the mechanism.
Previous research established bounds on the best-possible values of $\alpha$ for impartial mechanisms for the selection of a single agent \cite{alon2011sum,cembrano2023single,fischer2015optimal,holzman2013impartial,} and up to $k$ agents \cite{bjelde2017impartial,cembrano2023deterministic} for a fixed number $k \in \mathbb{N}$. 
These results, except the one of \citet{cembrano2023deterministic}, only apply in the case where the scores submitted by the agents are binary, i.e., either $0$ or $1$.

In this paper, we initiate the study of selection mechanisms where the set of agents that can be selected is given by an \emph{independence system} $(E,\mathcal{I})$, i.e., $\mathcal{I} \subseteq 2^E$ and $T \in \mathcal{I}$ whenever $T \subseteq S$ with $S \in \mathcal{I}$.
Depending on the application, the independence system may encode various feasibility constraints. 

First, consider a stylized version of the conference proceedings problem where only single-author papers are submitted and every author submits at most one paper. Assume that at most $k$ papers can be selected. This can be formulated as an impartial selection problem where authors and program committee members correspond to agents and the independence system is a \emph{$k$-uniform matroid}, i.e., $\mathcal{I} = \{S \subseteq E \mid |S| \leq k\}$.
This setting is also called the \emph{$k$-selection problem}.
If the conference features several tracks $T$, papers are submitted to tracks, and each track $t \in T$ can accept at most $k_t \in \mathbb{N}$ papers, the problem can be modeled as a \emph{partition matroid}.
Another example involves an election designed to select a subset of agents and assign each a specific chore or position, with each agent being compatible with only a certain set of chores. 
In this context, it is logical to require that each chore is assigned to no more than one agent, i.e., that the chosen subset of agents can be matched to a subset of chores in the graph representing the compatibility relations through its edges. This feasibility relation is captured by a \emph{transversal matroid}.
Finally, for the allocation of telescope time, it is reasonable to assume that there is a total number of $C \in \mathbb{N}$ time slots available and the proposal submitted by researcher~$i$ requires $s_i \in \mathbb{N}$ time units of the total time available. Then, the set of researchers whose proposals can be accepted can be modeled with a \emph{knapsack constraint}, so that $\mathcal{I} = \{S \subseteq E \mid \sum_{i \in S } s_i \leq C\}$.

\subsection{Our Contribution and Techniques}
This paper studies impartial selection problems that generalize the existing models in two ways. First, we consider general combinatorial constraints given by an independence system. These combinatorial constraints contain the previously studied settings as a special case when the independence system is a uniform matroid. Second, we allow agents to submit weighted nominations (or \emph{scores}) for each other while previous research mainly concentrated on the case of binary nominations that are either cast (corresponding to a score of $1$) or not (corresponding to a score of $0$).
To obtain a more fine-grained set of results, we further introduce the \emph{sparsity} of a voting setting. 
For $d\in \mathbb{N}$, we call a voting setting \emph{$d$-sparse} if only instances where every agent casts scores for at most $d$ other agents are considered. 
Our main results are summarized in Table~\ref{tab:results}.
{\small
\begin{table}[tb]
\makebox[0cm]{
\begin{tabular}{@{}l @{\quad} c @{~} l @{\ \ } c @{~} l @{\ } c @{~} l @{\quad} r @{~} l @{\ \ } r @{~} l @{ } r @{~} l @{}}
\toprule \\[-8pt]
            & \multicolumn{6}{c}{\raisebox{0.3em}{\rule{1.6cm}{0.4pt}}\;\textbf{Independence systems}\;\raisebox{0.3em}{\rule{1.6cm}{0.4pt}}}                                                                       & \multicolumn{2}{c}{\textbf{Knapsack}\ \ } & \multicolumn{4}{c}{\raisebox{0.3em}{\rule{1.1cm}{0.4pt}}\;\textbf{Matroids}\;\raisebox{0.3em}{\rule{1.1cm}{0.4pt}}}                                      \\[2pt]
            & \multicolumn{2}{c}{general \ } & \multicolumn{2}{c}{w/girth $g$, 1-sparse} & \multicolumn{2}{c}{w/girth $g$\ \ } & \multicolumn{2}{c}{1-sparse\ \ } & \multicolumn{2}{c}{1-sparse\ \ } & \multicolumn{2}{c}{\makecell{simple graphic\\ (binary)}} \\[3pt] \midrule\\[-7pt]
\textbf{Lower} & \multirow{2}{*}{\normalsize$\frac{1}{4}$}       &  \multirow{2}{*}{(Thm.~\ref{thm:gral-indep-lb})}     & \multirow{2}{*}{\normalsize$\frac{1}{4}$}                   &    \multirow{2}{*}{(Thm.~\ref{thm:gral-indep-lb})}        & \multirow{2}{*}{\normalsize$\frac{1}{4}$}            &  \multirow{2}{*}{(Thm.~\ref{thm:gral-indep-lb})}     & \multirow{2}{*}{\ \normalsize$\frac{1}{3}$}       &   \multirow{2}{*}{(Thm.~\ref{thm:knapsack-lb})}     & \multirow{2}{*}{\ \normalsize$\frac{1}{2}$}       &   \multirow{2}{*}{(Thm.~\ref{thm:matroids-lb})}    & \multirow{2}{*}{\ \ \ \normalsize$\frac{1}{3}$}          &   \multirow{2}{*}{(Thm.~\ref{thm:graphic-matroids-lb})} \\
\textbf{bound} & & & & & & & & & & & & \\[5pt]
\textbf{Upper} & \multirow{2}{*}{\normalsize$\frac{1}{2}$}       & \multirow{2}{*}{(Prop.~\ref{prop:ub-gral})}      & \multirow{2}{*}{\ \ \normalsize$\frac{g(g-1)-1}{g(g-1)}$}            & \multirow{2}{*}{(Thm.~\ref{thm:gral-indep-ub-plu})}           & \multirow{2}{*}{\ \ \normalsize$\frac{2g}{2g+1}$}       &  \multirow{2}{*}{(Thm.~\ref{thm:gral-indep-ub-app})}     & \multirow{2}{*}{\normalsize$\frac{1}{2}$}       & \multirow{2}{*}{(Prop.~\ref{prop:ub-gral})}       & \multirow{2}{*}{\normalsize$\frac{1}{2}$}       &  \multirow{2}{*}{(Prop.~\ref{prop:ub-gral})}      & \multirow{2}{*}{\normalsize$\frac{5}{6}$}          & \multirow{2}{*}{(Thm.~\ref{thm:gral-indep-ub-plu})} \\
\textbf{bound} & & & & & & & & & & & & \\[3pt]
\bottomrule       
\end{tabular}
}
\caption{
Our main results.\label{tab:results}
}
\end{table}
}

We first consider the setting where selection constraints are defined by general independence systems. We show that the~$2$-partition mechanism proposed by \citet{alon2011sum} still ensures an approximation ratio of~$\frac{1}{4}$ in this broader context. Additionally, we provide upper bounds on the approximation ratio achievable by impartial mechanisms parameterized by the \emph{girth}~$g \coloneqq g(\mathcal{I})$ of the independence system $\mathcal{I}$, defined as the cardinality of the smallest dependent set, i.e., $g(\mathcal{I}) \coloneqq \min\{|S| \mid S \in 2^E \setminus \mathcal{I}\}$.
Specifically, we derive an upper bound of~$1-\frac{1}{g(g-1)}$ for $1$-sparse instances, and a general upper bound of~$1-\frac{1}{2g+1}$. 
Asymptotically, these bounds extend the respective best-known upper bounds of~$1-\frac{2}{(k+1)^2}$ and~$1-\frac{1}{k+2}$ for~$k$-selection (where~$g=k+1$).

We then move on to the natural case of knapsack constraints, where each agent~$i \in E$ has size~$s_i$ and feasible sets consist of subsets with a total size below a specified capacity~$C$. 
For $d$-sparse instances, we devise a randomized mechanism with an approximation ratio of~$\frac{1}{d+2}$, implying a \mbox{\smash{$\frac{1}{3}$}-approximation} in the case of each agent evaluating at most one other agent.
We follow the classic greedy approach and sort the agents in decreasing order of their ratio of score and size, and call the union of the greedy solution and the fractional agent (if such an agent exists) the \textit{extended greedy solution}.
We then take inspiration from a mechanism proposed by \citet{tamura2014impartial} to carefully construct a family of up to~$d+2$ feasible sets with two properties: (1)~No agent can affect the fact of belonging or not to some set in the family by changing its votes, and (2) all agents in the extended greedy solution belong to some set in the family.
Given these two properties, the mechanism that selects each set in the family with probability~$\smash{\frac{1}{d+2}}$ returns a feasible set, is impartial, and \mbox{\smash{$\frac{1}{d+2}$}-optimal}. While deterministic mechanisms cannot provide any constant approximation ratio better than~$0$, as instances with agents of size equal to~$C$ correspond to the~$1$-selection setting for which no non-trivial results are possible \cite{alon2011sum,holzman2013impartial}, positive results can be achieved for restricted instances with relatively small agents. Specifically, we introduce a deterministic impartial mechanism with a multiplicative guarantee of~$1-\frac{s_{\max}}{C}(d+1)$ on $d$-sparse instances where the largest size is~$s_{\max}$. Once again, it builds upon the mechanism by \citet{tamura2014impartial}, this time considering a reduced capacity of~$C-d\cdot s_{\max}$ to ensure feasibility in a deterministic manner.

We finally turn our attention to matroids.
We devise a randomized mechanism for impartial selection under matroid constraints that reaches an approximation ratio of~$\frac{1}{2}$ on $1$-sparse instances. Again based on the mechanism by \citet{tamura2014impartial}, we greedily select with probability~$\frac{1}{2}$ a maximal independent set of agents sorted by their score, along with all agents that could turn themselves into this set by modifying their votes. While impartiality and approximate optimality follow easily, the feasibility of this mechanism is proven by appropriately combining basis-exchange and optimality properties of matroids. Remarkably, this mechanism achieves an optimal general guarantee among impartial mechanisms, as an upper bound of~$\frac{1}{2}$ already holds for~$1$-selection. We then focus on \textit{simple graphic matroids}, one of the most classic classes of matroids where agents represent edges of a simple undirected graph and the independent sets correspond to forests. 
We give an impartial and \mbox{$\frac{1}{3}$-optimal} mechanism for this setting in the case of binary scores. 
In a first step, we partition the edges into sets by taking a permutation of the vertices uniformly at random and adding to one set all incident edges to the vertices that have not been added to another set. In a second step, we run the \textit{$k$-partition with permutation} mechanism for impartial $k$-selection by \citet{bjelde2017impartial}, which ensures the selection of one edge from each set with a maximum observed score, i.e., score from other sets and score from edges in the same set that appear before in a permutation of the edges taken uniformly at random.

\subsection{Related Work}
Impartial $1$-selection has been studied by \citet{holzman2013impartial}, who showed that deterministic mechanisms cannot satisfy a set of natural axioms and thus cannot be $\alpha$-optimal for any $\alpha > 0$. 
\citet{alon2011sum} showed that also selection mechanisms that always select exactly $k$ agents cannot be $\alpha$-optimal for $\alpha > 0$. 
They 
devised the randomized $2$-partition mechanism, showed that it is $\smash{\frac{1}{4}}$-optimal, and gave an example showing that no $1$-selection mechanism can be better than $\smash{\frac{1}{2}}$-optimal.
A mechanism with a matching approximation ratio of $\smash{\frac{1}{2}}$ was then given by \citet{fischer2015optimal}.
\citet{bousquet2014near} gave a mechanism whose approximation ratio approaches $1$ as the maximum number of votes for an agent goes to infinity.
\citet{bjelde2017impartial} studied the approximation ratios of deterministic and randomized $k$-selection mechanisms. Deterministic $k$-selection mechanisms with improved approximation guarantees were given by \citet{cembrano2023deterministic}.

\citet{caragiannis2022impartial} initiated the study of impartial mechanisms with additive approximation guarantees. 
Specifically, they proposed a randomized $1$-selection mechanisms for $1$-sparse binary scores with an additive approximation guarantee of $O(\sqrt{n})$, i.e., where the difference between the maximal number of votes cast for any agent and the expected number of votes cast for the selected agent is at most $O(\sqrt{n})$, where $n$ is the number of agents.
\citet{cembrano2024impartial} obtained the same additive guarantee for a deterministic mechanism. They further show that without the restriction to $1$-sparse instances only the trivial guarantee of $n-1$ is possible. \citet{caragiannis2023impartial} studied additive guarantees of mechanisms that have access to prior information on the expected votes that agents will receive.

Axiomatic studies of impartial mechanisms complementing the results of \citeauthor{holzman2013impartial} have been conducted by \citet{mackenzie2015symmetry} for symmetric mechanisms, by \citet{tamura2014impartial,tamura2016characterizing} for mechanisms selecting more than one agent, and by \citet{mackenzie2020axiomatic} for selection mechanisms that satisfy additional fairness properties. \citet{cembrano2022optimal} studied the approximation guarantees of $k$-selection mechanisms inspired by those proposed by \citet{tamura2014impartial} in more detail.
Impartial peer review has been studied by \citet{aziz2019strategyproof,kurokawa2015impartial,mattei2021peernomination}. While \citeauthor{aziz2019strategyproof} as well as \citeauthor{mattei2021peernomination} do not obtain theoretical guarantees comparable to ours, a mechanism proposed by \citeauthor{kurokawa2015impartial} achieves an approximation ratio of $\frac{k}{k+d}$ when up to $k$ papers are selected and the instance is $d$-sparse. 

For the basic definition and results on the knapsack problem, we refer the reader to the books by Kellerer, Pferschy, and Pisinger~\citep{kellerer2004knapsack} and \citet[Chapter~17]{korte18}.
For an overview of matroid theory and its various applications, we refer the reader to the books by \citet{oxley1992matroid,welsh2010matroid}.

\section{Preliminaries}
\label{sec:prelims}

For~$n\in \NN$, we let $[n]\coloneqq\{1,\ldots,n\}$ denote the set of integers from~$1$ to~$n$. 
We use the Iverson bracket~$\llbracket \cdot \rrbracket$ for the indicator function, i.e.,~$\llbracket p \rrbracket =1$ if the logical proposition~$p$ is true and~$\llbracket p \rrbracket =0$ otherwise. 
For a set~$E$, we assume that it is totally ordered; i.e., $E = \{e_1,\dots, e_m\}$.
This total order extends naturally to a total order~$\prec_E$ over~$2^E$ by mapping each set~$S \subseteq E$ to a bitstring~$x_S$ of length~$E$ where the~$i$-th coordinate is~$\llbracket e_i \in S \rrbracket$ and ordering lexicographically; i.e.,~$S \prec_E T$ if and only if~$x_S$ is smaller lexicographically than~$x_T$.%
\footnote{This total order plays the role of a tie-breaking rule throughout the paper.}
For a function~$\phi\colon E\to \RR$, we define the total order~$\prec_\phi$ over~$2^E$ by
\[
    S \prec_\phi T \quad \Longleftrightarrow \quad \phi(S)<\phi(T) \text{ or } (\phi(S)=\phi(T) \text{ and } S\prec_E T), \qquad \text{for all } S,T\in 2^E.
\]
We write~$\phi(S)\prec \phi(T)$ instead of~$S\prec_\phi T$ for simplicity. 
We further stick to this order when computing the argument of the maximum or minimum of a function $\phi\colon E\to \RR$, so that
\[
    \arg\max\{\phi(T) \mid T\in E\} = S \quad \Longleftrightarrow \quad \phi(T)\prec \phi(S) \text{ for all } T\in 2^E\setminus S
\]
and analogously for $\arg\min\{\phi(T) \mid T\in E\}$.

An \emph{independence system} is a pair~$(E,\indfam)$ such that~$E$ is a finite set, called the \emph{ground set}, and~$\indfam$ is a non-empty and downwards closed collection of subsets of~$E$; i.e., is such that~$\emptyset \in \indfam$ and, for each~$T\subseteq S \subseteq E$, if~$S\in \indfam$ then~$T\in \indfam$.
We often write~$E=[m]$ for~$m \coloneqq|E|$; and associate $E$ with a set of \emph{agents}. We further refer to the subsets of~$E$ that belong to~$\indfam$ as \emph{independent sets}, which correspond to the eligible sets in our context, and to the remaining subsets of~$E$ as \emph{dependent sets}.
For an independence system~$(E,\indfam)$, we let~$r \colon 2^E \to \NN_0$, defined as~$r(S) \coloneqq \max\{ |T|\mid T\in 2^S \cap \indfam\}$ for all~$S\in 2^E$, denote the \emph{rank} function of the system, i.e., the function that maps each subset of~$E$ to the cardinality of its largest independent subset.
For an independence system~$(E,\indfam)$, we further denote by \mbox{$g\coloneqq\min\{|S|\mid S\in 2^E \setminus \indfam\}$} the size of its smallest dependent set, commonly named the \emph{girth} of the system. 

For an independence system~$(E,\indfam)$, a \textit{score matrix} is a matrix~$W\in \RR^{E\times E}_{+}$ such that~$w_{ii}=0$ for every~$i\in E$. For each~$i,j\in E$ with~$i\neq j$, the entry~$w_{ij}$ represents the score assigned by agent~$i$ to agent~$j$; note that this matrix is binary when agents are only allowed to approve or disapprove others.
We say that~$W$ is a \mbox{\emph{$d$-sparse} matrix} for some~$d\in \NN$ if $|\{j\in E \mid w_{ij}>0\}| \leq d$ for every~$i\in E$.
For a score matrix~$W$ and an agent~$i\in E$, we write~$W_{-i}$ for the matrix in $\RR^{E \times E}_{+}$ obtained by replacing the~$i$-th row from~$W$ by zeros.
For an independence system~$(E,\indfam)$, a score matrix~$W$, and a subset~$S\subseteq E$, we write $\indfam|_S\coloneqq \{I\cap S \subseteq E \mid I \in \cI \}$ for the independence family restricted to~$S$.
For a score matrix~$W$ and sets of agents~$S,T \in 2^E$, we let $\score_{W,S}(T) \coloneqq \sum_{j\in T}\sum_{i\in S} w_{ij}$ be the \emph{total score} of set~$T$ from agents in~$S$; when~$T=\{j\}$ for some~$j\in E$ we write~$\score_{W,S}(j)$ instead of~$\score_{W,S}(\{j\})$ and when~$S=E$ we omit the subindex~$S$.
We also omit the dependence on~$W$ when it is clear from context. 
An instance is given by a tuple~$\Gamma = (E,\indfam,W)$ where~$(E,\indfam)$ is an independence system and~$W$ is a score matrix. 
For an instance~$\Gamma = (E,\indfam,W)$, we let $\OPT(\Gamma) = \arg\max \{ \score_W(T) \mid T\in \indfam\}$ denote an independent set with maximum score. 

A \emph{selection mechanism} is given by a family of functions~$\MEC$, one for each independence system~$(E,\indfam)$, that map an instance $\Gamma = (E,\indfam,W)$ to a probability distribution over~$2^E$ such that $\sum_{S\in \indfam} \MEC_S(\Gamma)=1$, where~$\MEC_S(\Gamma)$ denotes the probability assigned by~$\MEC$ to the set~$S$. 
For~$i\in E$ we also let $\MEC_i(\Gamma) \coloneqq \sum_{S\in \indfam : i\in S} \MEC_S(\Gamma)$ denote the probability assigned by the mechanism to agent~$i$.
In a slight abuse of notation, we will use~$\MEC$ to refer to both the mechanism and to individual functions from the family.
A mechanism~$\MEC$ is \textit{deterministic} on a set of instances~$\indins$ if for every $\Gamma = (E,\indfam,W)\in \indins$ and every~$S\subseteq E$ we have $\MEC_S(\Gamma) \in \{0,1\}$.
A mechanism~$\MEC$ is \emph{impartial} on a set of instances~$\indins$ if, on this set of instances, the scores assigned by an agent do not influence its selection, \ie if for every pair of instances $\Gamma = (E,\indfam,W)$ and $\Gamma' = (E,\indfam,W')$ in~$\indins$ and every agent \mbox{$i\in E$}, \mbox{$\MEC_i(\Gamma)=\MEC_i(\Gamma')$} holds whenever~$W_{-i}=W'_{-i}$.%
\footnote{We remark that this definition of impartiality takes the underlying independence system as given: Agents can strategize by changing their scores but cannot affect the combinatorial constraints.}
A mechanism~$\MEC$ is \mbox{\emph{$\alpha$-optimal}} on a set of instances~$\indins$, for~$\alpha \in [0,1]$, if for every instance $\Gamma = (E,\indfam,W) \in \indins$ the expected total score of the choice of~$\MEC$ differs from the maximum score of an independent set by a factor of at most~$\alpha$, \ie if
\[
    \inf \biggl\{ \frac{\EE[\score_W(\MEC(\Gamma))]}{\score_W(\OPT(\Gamma))}
    \;\Big\vert\; \Gamma = (E,\indfam,W) \in \indins: \score_W(\OPT(\Gamma))>0 \biggr\}\geq \alpha.
\]

In the~$k$-selection setting, \citet{bjelde2017impartial} showed that distributing selection probabilities to sets of size up to~$k$, where the probabilities sum up to~$1$, is equivalent to distributing selection probabilities to agents, where the probabilities sum up to at most~$k$. 
We state an analogous result for matroids in \Cref{sec:matroids}. 
However, one direction of a more general version of this statement holds true for any independence system.

\begin{lemma}
\label{lem:birkhoff-gral}
	Let~$(E,\indfam)$ be an independence system with rank function~$r$. If~$\bfx\in [0,1]^{2^E}$ is a probability distribution such that~$\sum_{S\subseteq E} x_S = \sum_{S\in \indfam} x_S = 1$, then~$p\in [0,1]^E$ defined as~\mbox{$p_i\coloneqq\sum_{S\subseteq E: i\in S} x_S$} for each~$i\in E$ satisfies~$\sum_{i\in S} p_i \leq r(S)$ for every~$S\subseteq E$.
\end{lemma}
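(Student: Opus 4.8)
The plan is to compute $\sum_{i\in S}p_i$ for a fixed but arbitrary $S\subseteq E$ by swapping the order of the two finite summations, and then to invoke downward-closure of $\indfam$ together with the definition of the rank function.

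First I would record the (immediate) observation that $x$ is supported on $\indfam$: since all entries of $\bfx$ are nonnegative and $\sum_{S\subseteq E}x_S=\sum_{S\in\indfam}x_S=1$, we must have $x_T=0$ for every $T\in 2^E\setminus\indfam$. Next, fixing $S\subseteq E$ and substituting the definition $p_i=\sum_{T\subseteq E:\,i\in T}x_T$, I would interchange the sums over $i\in S$ and over $T\subseteq E$ to get $\sum_{i\in S}p_i=\sum_{T\subseteq E}x_T\,|T\cap S|=\sum_{T\in\indfam}x_T\,|T\cap S|$, where the last equality uses the support property just established.

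The key step is then that for each $T\in\indfam$ the set $T\cap S$ is contained in $T$, so $T\cap S\in\indfam$ by downward-closure, and it is also contained in $S$; hence $T\cap S$ is an independent subset of $S$, which gives $|T\cap S|\le r(S)$ directly from the definition $r(S)=\max\{|T'|\mid T'\in 2^S\cap\indfam\}$. Substituting this bound and factoring out the constant $r(S)$ yields $\sum_{i\in S}p_i\le r(S)\sum_{T\in\indfam}x_T=r(S)$, using $\sum_{T\in\indfam}x_T=1$ once more. This completes the argument for the arbitrary $S$, hence for all $S\subseteq E$.

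There is essentially no genuine obstacle here; the proof is a one-line Fubini-type exchange followed by the elementary fact that intersecting an independent set with any set keeps it independent. The only point requiring a sentence of care is deducing $x_T=0$ for dependent $T$ from the two normalization identities, which is what legitimizes restricting the sum to $\indfam$ before bounding $|T\cap S|$ by $r(S)$.
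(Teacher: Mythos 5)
Your proposal is correct and follows essentially the same argument as the paper: both exchange the order of summation to write $\sum_{i\in S}p_i=\sum_{T\in\indfam}x_T\,|S\cap T|$, use the normalization identities to restrict the support of $\bfx$ to $\indfam$, and bound $|S\cap T|\leq r(S)$ via downward-closure (the paper phrases this last step contrapositively, but it is the same fact). No gaps.
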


\begin{proof}
Let~$(E,\indfam)$,~$r$,~$\bfx$, and~$p$ be as in the statement, and let~$S\subseteq E$ be a subset of~$E$. 
For every subset \mbox{$S'\subseteq E$} with \mbox{$|S\cap S'| > r(S)$}, we have that~$S\cap S' \notin \indfam$, thus~$S'\notin \indfam$ and~$x_{S'}=0$. From this observation, the definition of~$p$, and the fact that~$\sum_{T\in \indfam} x_T = 1$, we conclude that
\begin{equation*}
    \sum_{i\in S} p_i = \sum_{i\in S} \sum_{T\in \indfam : i\in T} x_{T} = \sum_{T\in \indfam} x_{T} \cdot |S\cap T| \leq r(S) \sum_{T\in \indfam} x_{T} = r(S).\qedhere
\end{equation*}
\end{proof}

\section{General Independence Systems}
\label{sec:indep-systems}

We start by considering the problem of impartial selection of agents under combinatorial constraints given by arbitrary independence systems.

In their seminal paper, \citet{alon2011sum} introduced a simple impartial mechanism for selecting up to~$k$ agents: 
Assign the agents to one of two sets~$P_1$ or~$P_2$ uniformly at random, and choose the subset of up to~$k$ agents from~$P_2$ with the highest total score from~$P_1$. 
Impartiality is evident: Only agents in~$P_2$ are eligible for selection and their votes do not influence the outcome. 
\citeauthor{alon2011sum} further showed that this mechanism is \mbox{\smash{$\frac{1}{4}$}-optimal}. 
Each agent of the optimal subset belongs to~$P_2$ with probability~$\smash{\frac{1}{2}}$, and each agent voting for it belongs to~$P_1$ with the same probability. 
Consequently, the expected observed score of the agents in the optimal subset is~$\frac{1}{4}$ times their total score, so the mechanism selects, in expectation, a set with at least this score.

The analysis relying solely on the expected observed score of the optimal subset to bound the mechanism's performance makes it highly robust. 
In particular, it maintains both impartiality and \mbox{$\frac{1}{4}$-optimality} when applied to instances with selection constraints given by independence systems. 
\Cref{alg:twopar} presents a description of the mechanism in this general setting; we call it \nalgtwopar~and denote its outcome for an instance~$\Gamma$ by~$\algtwopar(\Gamma)$. 
\begin{algorithm}[h]
	\SetAlgoNoLine
	\KwIn{an instance~$\Gamma=(E,\indfam,W)$.}
	\KwOut{a set~$S\in \indfam$.}
	Assign every agent~$i \in E$ to one of two sets~$P_1$ or~$P_2$, each with probability~$\frac{1}{2}$\;
	{\bfseries return}~$\arg\max \big\{\score_{W,P_1}(T)\mid T\in \indfam \text{ s.t.\ } T\subseteq P_2\big\}$
	\caption{\nalgtwopar~($\algtwopar$)}
	\label{alg:twopar}
\end{algorithm}
\begin{theorem}
\label{thm:gral-indep-lb}
	\nalgtwopar~is an impartial and \mbox{$\frac{1}{4}$-optimal} selection mechanism.
\end{theorem}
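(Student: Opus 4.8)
The plan is to verify the three clauses in the definition of an impartial, $\alpha$-optimal mechanism for \nalgtwopar: that its output is always a feasible (independent) set, that it is impartial, and that it is $\frac14$-optimal. Feasibility of the output is immediate: since $\indfam$ is non-empty and downward closed, $\emptyset \in \indfam$ and $\emptyset \subseteq P_2$ for every partition, so the family $\{T \in \indfam \mid T \subseteq P_2\}$ over which the inner $\arg\max$ ranges is always non-empty and finite; together with the fixed tie-breaking order $\prec_{\score_{W,P_1}}$ from \Cref{sec:prelims}, this makes $\algtwopar(\Gamma)$ a well-defined element of $\indfam$, so \nalgtwopar{} is a selection mechanism.

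For impartiality I would condition on the realized partition $(P_1,P_2)$ and fix an agent $i \in E$. If $i \in P_1$, then no feasible candidate $T \subseteq P_2$ contains $i$, so the conditional probability that $i$ is selected equals $0$. If $i \in P_2$, then the objective $\score_{W,P_1}(\cdot)$ being maximized — and hence the maximizer — depends on $W$ only through the rows indexed by $P_1$, and row $i$ is not among them. In both cases the conditional selection probability of $i$ is a function of $W_{-i}$ alone, and since the partition is drawn independently of $W$, averaging over it yields $\MEC_i(\Gamma) = \MEC_i(\Gamma')$ whenever $W_{-i} = W'_{-i}$.

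For $\frac14$-optimality I would fix an instance $\Gamma = (E,\indfam,W)$ with $\score_W(\OPT(\Gamma))>0$, write $S^* \coloneqq \OPT(\Gamma)$, and again condition on the partition. The set $S^* \cap P_2$ is independent by downward closure and is contained in $P_2$, hence it is a valid candidate for the inner maximization; using non-negativity of the scores,
\[
  \score_W(\algtwopar(\Gamma)) \;\geq\; \score_{W,P_1}(\algtwopar(\Gamma)) \;\geq\; \score_{W,P_1}(S^* \cap P_2).
\]
Taking expectations over the partition and expanding, $\EE[\score_W(\algtwopar(\Gamma))] \geq \sum_{j \in S^*}\sum_{i \in E} w_{ij}\,\PP[i \in P_1,\ j \in P_2]$; the diagonal terms vanish because $w_{ii}=0$, and for $i \neq j$ the two placement events are independent with probability $\frac12$ each, so the joint probability is $\frac14$. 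Hence $\EE[\score_W(\algtwopar(\Gamma))] \geq \frac14\,\score_W(S^*) = \frac14\,\score_W(\OPT(\Gamma))$, which is the claimed guarantee.

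I do not anticipate a serious obstacle; the argument is short and essentially the one sketched before the statement. The two points that merit attention are that the inner optimization is over the \emph{observed} score $\score_{W,P_1}$ rather than the full score $\score_W$ — which is precisely why one first passes to $\score_{W,P_1}(\algtwopar(\Gamma))$ and thereby loses a factor — and the need to condition on the partition before invoking impartiality, so that it is genuinely the case that an agent landing in $P_2$ cannot affect the choice through its own scores. The reason the proof transfers verbatim from $k$-uniform matroids to arbitrary independence systems is that it uses no property of $\indfam$ beyond downward closure.
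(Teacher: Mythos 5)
Your proposal is correct and follows essentially the same route as the paper's proof: well-definedness from downward closure, impartiality by conditioning on the realized partition and noting the objective depends only on rows indexed by $P_1$, and $\frac{1}{4}$-optimality by lower-bounding the selected set's score via the candidate $\OPT(\Gamma)\cap P_2$ and the independent $\frac{1}{2}\cdot\frac{1}{2}$ placement probabilities. Your explicit use of the inequality $\score_W(\algtwopar(\Gamma))\geq\score_{W,P_1}(\algtwopar(\Gamma))$ is in fact slightly cleaner than the paper's corresponding step, but the argument is the same.
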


\begin{proof}
The fact that \nalgtwopar~is a selection mechanism follows directly from its definition: For any instance~$(E,\indfam,W)$ and any realization of its internal randomness, it returns a set~$S\in \indfam$.

We now show impartiality. We let~$(E,\indfam)$ be an independence system,~$i \in E$ an agent, and~$W,W'$ score matrices such that~$W_{-i} = W'_{-i}$. We write~$\Gamma\coloneqq(E,\indfam,W)$ and \mbox{$\Gamma'\coloneqq(E,\indfam,W')$} for compactness. For a fixed partition~$(P_1,P_2)$ of the set~$E$ and a score matrix~$Z$, we let~$\algtwopar^{(P_1,P_2)}(E,\indfam,Z)$ denote the outcome (set in~$\indfam$) of \nalgtwopar~when~$(P_1,P_2)$ is the realized partition. We observe that, for any partition~$(P_1,P_2)$ of~$E$,
\begin{align*}
	i \in \algtwopar^{(P_1,P_2)}(\Gamma) & \quad \Longleftrightarrow \quad i \in \arg\max \big\{\score_{W,P_1}(T)\mid T\in \indfam \text{ s.t.\ } T\subseteq P_2\big\} \\
	& \quad \Longleftrightarrow \quad i \in \arg\max \big\{\score_{W',P_1}(T)\mid T\in \indfam \text{ s.t.\ } T\subseteq P_2\big\}\\
	& \quad \Longleftrightarrow \quad i \in \algtwopar^{(P_1,P_2)}(\Gamma'),
\end{align*}
where the first and last equivalences are straightforward from the definition of \nalgtwopar~and the second one uses that $\score_{W,P_1}(j)=\score_{W',P_1}(j)$ for every~$j\in E$ whenever~$i\in P_2$ due to~$W_{-i} = W'_{-i}$.
We obtain that, for any fixed partition of the agents into two sets, agent~$i$ is selected for the instance~$\Gamma$ if and only if it is selected for the instance~$\Gamma'$. 
We conclude by observing that the selection probability of agent~$i$ is simply the ratio between the number of~$2$-partitions of~$E$ for which it gets selected and the total number of~$2$-partitions: 
Denoting by~$\mathcal{P}^2_E$ all~$2^{|E|}$ partitions~$(P_1,P_2)$ of the set~$E$, we have that
\begin{align*}
	\algtwopar_i(\Gamma) & = \frac{1}{2^{|E|}} \sum_{(P_1,P_2) \in \mathcal{P}^2_E} \big\llbracket i \in \algtwopar^{(P_1,P_2)}(\Gamma)\big\rrbracket  =  \frac{1}{2^{|E|}} \sum_{(P_1,P_2) \in \mathcal{P}^2_E} \big\llbracket i \in \algtwopar^{(P_1,P_2)}(\Gamma')\big\rrbracket = \algtwopar_i(\Gamma').
\end{align*}

We finally prove that \nalgtwopar\ is \mbox{$\frac{1}{4}$-optimal}. For an instance~$\Gamma=(E,\indfam,W)$, we claim that
\begingroup
\allowdisplaybreaks
\begin{align*}
	\EE_{S \sim \algtwopar(\Gamma)}[\score_W(S)] & = \frac{1}{2^{|E|}} \sum_{(P_1,P_2)\in \mathcal{P}^2_E} \score_W\big(\algtwopar^{(P_1,P_2)}(\Gamma)\big) \\
	& = \frac{1}{2^{|E|}} \sum_{(P_1,P_2)\in \mathcal{P}^2_E} \max \big\{\score_{W,P_1}(T)\mid T\in \indfam \text{ s.t.\ } T\subseteq P_2 \big\} \\
        & \geq \frac{1}{2^{|E|}} \sum_{(P_1,P_2)\in \mathcal{P}^2_E} \score_{W,P_1}(P_2 \cap \OPT(\Gamma)) \\
	& = \sum_{i\in \OPT(\Gamma)} \EE[\score_{W,P_1}(i) ~|~ i\in P_2 ]~\PP[i \in P_2]\\
	& = \sum_{i\in \OPT(\Gamma)} \frac{1}{2} \cdot \frac{1}{2}\score_W(i) \\
	& = \frac{1}{4}\score_W(\OPT(\Gamma)).
\end{align*}
\endgroup
Indeed, the first two equalities follow directly from the definition of \nalgtwopar, the inequality from the fact that~$\OPT(\Gamma)\in \indfam$, and the remaining equalities from direct computations and the fact that every agent~$i \in E$ is assigned to~$P_1$ or~$P_2$ with probability~$\frac{1}{2}$ independently of other agents. 
This concludes the proof.
\end{proof}

We now study upper bounds on the approximation ratio achievable by impartial mechanisms. In terms of general upper bounds, it is known since the work of \citet{alon2011sum} that no impartial mechanism can provide an approximation ratio better than~$\frac{1}{2}$ in all instances. This holds true already in the setting where all scores are~$0$ or~$1$ and the score matrix is \mbox{$1$-sparse}. Despite the broader range of instances considered when allowing more votes with arbitrary scores, this bound remains best known. We state its validity in the following proposition, which will in fact follow as a special case of a later result.

\begin{proposition}[\citet{alon2011sum}]\label{prop:ub-gral}
    Let~$f$ be a selection mechanism that is impartial and \mbox{$\alpha$-optimal} on instances with \mbox{$1$-sparse} binary score matrices. Then~$\alpha \leq \frac{1}{2}$.
\end{proposition}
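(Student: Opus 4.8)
The plan is to reduce to the classical $1$-selection setting and reuse the two-agent lower-bound construction of \citet{alon2011sum}. Concretely, I would take the independence system to be the $1$-uniform matroid on a two-element ground set $E=\{1,2\}$, that is, $\indfam=\{\emptyset,\{1\},\{2\}\}$. Since in all instances below each agent nominates at most one other agent, every score matrix used is automatically $1$-sparse and binary, so it suffices to derive the bound on this family of instances.

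I would then consider three instances $\Gamma^{(t)}=(E,\indfam,W^{(t)})$ for $t\in\{1,2,3\}$, where $W^{(1)}$ has its only nonzero entry $w_{12}=1$ (agent $1$ nominates agent $2$), $W^{(2)}$ has its only nonzero entry $w_{21}=1$, and $W^{(3)}$ has exactly the two nonzero entries $w_{12}=w_{21}=1$. In each of the three instances the best independent set has score $1$: one picks the unique agent receiving a nomination in $\Gamma^{(1)}$ and $\Gamma^{(2)}$, and either agent in $\Gamma^{(3)}$. In particular $\score_W(\OPT(\Gamma^{(t)}))>0$ for all $t$, so all three lie in the domain of the infimum defining $\alpha$-optimality.

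The core of the argument is to propagate selection probabilities across these instances via impartiality. The matrices $W^{(1)}$ and $W^{(3)}$ agree on row $1$ and differ only on row $2$, so $W^{(1)}_{-2}=W^{(3)}_{-2}$, and impartiality yields $f_2(\Gamma^{(1)})=f_2(\Gamma^{(3)})=:p$; symmetrically $W^{(2)}_{-1}=W^{(3)}_{-1}$ gives $f_1(\Gamma^{(2)})=f_1(\Gamma^{(3)})=:q$. On $\Gamma^{(3)}$ the mechanism returns a single independent set, which has size at most $1$, so the events ``agent $1$ is selected'' and ``agent $2$ is selected'' are disjoint; hence $p+q\le 1$. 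On $\Gamma^{(1)}$, only agent $2$ receives positive score, so $\EE[\score_{W^{(1)}}(f(\Gamma^{(1)}))]=p$, and $\alpha$-optimality against an optimum of value $1$ forces $p\ge\alpha$; the identical computation on $\Gamma^{(2)}$ gives $q\ge\alpha$. Combining, $2\alpha\le p+q\le 1$, so $\alpha\le\tfrac12$.

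There is no substantive obstacle in this proof; the only points demanding care are (i) checking that the pairs of instances we compare differ in exactly one agent's row, so that the impartiality axiom applies verbatim, and (ii) confirming that each instance has a strictly positive optimum so that it is not discarded from the infimum. I would also note in passing that this is exactly the degenerate case $g=2$ of the girth-parameterized upper bound established later in the paper, which is why the statement is attributed to \citet{alon2011sum} and recorded here as a special case rather than with a separate argument.
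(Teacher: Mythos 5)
Your proof is correct and is essentially the paper's own argument: both rest on the two-agent $1$-uniform matroid instance with mutual unit nominations and its single-nomination variants obtained by deleting one agent's row, linked by impartiality and by the feasibility fact that the two selection events are mutually exclusive so the probabilities sum to at most~$1$. The only cosmetic difference is the direction of propagation—you push the lower bounds $p,q\ge\alpha$ from the single-vote instances into the mutual-vote instance, while the paper picks the agent selected with probability at most $\tfrac{1}{2}$ in the mutual-vote instance and deletes its vote—and your closing observation that this is the degenerate case $g=2$ of \Cref{thm:gral-indep-ub-plu} is accurate.
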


This result follows immediately from the~$1$-selection setting, equivalent to an independence system in which independent sets are all singletons. If~$|E|=2$ and both agents assign a score of~$1$ to each other, any mechanism must select one of them with a probability of at most~$\smash{\frac{1}{2}}$. If this agent removes its vote, due to impartiality it is still selected with a probability of at most~$\smash{\frac{1}{2}}$. However, this agent is now the only one with a positive total score, so the bound follows.

We now address a more challenging question aimed at better understanding the dependence of the performance of an impartial mechanism on the independence system. We seek to determine, given an independence system, how well an impartial mechanism can perform under the combinatorial constraints imposed by it, in the worst-case among all possible score matrices.

In the~$k$-selection setting, the independence system is entirely characterized by~$k$, and upper bounds parameterized by~$k$ have been extensively studied. In the case of general independence systems, the parameter that plays a similar role to~$k$ in providing upper bounds for a given structure is the girth of the system, i.e., the size of its smallest dependent set. Note that, for~$k$-selection, dependent sets are all sets of size~$k+1$ or larger, thus the girth is equal to~$k+1$.

We provide two upper bounds dependent on the girth of the underlying independence system: one applicable when the score matrix is \mbox{$1$-sparse} and one when it is arbitrary. 
These results even hold when the score matrix is binary, enabling direct comparison with bounds for~$k$-selection.

\begin{theorem}
\label{thm:gral-indep-ub-plu}
	Let~$(E,\indfam)$ be an independence system with girth~$g$. 
    Let~$\MEC$ be a selection mechanism that is impartial and \mbox{$\alpha$-optimal} on the set of instances~$(E,\indfam,W)$ such that~$W$ is \mbox{$1$-sparse} and binary.
    Then~$\alpha \leq 1-\frac{1}{g(g-1)}$.
\end{theorem}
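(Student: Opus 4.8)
The plan is to run the standard two-instance argument for impartiality lower bounds; we may assume $g\ge 2$, as otherwise the claimed bound is vacuous. Fix a minimal dependent set $D=\{v_1,\dots,v_g\}\subseteq E$. By minimality, every subset of $E$ of size at most $g-1$ is independent, so each $(g-1)$-element subset of $D$ is independent and the rank function satisfies $r(D)=g-1$. First I would consider the ``cyclic'' instance $\Gamma_0=(E,\indfam,W_0)$, where $W_0$ is the $1$-sparse binary score matrix in which every agent $v_i\in D$ casts its unique vote for $v_{i+1}$ (indices modulo $g$) and all other agents cast no vote. Then $\score_{W_0}(v_i)=1$ for every $i$, and since only elements of $D$ receive votes while any independent set meets $D$ in at most $g-1$ elements, $\score_{W_0}(\OPT(\Gamma_0))=g-1$, attained, e.g., at $D\setminus\{v_1\}$. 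Applying \Cref{lem:birkhoff-gral} with $S=D$ to the output distribution $\MEC(\Gamma_0)$ gives $\sum_{i\in[g]}\MEC_{v_i}(\Gamma_0)\le r(D)=g-1$, so by averaging there is an index $j\in[g]$ with $\MEC_{v_j}(\Gamma_0)\le\frac{g-1}{g}$.

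Next I would build the instance that exposes poor performance: let $\Gamma'=(E,\indfam,W')$ be obtained from $\Gamma_0$ by setting the unique vote of $v_j$ to zero (equivalently, deleting the arc $v_j\to v_{j+1}$). Then $W'$ is still $1$-sparse and binary, and $W'_{-v_j}=(W_0)_{-v_j}$, so impartiality gives $\MEC_{v_j}(\Gamma')=\MEC_{v_j}(\Gamma_0)\le\frac{g-1}{g}$. Under $W'$ the agent $v_{j+1}$ is the unique element of $D$ with score $0$, while every other element of $D$ still has score $1$; hence $\score_{W'}(\OPT(\Gamma'))=g-1$, attained at $D\setminus\{v_{j+1}\}$ (independent, as it has $g-1$ elements), whereas any independent set $S$ with $v_j\notin S$ satisfies $S\cap D\subseteq D\setminus\{v_j\}$ and thus $\score_{W'}(S)\le g-2$. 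Conditioning on whether $v_j$ belongs to the selected set, and using that a selected set containing $v_j$ has score at most $\score_{W'}(\OPT(\Gamma'))=g-1$, I obtain
\begin{equation*}
    \EE\bigl[\score_{W'}(\MEC(\Gamma'))\bigr]\le \MEC_{v_j}(\Gamma')\,(g-1)+\bigl(1-\MEC_{v_j}(\Gamma')\bigr)(g-2)=(g-2)+\MEC_{v_j}(\Gamma').
\end{equation*}

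Plugging in $\MEC_{v_j}(\Gamma')\le\frac{g-1}{g}$ and dividing by $\score_{W'}(\OPT(\Gamma'))=g-1>0$ then yields
\begin{equation*}
    \alpha\le\frac{(g-2)+\frac{g-1}{g}}{g-1}=\frac{g^2-g-1}{g(g-1)}=1-\frac{1}{g(g-1)},
\end{equation*}
which is the claimed bound. I expect no genuine conceptual obstacle here; the points requiring care are all in the bookkeeping — verifying that $\score(\OPT)$ equals $g-1$ for both $\Gamma_0$ and $\Gamma'$ (this is exactly where minimality of $D$, hence $r(D)=g-1$ and independence of all its $(g-1)$-subsets, is used), and checking that the conditioning bound ``if $v_j$ is unselected then the score is at most $g-2$'' relies only on $v_{j+1}$ being the unique zero-score element of $D$ together with $v_j\notin S$. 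One could equivalently first prove the statement for $E=D$, i.e., for $(g-1)$-selection on $g$ agents, and then reduce the general case to it by zeroing out all scores involving agents outside $D$.
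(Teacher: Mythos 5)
Your proof is correct and follows essentially the same route as the paper's: a cyclic $1$-sparse binary instance on a minimum-size dependent set, \Cref{lem:birkhoff-gral} plus averaging to find an agent selected with probability at most $\frac{g-1}{g}$, impartiality after deleting that agent's vote, and the same final computation. The only (immaterial) difference is that you bound the expected score by conditioning on whether $v_j$ is selected, while the paper sums per-agent selection probabilities weighted by scores; both yield $g-2+\frac{g-1}{g}$.
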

\begin{proof}
    Let~$(E,\indfam)$,~$\MEC$, and~$\alpha$ be as in the statement. 
    Let~$S\subseteq E$ be such that~$S\not\in \indfam$ and~\mbox{$|S| = g$}, whose existence is guaranteed since the system has girth~$g$. 
    We write \mbox{$S=\{1,\ldots, g \}$} for simplicity and we define a score matrix~$W\in \{0,1\}^{E\times E}$ by letting~$w_{i,i+1}=1$ for each~$i\in [g]$ (where $g+1\coloneqq 1$) and~$w_{ij}=0$ for all other pairs of entries~$(i,j)$. 
    We now consider the instance $\Gamma\coloneqq(E,\indfam,W)$. Since~$S\not\in \indfam$, \Cref{lem:birkhoff-gral} implies that
	\[
		\sum_{i\in S} \MEC_j(\Gamma) \leq g-1.
	\]
	By an averaging argument, there exists~$i'\in [g]$ with~$\MEC_{i'}(\Gamma) \leq \frac{g-1}{g}$. 
    We define another score matrix~$W'$, where $w'_{i',i'+1}=0$ and $w'_{ij}=w_{ij}$ for all other pairs of entries~$(i,j)$, 
    and we consider the instance $\Gamma'\coloneqq(V,\indfam,W')$.
    Since $W_{-i'} = W'_{-i'}$, impartiality yields \mbox{$\MEC_{i'}(\Gamma') = \MEC_{i'}(\Gamma) \leq \frac{g-1}{g}$}, and thus
	\[
		\EE_{T\sim \MEC(\Gamma')}[\score_{W'}(T)] = \sum_{i \in S}\MEC_i(\Gamma')\score_{W'}(i) 
        \leq g-2+\frac{g-1}{g} = \bigg(1 - \frac{1}{g(g-1)} \bigg) (g-1).
	\]
	Since~$|S|=g$, the definition of the girth implies that for every~$S'\subsetneq S$ we have that~$S'\in \indfam$, so~$\score_{W'}(\OPT(\Gamma'))=g-1$. We conclude that~$\alpha \leq 1 - \frac{1}{g(g-1)}$.
\end{proof}

We illustrate \Cref{thm:gral-indep-ub-plu} in \Cref{fig:gral-indep-ub-a}. 
It generalizes the best-known upper bound on the approximation ratio of impartial~$k$-selection mechanisms with \mbox{$1$-sparse} score matrices, due to \citet{alon2011sum}. Indeed, this value is $\smash{
	1 - \frac{1}{k(k+1)} = 1-\frac{1}{g(g-1)}}$,
where we have replaced~$g=k+1$ in line with the previous discussion.
    \begin{figure}[tb]
    \begin{subfigure}{0.48\textwidth}
    \centering
    \begin{tikzpicture}[scale=1.2]
    \Vertex[x=2.5, y=3.0, Math, shape=circle, color=black, size=.05, label=\MEC_j(\Gamma) \leq \frac{g-1}{g}, fontscale=1.2, position=right, distance=-.08cm]{A}
    \Vertex[x=2.029, y=3.977, Math, shape=circle, color=black, size=.05]{B} 
    \Vertex[x=0.972, y=4.219, Math, shape=circle, color=black, size=.05]{C}
    \Vertex[x=0.124, y=3.542, Math, shape=circle, color=black, size=.05]{D}
    \Vertex[x=0.124, y=2.458, Math, shape=circle, color=black, size=.05]{E}
    \Vertex[x=0.972, y=1.781, Math, shape=circle, color=black, size=.05]{F}
    \Vertex[x=2.029, y=2.023, Math, shape=circle, color=black, size=.05]{G}
    \Edge[Direct, color=color2, lw=1pt, bend=-15](A)(B)
    \Edge[Direct, color=black, lw=1pt, bend=-15](B)(C)
    \Edge[Direct, color=black, lw=1pt, bend=-15](C)(D)
    \Edge[Direct, color=black, lw=1pt, bend=-15](D)(E)
    \Edge[Direct, color=black, lw=1pt, bend=-15](E)(F)
    \Edge[Direct, color=black, lw=1pt, bend=-15](F)(G)
    \Edge[Direct, color=black, lw=1pt, bend=-15](G)(A)
    \end{tikzpicture}
    \caption{Agent~$j$ is selected with probability at most~$\frac{g-1}{g}$ when it casts the orange vote, which still holds upon deleting it due to impartiality.}
    \label{fig:gral-indep-ub-a}
    \end{subfigure}
    \hfill 
    \begin{subfigure}{0.48\textwidth}
    \centering
    \begin{tikzpicture}[scale=1.2]
    \Vertex[x=2.5, y=3.0, Math, shape=circle, color=black, size=.05, label=\MEC_j(\Gamma) \geq \frac{g-1}{g}\alpha, fontscale=1.2, position=right, distance=-.08cm]{A}
    \Vertex[x=2.029, y=3.977, Math, shape=circle, color=black, size=.05]{B} 
    \Vertex[x=0.972, y=4.219, Math, shape=circle, color=black, size=.05]{C}
    \Vertex[x=0.124, y=3.542, Math, shape=circle, color=black, size=.05]{D}
    \Vertex[x=0.124, y=2.458, Math, shape=circle, color=black, size=.05]{E}
    \Vertex[x=0.972, y=1.781, Math, shape=circle, color=black, size=.05]{F}
    \Vertex[x=2.029, y=2.023, Math, shape=circle, color=black, size=.05]{G}
    \Vertex[x=-1, y=3, Math, shape=circle, color=black, size=.05, label=i, fontscale=1.2, position=left, distance=-.08cm]{Z}
    \Edge[Direct, color=black, lw=1pt](Z)(A)
    \Edge[Direct, color=black, lw=1pt, bend=-10](Z)(B)
    \Edge[Direct, color=black, lw=1pt, bend=15](Z)(C)
    \Edge[Direct, color=black, lw=1pt](Z)(D)
    \Edge[Direct, color=black, lw=1pt](Z)(E)
    \Edge[Direct, color=black, lw=1pt, bend=-15](Z)(F)
    \Edge[Direct, color=black, lw=1pt, bend=10](Z)(G)
    \Edge[Direct, color=color2, lw=1pt](A)(B)
    \Edge[Direct, color=color2, lw=1pt](A)(C)
    \Edge[Direct, color=color2, lw=1pt, bend=-10](A)(D)
    \Edge[Direct, color=color2, lw=1pt, bend=10](A)(E)
    \Edge[Direct, color=color2, lw=1pt](A)(F)
    \Edge[Direct, color=color2, lw=1pt](A)(G)
    \end{tikzpicture}
    \caption{Agent~$j$ is selected with probability at least~$\frac{g-1}{g}\alpha$ when it does not vote, which still holds upon adding the orange votes due to impartiality.}
    \label{fig:gral-indep-ub-b}
    \end{subfigure}
    \caption{Illustration of the upper bounds stated in \Cref{thm:gral-indep-ub-plu,thm:gral-indep-ub-app} for~$g=7$, where votes are represented by arcs. The depicted agents---other than~$i$ in part \subref{fig:gral-indep-ub-b}---form a dependent set, but removing any of them renders the set independent.}
    \label{fig:gral-indep-ub}
    \end{figure}
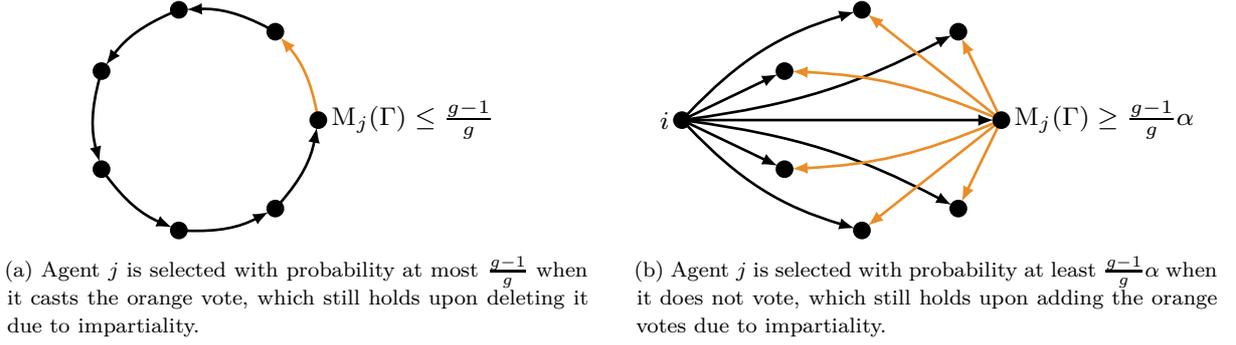

We present a second result concerning upper bounds on the approximation ratio provided by impartial mechanisms on instances with constraints given by independence systems. 
We now also address score matrices that are not \mbox{$1$-sparse}.

\begin{theorem}
\label{thm:gral-indep-ub-app}
	Let~$(E,\indfam)$ be an independence system with girth~$g\leq |E|-1$. Let~$\MEC$ be a selection mechanism that is impartial and \mbox{$\alpha$-optimal} on the set of instances~$(E,\indfam,W)$ such that~$W$ is binary. Then~$\alpha \leq 1-\frac{1}{2g+1}$.
\end{theorem}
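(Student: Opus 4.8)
The plan is to reuse the two‑move template of \Cref{thm:gral-indep-ub-plu} — pin down an agent that every impartial $\alpha$‑optimal mechanism must select with non‑negligible probability, then let that agent retroactively turn itself into a poor choice — now exploiting that agents may cast several scores. First I would fix a minimal dependent set $C=\{1,\dots,g\}\in 2^E\setminus\indfam$ (it exists because the girth is $g$) and, since $g\le|E|-1$, one further agent $i\in E\setminus C$. I would record the one structural fact I need, namely $r(C)=g-1$: it is $<g$ because $C\notin\indfam$, and $\ge g-1$ because every $(g-1)$‑element subset of $C$ is independent (the girth being $g$). This is exactly the input that \Cref{lem:birkhoff-gral} will consume; I assume $g\ge 2$ throughout.

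For the bootstrapping step I would take the instance $\Gamma=(E,\indfam,W)$ in which $i$ votes for every member of $C$ and all other scores vanish. Every agent of $C$ then has score $1$ and everyone else score $0$, so $\score_W(\OPT(\Gamma))=g-1>0$ since $C\setminus\{e\}\in\indfam$. By $\alpha$‑optimality, $\sum_{h\in C}\MEC_h(\Gamma)=\EE_{T\sim\MEC(\Gamma)}[\score_W(T)]\ge\alpha(g-1)$, and averaging over the $g$ agents of $C$ yields some $j\in C$ with $\MEC_j(\Gamma)\ge\alpha(g-1)/g$. The sole purpose of the auxiliary agent $i$ is to keep the optimum strictly positive so that $\alpha$‑optimality has bite; this is the step that uses $g\le|E|-1$.

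For the penalizing step I would modify only the row of $j$: let $W'$ additionally have $j$ vote for every member of $C\setminus\{j\}$, and set $\Gamma'=(E,\indfam,W')$. Since $W_{-j}=W'_{-j}$, impartiality gives $\MEC_j(\Gamma')=\MEC_j(\Gamma)\ge\alpha(g-1)/g$. In $\Gamma'$ the members of $C\setminus\{j\}$ have score $2$, agent $j$ has score $1$, and all others score $0$, so $\score_{W'}(\OPT(\Gamma'))\ge 2(g-1)$ via $C\setminus\{j\}\in\indfam$; meanwhile \Cref{lem:birkhoff-gral} applied to $C$ gives $\sum_{h\in C}\MEC_h(\Gamma')\le g-1$, hence
\[
\EE_{T\sim\MEC(\Gamma')}[\score_{W'}(T)]=2\!\!\sum_{h\in C\setminus\{j\}}\!\!\MEC_h(\Gamma')+\MEC_j(\Gamma')\le 2(g-1)-\MEC_j(\Gamma').
\]
Chaining $2\alpha(g-1)\le\EE_{T\sim\MEC(\Gamma')}[\score_{W'}(T)]\le 2(g-1)-\MEC_j(\Gamma')\le 2(g-1)-\alpha(g-1)/g$ and dividing by $g-1$ leaves $\alpha(2g+1)/g\le 2$, i.e. $\alpha\le 2g/(2g+1)=1-\tfrac{1}{2g+1}$.

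The step I expect to require the most care is pinning down $\score_{W'}(\OPT(\Gamma'))$ in the penalizing step: because the independence system is unspecified apart from its girth, I must be sure that no feasible set can beat the score $2(g-1)$ of $C\setminus\{j\}$. The reason it cannot is that every agent outside $C$ has score $0$ while any independent set contains at most $r(C)=g-1$ agents of $C$, so the best option is the $g-1$ doubled‑score agents forming $C\setminus\{j\}$ — and in fact for the final chain I only need the trivial lower bound $\score_{W'}(\OPT(\Gamma'))\ge 2(g-1)$. The remaining ingredients — the averaging argument, the impartiality substitution on the row of $j$, and the arithmetic — are routine.
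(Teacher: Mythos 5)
Your proposal is correct and follows essentially the same route as the paper's proof: an outside agent $i$ (whose existence uses $g\le|E|-1$) votes for all of a minimal dependent set to make $\alpha$-optimality bite, averaging pins down a $j$ with $\MEC_j(\Gamma)\ge\frac{g-1}{g}\alpha$, impartiality preserves this after $j$ doubles everyone else's score, and \Cref{lem:birkhoff-gral} plus the same arithmetic yields $\alpha\le 1-\frac{1}{2g+1}$. The only differences are cosmetic (you argue $r(C)=g-1$ explicitly and note that the lower bound $\score_{W'}(\OPT(\Gamma'))\ge 2(g-1)$ suffices, and you flag the implicit assumption $g\ge 2$ that the paper also relies on).
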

\begin{proof}
Let~$(E,\indfam)$,~$\MEC$, and~$\alpha$ be as in the statement. Let~$S\subseteq V$ be such that~$S\not\in \indfam$ and~$|S| = g$, whose existence is guaranteed from the fact that the system has girth~$g$, and let~$i \in E\setminus S$, whose existence is guaranteed from the fact that~$g\leq |E|-1$. 
	We define a score matrix~$W\in \{0,1\}^{E\times E}$ by letting~$w_{ij}=1$ for all~$j\in S$ and~$w_{ij}=0$ for all other pairs of entries~$(j,k)$. 
    We consider an instance~$\Gamma\coloneqq(E,\indfam,W)$.
    The fact that~$\MEC$ is \mbox{$\alpha$-optimal} on~$\Gamma$ implies that
	\[
		\EE_{T\sim \MEC(\Gamma)}[\score_W(T)] = \sum_{j\in S}\MEC_j(\Gamma) \geq \alpha (g-1) = \alpha \score_W(\OPT(\Gamma)),
	\]
	where we used that for every~$S'\subsetneq S$ we have that~$S'\in \indfam$, so~$\score_W(\OPT(\Gamma))=g-1$. By an averaging argument,~$\MEC_{j}(\Gamma) \geq \frac{g-1}{g}\alpha$ for some~$j\in S$; we fix such agent~$j$. We define another score matrix~$W'$, where~$w'_{jk}=0$ for every~$k\in S-j$ and~$w'_{k\ell}=w_{k\ell}$ for all other pairs of entries~$(k,\ell)$,
    and we consider the instance~$\Gamma'\coloneqq(E,\indfam,W')$. We observe that~$\score_{W'}(j) = 1$,~$\score_{W'}(k)=2$ for every~$k\in S-j$, and~$\score_{W'}(\OPT(\Gamma'))=2(g-1)$. As impartiality implies that~$\MEC_{j}(\Gamma) = \MEC_{j}(\Gamma')$, we obtain that 
	\[
		\EE_{T\sim \MEC(\Gamma')}[\score_{W'}(T)] = \sum_{k \in S}\MEC_k(\Gamma')\score_{W'}(k) \leq 2(g-1-\MEC_{j}(\Gamma')) + \MEC_{j}(\Gamma') \leq 2g-2-\frac{g-1}{g}\alpha,
	\]
	where the first inequality follows from the inequality~$\sum_{k\in S} \MEC_k(\Gamma') \leq g-1$ due to \Cref{lem:birkhoff-gral} and the fact that~$S\not\in \indfam$. The equality~$\score_{W'}(\OPT(\Gamma'))=2(g-1)$ and the \mbox{$\alpha$-optimality} of~$\MEC$ yield
	\[
		2(g-1)\alpha \leq 2g-2-\frac{g-1}{g}\alpha \quad \Longrightarrow \quad \alpha \leq 1-\frac{1}{2g+1}.\qedhere
	\]
\end{proof}

\Cref{fig:gral-indep-ub-b} illustrates this upper bound.
In this case, our result constitutes a slightly weaker version of the best-known upper bound on the approximation ratio achievable by \mbox{$k$-selection} mechanisms on~$\calG$, due to \citet{bjelde2017impartial}. This value is given by
\[
1 - \frac{1}{k+2} = 1-\frac{1}{g+1} < 1-\frac{1}{2g+1},
\]
where we have once again substituted~$g=k+1$. 
Despite the difference, attributable to the lack of symmetry in the case of general independence systems, we remark that the asymptotic behavior of both bounds as the girth of the system grows is the same.

\section{Knapsack Constraints}
\label{sec:knapsack}
In this section, we consider independence systems governed by a knapsack constraint, i.e., independence systems of the form~$(E,\indfam)$ with
$\smash{\indfam = \big\{ S\subseteq E \;\big\vert\; \sum_{i\in S} s_i \leq C \big\}}$,
for a given vector of sizes~$\bfs \in \RR^E_{>0}$ and a capacity~$C>0$. 
We assume that~$s_i\leq C$ for all~$i\in E$ throughout this section, as agents not meeting this condition do not belong to any independent set. 
Given~$T\subseteq E$, we use~$s(T) \coloneqq \sum_{i\in T} s_i$ to denote the total size of the agents in~$T$. 
We call instances with independence systems defined this way as \emph{knapsack instances}, and denote them by $(E,\bfs,C,W)$ instead of $(E,\indfam,W)$, since~$\indfam$ is fully defined by~$\bfs$ and~$C$.

The results in this section are based on the classic greedy algorithm for the knapsack problem, so we start with some basic concepts and results in this regard. 
Given an instance~$\Gamma=(E,\bfs,C,W)$ and an agent~$i\in E$, we let~$\rho_{W,s}(i) \coloneqq \frac{\score_W(i)}{s_i}$ denote the \emph{(score) density} of~$i$. 
We sort the agents in~$E$ by density, breaking ties in favor of the total order in $E$. 
Thus, we assume that \mbox{$E=\{i_1(\Gamma), i_2(\Gamma), \ldots, i_m(\Gamma)\}$} with
\[
    \rho_{W,s}(i_1(\Gamma)) \succ \rho_{W,s}(i_2(\Gamma)) \succ \dots \succ \rho_{W,s}(i_m(\Gamma)).
\]
This order is named the \emph{greedy order} of~$E$. 
Let \mbox{$k(\Gamma) \coloneqq \max\big\{\ell\in [m] \mid \sum_{j=1}^{\ell} s_{i_j(\Gamma)} \leq C\big\}$} be the largest index of an agent in the greedy order that fully fits in the knapsack with all preceding agents. 
The \emph{knapsack greedy set}, \mbox{$\algkgreedy(\Gamma) \coloneqq \{i_1(\Gamma), i_2(\Gamma), \ldots, i_{k(\Gamma)}(\Gamma)\}$}, contains the~$k(\Gamma)$ densest agents in the greedy order. 
The \emph{extended knapsack greedy set},
\[
	\algkgreedyex(\Gamma) \coloneqq \begin{cases}
		\algkgreedy(\Gamma) + i_{k(\Gamma)+1}(\Gamma) & \text{if } k(\Gamma) < m,\\
		\algkgreedy(\Gamma) & \text{otherwise,}
		\end{cases}
\]
contains the knapsack greedy set and the next densest agent in the greedy order, if such an agent exists. Observe that, for every instance~$\Gamma=(E,\bfs,C,W)$, we have~$s(\algkgreedy(\Gamma)) \leq C$. For instances with~$s(E)>C$, we also have ~$s(\algkgreedyex(\Gamma)) > C$. When the instance~$\Gamma$ is clear from the context, we omit this argument from the previous definitions.

We now outline certain properties of the sets formed according to the greedy order, which will be used in subsequent proofs. For an instance~$\Gamma=(E,\bfs,C,W)$ and an agent~$i \in E$, we let
\[
	R_i(\Gamma) \coloneqq  \{j \in E\mid \rho_{W,s}(j) \succ \rho_{W,s}(i)\}
\]
denote the set of agents that precede~$i$ in the greedy order.
\begin{lemma}
\label{lem:kgreedy}
	Let~$\Gamma=(E,\bfs,C,W)$ be a knapsack instance. Then,
	\begin{enumerate}[label=(\roman*)]
		\item for every~$i \in E$, we have~$i \in \algkgreedy(\Gamma)$ if and only if~$\sum_{j\in R_i(\Gamma)} s_j + s_i \leq C$;\label[part]{item:condition-kgreedy}
		\item for every~$i \in E$, we have~$i \in \algkgreedyex(\Gamma)$ if and only if~$\sum_{j\in R_i(\Gamma)} s_j \leq C$;\label[part]{item:condition-kgreedyex}
		\item for every~$\ell\in [m]$, denoting~$C'\coloneqq \sum_{j=1}^{\ell} s_{i_j(\Gamma)}$ we have that\label[part]{item:performance-greedy}
		\[
			\score_W\Bigg(\bigcup_{j=1}^{\ell} \{i_j(\Gamma)\}\Bigg) \geq \min\bigg\{ \frac{C'}{C}, 1 \bigg\} \score_W(\OPT(\Gamma)).
		\]
	\end{enumerate}
\end{lemma}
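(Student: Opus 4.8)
The plan is to read parts \cref{item:condition-kgreedy,item:condition-kgreedyex} directly off the definition of the greedy order via prefix sizes, and to prove part \cref{item:performance-greedy} by the classical exchange argument for the fractional knapsack. For the first two parts, the key observation is that the greedy order on $E$ is a \emph{strict} total order (ties between equal densities are resolved by $\prec_E$), so each agent $i$ sits in a unique position, say $i=i_\ell(\Gamma)$, and $R_i(\Gamma)$ is exactly the strict prefix $\{i_1(\Gamma),\dots,i_{\ell-1}(\Gamma)\}$. Writing $P_t\coloneqq\sum_{u=1}^{t}s_{i_u(\Gamma)}$ for the prefix sizes (with $P_0=0$), we have $\sum_{j\in R_i(\Gamma)}s_j=P_{\ell-1}$ and $\sum_{j\in R_i(\Gamma)}s_j+s_i=P_\ell$; since all sizes are positive, $(P_t)_t$ is strictly increasing, so $\{t\in[m]\mid P_t\le C\}$ is an initial segment $\{1,\dots,k(\Gamma)\}$, which is nonempty because $s_i\le C$ for every $i$. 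Then $i=i_\ell(\Gamma)\in\algkgreedy(\Gamma)$ iff $\ell\le k(\Gamma)$ iff $P_\ell\le C$, which is \cref{item:condition-kgreedy}; and $i\in\algkgreedyex(\Gamma)$ iff $\ell-1\le k(\Gamma)$ iff $P_{\ell-1}\le C$, which is \cref{item:condition-kgreedyex} — in the boundary case $k(\Gamma)=m$ one checks separately that $\algkgreedyex(\Gamma)=E$ and $P_{\ell-1}\le P_m\le C$ both hold for every $\ell$, so the equivalence holds trivially.

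For \cref{item:performance-greedy}, fix $\ell$, set $G\coloneqq\{i_1(\Gamma),\dots,i_\ell(\Gamma)\}$, $C'\coloneqq s(G)$, $\rho^\ast\coloneqq\rho_{W,s}(i_\ell(\Gamma))\ge 0$, and let $T\coloneqq\OPT(\Gamma)$, so $s(T)\le C$. By the greedy order, $\rho_{W,s}(j)\ge\rho^\ast$ for $j\in G$ and $\rho_{W,s}(j)\le\rho^\ast$ for $j\in E\setminus G$ (as real numbers; ties only affect the lexicographic tie-break). Splitting $T$ into $T\cap G$ and $T\setminus G$ and writing $a\coloneqq s(T\cap G)$, $A\coloneqq\score_W(T\cap G)=\sum_{j\in T\cap G}\rho_{W,s}(j)\,s_j$, the density inequalities give
\[
 \score_W(T)\le A+\rho^\ast\bigl(s(T)-a\bigr)\le A+\rho^\ast(C-a) \quad\text{and}\quad \score_W(G)\ge A+\rho^\ast(C'-a),
\]
where I use $s(T)\le C$ and $C-a\ge 0$ (as $a=s(T\cap G)\le s(T)\le C$). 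If $C'\ge C$, these already give $\score_W(G)\ge\score_W(T)$, matching $\min\{C'/C,1\}=1$. If $C'\le C$, combining the two bounds (multiply the first by $C'/C\ge 0$ and subtract from the second) yields $\score_W(G)-(C'/C)\score_W(T)\ge(1-C'/C)(A-\rho^\ast a)=(1-C'/C)\sum_{j\in T\cap G}(\rho_{W,s}(j)-\rho^\ast)\,s_j\ge 0$, because $C'\le C$ and every $j\in T\cap G\subseteq G$ has $\rho_{W,s}(j)\ge\rho^\ast$; this is exactly $\score_W(G)\ge(C'/C)\score_W(\OPT(\Gamma))$. (Equivalently, $\score_W(G)$ is the value of the fractional-knapsack LP at capacity $C'$, a concave nondecreasing function of the capacity vanishing at $0$, hence at least $\min\{C'/C,1\}$ times its value at capacity $C$, and the latter dominates $\score_W(\OPT(\Gamma))$.)

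I do not expect a genuine obstacle here. Parts \cref{item:condition-kgreedy,item:condition-kgreedyex} are pure bookkeeping once one is careful that $R_i(\Gamma)$ is exactly the strict prefix of the greedy order and that the degenerate cases $\ell=1$ (so $R_i(\Gamma)=\emptyset$) and $k(\Gamma)=m$ are handled, and part \cref{item:performance-greedy} is the standard greedy argument; the only points to watch are that the density comparisons are invoked as plain real inequalities (consistent with, but not identical to, the tie-broken order $\prec_\rho$) and that the sign conditions $C-a\ge 0$ and $1-C'/C\ge 0$ are verified before multiplying or subtracting inequalities.
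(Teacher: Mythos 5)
Your proof is correct. For \cref{item:condition-kgreedy,item:condition-kgreedyex} you carry out the same prefix-sum bookkeeping that the paper compresses into ``follows directly from the definition''; the content is identical, you just make explicit that the tie-broken greedy order is strict, that $R_i(\Gamma)$ is the strict prefix, and that $\{t \mid P_t\le C\}$ is an initial segment, including the edge cases $\ell=1$ and $k(\Gamma)=m$. For \cref{item:performance-greedy} your route differs in one genuine respect: the paper invokes as a black box the textbook optimality of the fractional greedy solution (the inequality that the greedy prefix plus the fractional piece of agent $i_{k(\Gamma)+1}(\Gamma)$ dominates $\score_W(\OPT(\Gamma))$, cited from Korte--Vygen), and then splits into the cases $C'\ge C$ (apply that inequality to the instance with capacity $C'$) and $C'<C$ (compare the density average of the first $\ell$ agents with that of the fractional solution at capacity $C$). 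You instead prove the needed bound from first principles by an exchange argument: split $\OPT(\Gamma)$ across the prefix $G$ and its complement, bound $\score_W(\OPT(\Gamma))$ from above by $A+\rho^{*}(C-a)$ and $\score_W(G)$ from below by $A+\rho^{*}(C'-a)$, and combine with weight $C'/C$; the sign conditions you flag ($\rho^{*}\ge 0$, $C-a\ge 0$, $1-C'/C\ge 0$) are exactly the ones needed and all hold, and the resulting term $(1-C'/C)\sum_{j\in T\cap G}(\rho_{W,s}(j)-\rho^{*})s_j$ is indeed nonnegative. Your version is self-contained and avoids the external citation; the paper's is shorter because it reuses a standard result. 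Both establish the lemma.
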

\begin{proof}
    \Cref{item:condition-kgreedy,item:condition-kgreedyex} follow directly from the definition of the knapsack greedy set and the extended knapsack greedy set. 
    To see \cref{item:performance-greedy}, let~$\Gamma=(E,\bfs,C,W)$ be a knapsack instance. A well-known result regarding the optimality of the fractional solution constructed from the greedy order~\citep[e.g.][\S~17.1]{korte18} implies that
    \begin{equation}
        \sum_{j=1}^{k} \score_W(i_j(\Gamma)) + \frac{C - \sum_{j=1}^{k(\Gamma)} s_{i_j(\Gamma)}}{s_{i_{k(\Gamma)+1}(\Gamma)}} \score_W(i_{k(\Gamma)+1}(\Gamma)) \geq \score_W(\OPT(\Gamma)).\label[ineq]{eq:ineq-preformance-greedy-frac}
    \end{equation}
    We now fix~$\ell\in [m]$, and denote~$C'\coloneqq \sum_{j=1}^{\ell} s_{i_j(\Gamma)}$ and~$\Gamma'\coloneqq(E,\bfs,C',W)$. If~$C'\geq C$, 
    \[
        \score_W\Bigg(\bigcup_{j=1}^{\ell} \{i_j(\Gamma)\} \Bigg) = \sum_{j=1}^{k(\Gamma')} \score_W(i_j(\Gamma')) \geq \score_W(\OPT(\Gamma')) \geq \score_W(\OPT(\Gamma)).
    \]
    Indeed, the equality follows directly from the definition of~$\Gamma'$, the first inequality comes from applying \cref{eq:ineq-preformance-greedy-frac} to the modified instance~$\Gamma'$ along with the fact that~$\sum_{j=1}^{k(\Gamma')} s_{i_j(\Gamma')} = C'$, and the last inequality from~$C' \geq C$.
    Conversely, if~$C'< C$, then
    \begingroup
    \allowdisplaybreaks
    \begin{align*}
        \frac{1}{C'}\score_W\Bigg(\bigcup_{j=1}^{\ell} \{i_j(\Gamma)\} \Bigg) & = \frac{1}{C'} \sum_{j=1}^{\ell} \score_W(i_j(\Gamma)) \\
        & \geq \frac{1}{C} \Bigg( \sum_{j=1}^{k(\Gamma)} \score_W(i_j(\Gamma)) + \frac{C - \sum_{j=1}^{k(\Gamma)} s_{i_j(\Gamma)}}{s_{i_{k(\Gamma)+1}(\Gamma)}} \score_W(i_{k(\Gamma)+1}(\Gamma)) \Bigg)\\
        & \geq \frac{1}{C}\score_W(\OPT(\Gamma)).
    \end{align*}
    \endgroup
    Indeed, the first inequality follows because~$\ell \leq k(\Gamma)$ and the agents~$i_1(\Gamma), i_2(\Gamma), \ldots, i_{k(\Gamma)}(\Gamma)$ are sorted by density, and the second one follows from \cref{eq:ineq-preformance-greedy-frac}.
\end{proof}

With the greedy order in mind, we devise mechanisms inspired by the \emph{plurality with runners-up} mechanism proposed by \citet{tamura2014impartial} for single-nomination and its extension to multi-agent nomination by \citet{cembrano2022optimal}. 
A natural randomized variant of plurality with runners-up for \mbox{$k$-selection} selects with probability~$\frac{k}{k+1}$ each of the~$k$ agents with the highest total score and all other agents that would belong to this set upon the deletion of their votes. 
The correctness of this probability assignment is ensured by the fact that at most one such additional agent exists. 
It is also not hard to observe that the mechanism is impartial and provides a \mbox{$\frac{k}{k+1}$-approximation}.

The general idea of this mechanism can, in principle, be applied to the selection of agents under more general constraints: 
Agents from a greedily-constructed subset are chosen with a fixed probability, and this selection extends to additional agents that would be included in this subset if their votes were removed. 
The difficulty, in the case of knapsack constraints, is that the total score of the knapsack greedy set does not consistently approximate the total score of the optimal set.\footnote{For instance, consider~$\Gamma=(E,\bfs,C,W)$ where~$E=\{1,2\}$,~$s_1=1, s_2=M \gg 1$,~$C=M$, and scores~$w_{12}=M-1$ and~$w_{21}=1$. 
Here, the knapsack greedy set~$\algkgreedy(\Gamma) = \{1\}$ has a total score of~$1$, while the optimal set~$\OPT(\Gamma)=\{2\}$ has a total score of~$M-1$.} 
To overcome this limitation, we instead consider the extended knapsack greedy set, which ensures a total score better than that of the optimal set. 
Although this set may not constitute a feasible outcome, we show that it is possible to select every agent in it, as well as those that would belong to this set in a modified instance where their votes have been removed, with a probability of~$\frac{1}{3}$. 

The previous approach can be naturally generalized to $d$-sparse matrices.
We call this mechanism \nalgprukna\ for \mbox{$d$-sparse} instances and denote it by $\algprukna^d$.

\begin{algorithm}[h]
	\SetAlgoNoLine
	\KwIn{a knapsack instance~$\Gamma=(E,\bfs,C,W)$, where~$W$ is \mbox{$d$-sparse}.}
	\KwOut{a vector~$p \in [0,1]^{2^E}$ with~$\sum_{i\in S}s_i\leq C$ for every~$S\subseteq E$ with~$p_S>0$.}
	$S \gets \emptyset$\;
	\For{$i \in E$}{
		\If{$i \in \algkgreedyex(E,\bfs,C,W_{-i})$}{
			$S \gets S +i$\;
		}
	}
	$p \gets 0^{2^E}$\;
	$\ell \gets \arg\min \{\rho_{W,s}(i_j)\mid j\in [m] \text{ s.t.\ } i_j \in S\}$\tcp*{last agent in~$S$ w.r.t.\ to greedy}
        $p_{\{i_\ell\}}\gets \frac{1}{d+2}$\;
        \For(\tcp*[f]{agents receiving non-zero score from ~$i_\ell$}){$j\in E \cap S$ such that $w_{i_\ell j} >0$}{
            $p_{\{j\}} \gets \frac{1}{d+2}$\;
        }
	$T \gets S\setminus (\{i_\ell\} \cup \{j\in E\mid w_{i_\ell j}>0\})$\tcp*{remaining agents in~$S$}
	$p_T \gets \frac{1}{d+2}$\;
	{\bfseries return}~$p$
	\caption{\nalgprukna\ for \mbox{$d$-sparse} instances ($\algprukna^d$)}
	\label{alg:prukna-dsparse}
\end{algorithm}

We obtain the following result.

\begin{theorem}
    \label{thm:knapsack-lb-dsparse}
    For every~$d\in \NN$, \nalgprukna\ for \mbox{$d$-sparse} instances is an impartial and $\frac{1}{d+2}$-optimal selection mechanism on knapsack instances with \mbox{$d$-sparse} score matrices.
\end{theorem}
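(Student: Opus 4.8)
The plan is to exhibit the family $\mathcal F$ of at most $d+2$ sets to which $\algprukna^d$ assigns probability $\tfrac1{d+2}$ each, show that it partitions the set $S$ built in the \texttt{for}-loop, and then read off the three required properties from: (a) membership of an agent $i$ in $S$ depends only on $W_{-i}$; (b) every member of $\mathcal F$ is feasible; and (c) $\algkgreedyex(\Gamma)\subseteq S$ together with $\score_W(\algkgreedyex(\Gamma))\ge\score_W(\OPT(\Gamma))$.

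First I would fix an instance $\Gamma=(E,\bfs,C,W)$ with $\score_W(\OPT(\Gamma))>0$, write $S=\{i\in E: i\in\algkgreedyex(E,\bfs,C,W_{-i})\}$ for the set the mechanism computes, and prove an auxiliary monotonicity statement: deleting agent $i$'s row from $W$ leaves $\rho_{W,s}(i)$ unchanged and can only weakly decrease $\rho_{W,s}(j)$ for $j\neq i$, consistently with the tie-breaking order, so that $R_i(E,\bfs,C,W_{-i})\subseteq R_i(\Gamma)$ for every $i\in E$; moreover, if in addition $w_{ij}=0$ then $\rho_{W_{-i},s}(j)=\rho_{W,s}(j)$, hence in that case $j\in R_i(\Gamma)$ already implies $j\in R_i(E,\bfs,C,W_{-i})$. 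Combining $R_i(E,\bfs,C,W_{-i})\subseteq R_i(\Gamma)$ with \cref{item:condition-kgreedyex} applied to $\Gamma$ and to $(E,\bfs,C,W_{-i})$ yields $\algkgreedyex(\Gamma)\subseteq S$; in particular $S\neq\emptyset$, so the density-minimum element $i_\ell$ of $S$ is well defined.

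Next I would pin down $\mathcal F$ and check feasibility. Put $B\coloneqq\{j\in E: w_{i_\ell j}>0\}$, so $|B|\le d$ by $d$-sparsity and $i_\ell\notin B$; the mechanism assigns mass $\tfrac1{d+2}$ to $\{i_\ell\}$, to each $\{j\}$ with $j\in S\cap B$, and to $T\coloneqq S\setminus(\{i_\ell\}\cup B)$. A short set computation shows that these sets are pairwise disjoint with union $S$, so they constitute $\mathcal F$, that $|\mathcal F|\le d+2$, and that the residual mass $1-|\mathcal F|/(d+2)\ge0$ may be assigned to $\emptyset$ (feasible, zero score). The singletons are feasible because $s_i\le C$ for all $i\in E$ by the standing assumption. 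For $T$, since $i_\ell$ is density-minimum in $S$, every $j\in T$ satisfies $\rho_{W,s}(j)\succ\rho_{W,s}(i_\ell)$ and $w_{i_\ell j}=0$; the second half of the monotonicity statement then gives $j\in R_{i_\ell}(E,\bfs,C,W_{-i_\ell})$, whence $s(T)\le\sum_{j\in R_{i_\ell}(E,\bfs,C,W_{-i_\ell})}s_j\le C$, the last inequality being \cref{item:condition-kgreedyex} for $(E,\bfs,C,W_{-i_\ell})$ together with $i_\ell\in\algkgreedyex(E,\bfs,C,W_{-i_\ell})$, that is, $i_\ell\in S$.

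Finally, impartiality and the ratio follow from the partition property. Each $i\in S$ lies in exactly one member of $\mathcal F$ and each $i\notin S$ in none, so $\algprukna^d_i(\Gamma)=\tfrac1{d+2}\llbracket i\in S\rrbracket$; since $i\in S\iff i\in\algkgreedyex(E,\bfs,C,W_{-i})$ depends only on $W_{-i}$, impartiality is immediate. For optimality, the partition property gives $\EE[\score_W(\algprukna^d(\Gamma))]=\tfrac1{d+2}\sum_{F\in\mathcal F}\score_W(F)=\tfrac1{d+2}\,\score_W(S)\ge\tfrac1{d+2}\,\score_W(\algkgreedyex(\Gamma))$, using $\algkgreedyex(\Gamma)\subseteq S$ and nonnegativity of scores, and $\score_W(\algkgreedyex(\Gamma))\ge\score_W(\OPT(\Gamma))$ holds either trivially (if $s(E)\le C$ then $\algkgreedyex(\Gamma)=E\supseteq\OPT(\Gamma)$) or by \cref{item:performance-greedy} with $\ell=k(\Gamma)+1$ (if $s(E)>C$, so that $\sum_{j=1}^{k(\Gamma)+1}s_{i_j(\Gamma)}>C$ and the minimum in \cref{item:performance-greedy} equals $1$). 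I expect the main obstacle to be the feasibility of $T$, and within it the monotonicity statement $R_i(E,\bfs,C,W_{-i})\subseteq R_i(\Gamma)$: the delicate point is that the greedy order breaks ties through the fixed total order on $E$, so one must verify that agents whose densities are unchanged by the row deletion keep the same relative order in both instances; the rest is bookkeeping with \cref{lem:kgreedy}.
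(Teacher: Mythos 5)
Your proposal is correct and follows essentially the same route as the paper's proof: you rederive \Cref{lem:kgreedyex} via the same monotonicity of densities under row deletion, establish feasibility of the remainder set $T$ by showing $T\subseteq R_{i_\ell}(E,\bfs,C,W_{-i_\ell})$ exactly as the paper does, and conclude impartiality and $\frac{1}{d+2}$-optimality from $\algkgreedyex(\Gamma)\subseteq S$ together with \cref{item:performance-greedy} of \Cref{lem:kgreedy}. The only cosmetic difference is that you make the partition structure of the family $\mathcal F$ and the residual mass on $\emptyset$ explicit, which the paper leaves implicit.
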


Note that \Cref{thm:knapsack-lb-dsparse} only represents an improvement over \Cref{thm:gral-indep-lb} for $d=1$.
Thus, we obtain the following consequence for $1$-sparse score matrices.

\begin{corollary}
	\label{thm:knapsack-lb}
	There exists an impartial and \mbox{$\frac{1}{3}$-optimal} selection mechanism on knapsack instances with \mbox{$1$-sparse} score matrices.
\end{corollary}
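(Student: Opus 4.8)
The plan is to obtain this statement as the special case $d=1$ of \Cref{thm:knapsack-lb-dsparse}. Every $1$-sparse score matrix is in particular $d$-sparse with $d=1$, so $\algprukna^1$ --- the \nalgprukna\ mechanism instantiated with $d=1$ --- is a well-defined selection mechanism on the class of knapsack instances with $1$-sparse score matrices, and \Cref{thm:knapsack-lb-dsparse} asserts that on exactly this class it is impartial and $\frac{1}{d+2}=\frac13$-optimal. Hence $\algprukna^1$ witnesses the existence claim, and no separate argument is needed once \Cref{thm:knapsack-lb-dsparse} is in hand.

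Since all of the content therefore sits in \Cref{thm:knapsack-lb-dsparse}, here is how I would prove that theorem. Three things must be verified for $\algprukna^d$ on $d$-sparse knapsack instances: that it outputs a valid probability vector supported on feasible sets, that it is impartial, and that it is $\frac{1}{d+2}$-optimal. I would begin with the structural observation that the three collections to which \Cref{alg:prukna-dsparse} assigns mass $\frac{1}{d+2}$ --- the singleton $\{i_\ell\}$, the singletons $\{j\}$ with $j\in S$ and $w_{i_\ell j}>0$ (at most $d$ of them, by $d$-sparsity), and the remainder $T$ --- are pairwise disjoint and together cover $S$; consequently there are at most $d+2$ of them, the total assigned mass is at most $1$, every agent of $S$ receives probability exactly $\frac{1}{d+2}$, and every agent outside $S$ receives $0$. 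Impartiality then follows quickly: whether an agent $i$ belongs to $S$ is decided by the test $i\in\algkgreedyex(E,\bfs,C,W_{-i})$, which by \cref{item:condition-kgreedyex} of \Cref{lem:kgreedy} depends only on $W_{-i}$, so the probability the mechanism assigns to $i$ --- which by the previous sentence equals $\frac{1}{d+2}$ or $0$ according to whether $i\in S$ --- is unchanged when $i$ alters its own votes, with no need to track how this moves the greedy order. For optimality I would first show $S\supseteq\algkgreedyex(\Gamma)$: deleting an agent's own votes can only lower the densities of the agents it voted for, so the set $R_i$ of strictly-denser agents can only shrink and the criterion of \cref{item:condition-kgreedyex} is preserved; then \cref{item:performance-greedy} with $\ell=k(\Gamma)+1$ (so that $C'=s(\algkgreedyex(\Gamma))>C$ in the only non-trivial case $s(E)>C$) gives $\score_W(S)\ge\score_W(\algkgreedyex(\Gamma))\ge\score_W(\OPT(\Gamma))$, and since the selected set always has expected score $\frac{1}{d+2}\score_W(S)$ we are done.

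The main obstacle is the feasibility of the leftover set $T$: one must prove $s(T)\le C$ even though $S$ itself can be much too large for the knapsack, and this is exactly what the Tamura-style choice of $i_\ell$ is for. Because $i_\ell$ is the last agent of $S$ in the greedy order, every $j\in S\setminus\{i_\ell\}$ lies in $R_{i_\ell}(\Gamma)$, so $T\subseteq R_{i_\ell}(\Gamma)$; and since $T$ moreover omits every agent that $i_\ell$ votes for --- precisely the agents whose membership in $R_{i_\ell}$ could change when $i_\ell$'s votes are removed --- we get $T\subseteq R_{i_\ell}(E,\bfs,C,W_{-i_\ell})$. Finally $i_\ell\in S$ means $i_\ell\in\algkgreedyex(E,\bfs,C,W_{-i_\ell})$, so \cref{item:condition-kgreedyex} yields $s(T)\le\sum_{k\in R_{i_\ell}(E,\bfs,C,W_{-i_\ell})}s_k\le C$; the remaining one-element sets are feasible because $s_j\le C$ for every $j\in E$ by assumption. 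Assembling these pieces proves \Cref{thm:knapsack-lb-dsparse}, and the corollary is then immediate by taking $d=1$.
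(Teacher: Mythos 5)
Your proposal is correct and follows the paper exactly: the corollary is obtained as the $d=1$ instance of \Cref{thm:knapsack-lb-dsparse}, and your sketch of that theorem's proof mirrors the paper's argument step for step (the inclusion $\algkgreedyex(\Gamma)\subseteq S(\Gamma)$ via \Cref{lem:kgreedyex}, impartiality from the $W_{-i}$-only membership test, optimality via \cref{item:performance-greedy} of \Cref{lem:kgreedy}, and feasibility of the leftover set $T$ by showing $T\subseteq R_{i_\ell}(E,\bfs,C,W_{-i_\ell})$ and invoking \cref{item:condition-kgreedyex}). No gaps.
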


As outlined above, to achieve this approximation ratio we employ the classic greedy order for the knapsack problem, where the agents are ranked by density. 
We select with probability~$\frac{1}{d+2}$ each agent in the extended knapsack greedy set, as well as each agent that would become part of it by removing its assigned scores from the tallied scores. 
Since we assign probability $\frac{1}{d+2}$ to these agents, to ensure feasibility, we prove that they can be packed into~$d+2$ knapsacks. 
More specifically, letting~$S$ represent this set of agents and~$i_\ell$ be the last agent among them according to the greedy order, we show that~$i_\ell$ can be placed in one knapsack, each potential agent in~$S$ receiving a non-zero score from~$i_\ell$ in another, and all other agents in~$S$ in another knapsack. 
This is depicted in \Cref{fig:knapsack-greedy}.
\begin{figure}[tb]
        \begin{subfigure}{0.48\textwidth}
	\centering
	\begin{tikzpicture}
		\draw[fill = color1trans] (0,0) rectangle node[label={[label distance=0.04cm]90:\small$3$}] {\small$10$} (0.75,0.7);
            \draw[fill = color1trans] (0.75,0) rectangle node[label={[label distance=0.04cm]90:\small$5.5$}] {\small$15$} (2.125,0.7);
		\draw[fill = color2] (2.125,0) rectangle node[label={[label distance=0.04cm]90:\small$10$}] {\small$25$} (4.625,0.7);
		\draw (4.625,0) rectangle node[label={[label distance=0.04cm]90:\small$1$}] {\small$2$} (4.875,0.7);
		\draw[fill = color3trans] (4.875,0) rectangle node[label={[label distance=0.04cm]90:\small$8$}] {\small$10$} (6.875,0.7);
		\draw [very thick, -](2.5,-0.2) -- (2.5,0.9) node[above] {\small$C=10$};
		\draw [very thick, -](5,-0.2) -- (5,0.9) node[above] {\small$2C$};
		\Text[x=-0.35,y=0.87]{\small$s_i$};
		\Text[x=-0.35,y=0.33]{\small$\score(i)$};
		\draw [thick, -Latex](5.5,0) .. controls (5,-.9) and (4,-.9) .. (3.5,0);
		\draw [thick, -Latex](4.75,0) .. controls (4.5,-.4) and (4.25,-.4) .. (4,0);
		\Text[x=4.5,y=-1]{\small$15$};
		\Text[x=4.7,y=-.32]{\small$3$};
        \end{tikzpicture}
        \caption{The agent of size~$1$ cannot decrease the density of the agent of size~$10$ below its own, thus it is not selected.}
        \label{fig:knapsack-greedy-a}
        \end{subfigure}
        \hfill
        \begin{subfigure}{0.48\textwidth}
        \centering
        \begin{tikzpicture}
	       \draw[fill = color1trans] (0,0) rectangle node[label={[label distance=0.04cm]90:\small$3$}] {\small$10$} (0.75,0.7);
                \draw[fill = color1trans] (0.75,0) rectangle node[label={[label distance=0.04cm]90:\small$5.5$}] {\small$15$} (2.125,0.7);
			\draw[fill = color2] (2.125,0) rectangle node[label={[label distance=0.04cm]90:\small$10$}] {\small$25$} (4.625,0.7);
			\draw[fill = color1trans] (4.625,0) rectangle node[label={[label distance=0.04cm]90:\small$1$}] {\small$2$} (4.875,0.7);
			\draw[fill = color3trans] (4.875,0) rectangle node[label={[label distance=0.04cm]90:\small$8$}] {\small$10$} (6.875,0.7);
                \draw [very thick, -](2.5,-0.2) -- (2.5,0.9) node[above] {\small$C=10$};
		\draw [very thick, -](5,-0.2) -- (5,0.9) node[above] {\small$2C$};
            \Text[x=-0.35,y=0.87]{\small$s_i$};
		\Text[x=-0.35,y=0.33]{\small$\score(i)$};
			\draw [thick, -Latex](5.5,0) .. controls (5,-.9) and (4,-.9) .. (3.5,0);
			\draw [thick, -Latex](4.75,0) .. controls (4.5,-.4) and (4.25,-.4) .. (4,0);
			\Text[x=4.5,y=-1]{\small$15$};
			\Text[x=4.74,y=-.32]{\small$10$};
	\end{tikzpicture}
        \caption{The agent of size~$1$ can decrease the density of the agent of size~$10$ below its own, thus it is selected.}
        \label{fig:knapsack-greedy-b}
        \end{subfigure}
	\caption{Two examples of \nalgprukna. 
    The first agents according to the greedy order are depicted as rectangles, with sizes written above and total scores inside. 
    Non-zero scores between depicted agents are represented as arcs along with their scores. 
    Colored agents are selected with probability~$\frac{1}{3}$; colors illustrate the feasibility of the probability assignment as agents of the same color fit into one knapsack. 
    Note that the score assigned by some agent (other than the agent of size $1$) to the agent of size $10$ decreases by $7$ from \subref{fig:knapsack-greedy-a} to \subref{fig:knapsack-greedy-b}, thus keeping the total score of the agent of size 10 at 25.}
	\label{fig:knapsack-greedy}
\end{figure}
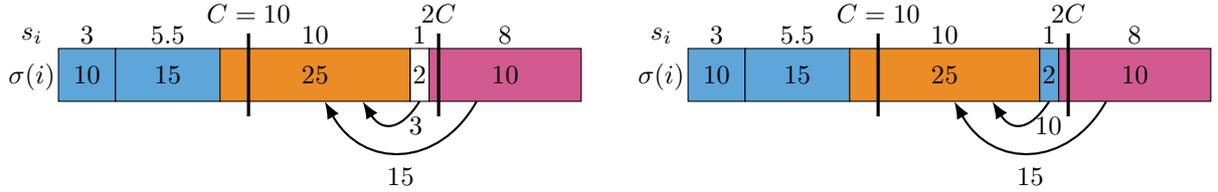

For an instance~$\Gamma$,~$S(\Gamma)$ refers to the set~$S$ defined in \nalgprukna\ for \mbox{$d$-sparse} instances with input~$\Gamma$, and~$\ell(\Gamma)$ denotes the index~$\ell$ defined in this mechanism with input~$\Gamma$. 
Before proving the correctness, impartiality, and \mbox{$\frac{1}{d+2}$-optimality} of the mechanism, we state a simple property regarding the relationship of the extended knapsack greedy set with the set~$S(\Gamma)$.

\begin{lemma}
\label{lem:kgreedyex}
	Let~$d\in \NN$ and~$\Gamma=(E,\bfs,C,W)$ be a knapsack instance such that~$W$ is \mbox{$d$-sparse}. Then, $\algkgreedyex(\Gamma) \subseteq S(\Gamma)$.
\end{lemma}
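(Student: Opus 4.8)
The plan is to show the set inclusion by picking an arbitrary agent $i \in \algkgreedyex(\Gamma)$ and checking that $i$ satisfies the membership condition that puts it into $S(\Gamma)$ inside the \textbf{for} loop of \nalgprukna, namely that $i \in \algkgreedyex(E,\bfs,C,W_{-i})$. The natural tool is \Cref{lem:kgreedy}\cref{item:condition-kgreedyex}, which characterizes membership in the extended knapsack greedy set for any instance purely in terms of total sizes of predecessors in the greedy order: $i \in \algkgreedyex$ if and only if $\sum_{j \in R_i} s_j \leq C$. Since sizes are fixed across the two instances $\Gamma$ and $(E,\bfs,C,W_{-i})$ and the capacity $C$ is unchanged, the only thing that can differ between these two criteria is the predecessor set $R_i$, i.e., which agents have strictly higher density than $i$.

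\textbf{First} I would observe that removing the row of $i$ from $W$ only changes the scores received by agents that $i$ votes for; in particular, it does not change $\score_W(i)$ (since $w_{ii}=0$), so the density $\rho_{W,s}(i)$ is the same in both instances. Moreover, for any other agent $j \neq i$, deleting $i$'s votes can only weakly \emph{decrease} $\score(j)$, hence weakly decrease $\rho(j)$. \textbf{Then} I would argue that this means the predecessor set of $i$ can only shrink: if $j \in R_i(E,\bfs,C,W_{-i})$, that is, $\rho_{W_{-i},s}(j) \succ \rho_{W_{-i},s}(i) = \rho_{W,s}(i)$, then since $\rho_{W,s}(j) \succeq \rho_{W_{-i},s}(j)$ (here one must be slightly careful with the tie-breaking order $\prec_\phi$, but the density value in $\Gamma$ is at least that in the modified instance, and with equal values the tie-break on $E$ is identical in both), we get $\rho_{W,s}(j) \succ \rho_{W,s}(i)$, i.e., $j \in R_i(\Gamma)$. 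Therefore $R_i(E,\bfs,C,W_{-i}) \subseteq R_i(\Gamma)$, and consequently
\[
    \sum_{j \in R_i(E,\bfs,C,W_{-i})} s_j \;\leq\; \sum_{j \in R_i(\Gamma)} s_j \;\leq\; C,
\]
where the last inequality is \Cref{lem:kgreedy}\cref{item:condition-kgreedyex} applied to $\Gamma$ together with $i \in \algkgreedyex(\Gamma)$. Applying the same lemma in the other direction to the instance $(E,\bfs,C,W_{-i})$ gives $i \in \algkgreedyex(E,\bfs,C,W_{-i})$, so the loop adds $i$ to $S$, proving $\algkgreedyex(\Gamma) \subseteq S(\Gamma)$.

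\textbf{The main obstacle} — really the only subtle point — is handling the tie-breaking convention $\prec_\phi$ correctly when an agent's density value stays the same but is compared via the lexicographic order on $E$: one needs that deleting $i$'s row never \emph{promotes} an agent $j$ past $i$ in the greedy order, which holds because $j$'s density value weakly decreases while $i$'s is unchanged, and the secondary tie-break on the fixed ground-set order $E$ does not depend on $W$ at all. Everything else is a direct two-line application of \Cref{lem:kgreedy}. I would keep the write-up to a short paragraph, citing \cref{item:condition-kgreedyex} twice and the monotonicity of $R_i$ under row deletion.
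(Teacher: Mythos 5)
Your proposal is correct and follows essentially the same argument as the paper's proof: apply \Cref{lem:kgreedy}\cref{item:condition-kgreedyex} to both instances, note that deleting agent~$i$'s row leaves $\rho_{W,s}(i)$ unchanged while weakly decreasing every other density, and conclude $R_i(E,\bfs,C,W_{-i}) \subseteq R_i(\Gamma)$. Your extra remark on the tie-breaking order is a valid refinement of a step the paper states without elaboration.
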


\begin{proof}
Let~$d$ and~$\Gamma=(E,\bfs,C,W)$ be as in the statement.
Consider any agent~$i\in \algkgreedyex(\Gamma)$. From \cref{item:condition-kgreedyex} of \Cref{lem:kgreedy}, we know that~$\sum_{j\in R_i(\Gamma)} s_j \leq C$. Consider now the modified instance~$\Gamma_{-i} \coloneqq (E,\bfs,C,W_{-i})$. It is immediate that 
\begin{align*}
    \score_{W_{-i}}(i) & = \score_W(i),\\
    \score_{W_{-i}}(j) & \leq \score_W(j) \qquad \text{for every } j \in E-i.
\end{align*}
Therefore,~$\rho_{W_{-i},s}(i) = \rho_{W,s}(i)$ and~$\rho_{W_{-i},s}(j) \leq \rho_{W,s}(j)$ for every~$j \in E-i$. This implies that~$R_i(\Gamma_{-i}) \subseteq R_i(\Gamma)$. Hence,~$\sum_{j\in R_i(\Gamma_{-i})}s_j \leq \sum_{j\in R_i(\Gamma)} s_j \leq C$, so \cref{item:condition-kgreedyex} of \Cref{lem:kgreedy} implies that~$i\in \algkgreedyex(\Gamma_{-i})$. From the definition of~$S(\Gamma)$, we conclude that~$i\in S(\Gamma)$.
\end{proof}

We now have the necessary ingredients to conclude \Cref{thm:knapsack-lb-dsparse}.

\begin{proof}[Proof of \Cref{thm:knapsack-lb-dsparse}]

	Let~$d\in \NN$ be an integer and~$\Gamma=(E,\bfs,C,W)$ be a knapsack instance, where~$W$ is a \mbox{$d$-sparse} score matrix.

	We first prove that the mechanism is well defined, i.e., that it outputs a probability distribution over independent sets. Observe that the set~$S(\Gamma)$ constructed in the mechanism is non-empty due to \Cref{lem:kgreedyex}. Thus,~$\ell(\Gamma)$ is well defined; we call it~$\ell$ for simplicity. Furthermore, we have that
    \[
        \big|\big\{T\subseteq E\mid \algprukna^d_T(\Gamma) > 0\big\}\big| \leq \big|\big\{j\in E \mid w_{i_{\ell}j}>0\big\}\big| + 2 \leq d+2;
    \]
	i.e., at most~$d+2$ sets are assigned strictly positive probability by the mechanism. Since each of them is selected with probability~$\frac{1}{d+2}$, to conclude that the mechanism is well defined it suffices to prove that all these sets are independent, that is, to show the implication
	\[
		\algprukna^d_T > 0 \quad \Longrightarrow \quad s(T)\leq C.
	\]
	Indeed, since~$s_j\leq C$ for every~$j \in E$, this is trivial for the set~$\{i_{\ell}\}$ and for each set~$\{j\}$ with~$j\in E\cap S$ such that $w_{i_{\ell}j}>0$. It remains to prove that~$s(T)\leq C$ for \mbox{$T \coloneqq S(\Gamma) \setminus \big(\{i_{\ell}\} \cup \big\{j\in E \mid w_{i_{\ell} j}>0\big\}\big)$}. 
	
	We let~$\Gamma_{-i_{\ell}} \coloneqq \big(E,\bfs,C,W_{-i_{\ell}}\big)$ denote the modified instance obtained by removing the votes of~$i_{\ell}$. Observe that, due to \cref{item:condition-kgreedyex} of \Cref{lem:kgreedy},~$i_{\ell} \in S(\Gamma)$ implies~$s\big(R_{i_{\ell}}\big(\Gamma_{-i_{\ell}}\big)\big) \leq C$. 
	We claim that~$T \subseteq R_{i_{\ell}}\big(\Gamma_{-i_{\ell}}\big)$. If true, this directly implies~$s(T) \leq C$, as desired. We now show the claim. We fix~$j\in T$, i.e., $j\in S(\Gamma) - i_{\ell}$ with $w_{i_{\ell}j}=0$. Since \mbox{$i_{\ell} = \arg\min\{\rho_{W,s}(i)\mid i\in S(\Gamma)\}$}, we know that 
	\begin{equation}
		\rho_{W,s}(j) \succ \rho_{W,s}(i_{\ell}).\label[ineq]{eq:ineq-densities}
	\end{equation}
	Since~$w_{i_{\ell}j}=0$, we also have $\score_{W_{-i_{\ell}}}(j) = \score_W(j)$. Furthermore, $\score_{W_{-i_{\ell}}}(i_{\ell}) = \score_W(i_{\ell})$ by definition. Replacing in \cref{eq:ineq-densities} then yields $\rho_{W_{-i_{\ell}},s}(j) \succ \rho_{W_{-i_{\ell}},s}(i_{\ell})$, i.e.,~$j\in R_{i_{\ell}}\big(\Gamma_{-i_{\ell}}\big)$.
	
	To prove impartiality, we fix~$i \in E$. We observe that~$\algprukna^d_i(\Gamma) \in \big\{0, \frac{1}{d+2}\big\}$ and that
	\[
		\algprukna^d_i(\Gamma) = \frac{1}{d+2} \quad \Longleftrightarrow \quad i \in S(\Gamma).
	\]
	Moreover, the fact that~$i\in S(\Gamma)$ does not depend on the votes of~$i$ in~$\Gamma$:
	\begin{align*}
	    i\in S(\Gamma) \quad \Longleftrightarrow \quad & i\in \algkgreedyex(E,\bfs,C,W_{-i}) \\
     \Longleftrightarrow \quad & i\in \algkgreedyex(E,\bfs,C,W'_{-i}) \\
     \Longleftrightarrow \quad & i\in S(\Gamma')
	\end{align*}
	for every knapsack instance~$\Gamma'=(V,\bfs,C,W')$ with~$W'_{-i}=W_{-i}$.
	
	Finally, to see \mbox{$\frac{1}{d+2}$-optimality} we claim that
	\begin{align*}
		\sum_{i \in E} \score_W(i) \algprukna^d_i(\Gamma) & = \frac{1}{d+2} \sum_{i\in S(\Gamma)} \score_W(i) \\
		& \geq \frac{1}{d+2} \score_W(\algkgreedyex(\Gamma)) \\
		&  \geq \frac{1}{d+2} \score_W(\OPT(\Gamma)).
	\end{align*}
	Indeed, the first equality comes from the definition of \nalgprukna~and the first inequality from \Cref{lem:kgreedyex}. The last inequality follows from \cref{item:performance-greedy} of \Cref{lem:kgreedy}, since~$s(\algkgreedyex(\Gamma)) > C$ whenever~$\algkgreedyex(\Gamma)\neq E$.
\end{proof}
The tightness of the analysis is evident when~$s(E)\leq C$, as the mechanism returns the set~$E$ with a probability of~$\frac{1}{3}$ (and the empty set with the remaining probability).

Impossibility results for~$1$-selection~\citep{alon2011sum,holzman2013impartial} directly imply that deterministic mechanisms cannot achieve any strictly positive constant approximation ratio in general, even with a sparsity of~$1$, as having agents with sizes equal to~$C$ reduces to this setting. 
However, in the context of knapsack constraints, a natural case in which deterministic mechanisms can provide a non-trivial approximation is one where the capacity significantly exceeds the sizes of the agents. We obtain the following result.

\begin{theorem}
	\label{thm:knapsack-lb-det}
    Let~$d\in \NN$,~$C>0$, and~$0<s_{\max}< \frac{C}{d+1}$. Then, there exists a deterministic selection mechanism that is impartial and \mbox{$\alpha$-optimal} on
    the set of knapsack instances $(E,\bfs,C,W)$ such that~$W$ is a \mbox{$d$-sparse} score matrix and $s_i\leq s_{\max}$ for every~$i\in E$,
    where~$\alpha = 1-\frac{s_{\max}}{C}(d+1)$.
\end{theorem}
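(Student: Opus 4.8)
\emph{Plan.} The idea is to run a deterministic \emph{plurality-with-runners-up}–style mechanism, in the spirit of \nalgprukna, but on an artificially shrunk capacity chosen so that no randomness is needed to guarantee feasibility. Fix $s_{\max}$ as in the statement and put $C'\coloneqq C-d\cdot s_{\max}$; since $s_{\max}<\frac{C}{d+1}$ we have $C'>s_{\max}>0$, so every single agent fits into a knapsack of capacity $C'$. For a knapsack instance $\Gamma=(E,\bfs,C,W)$ with $W$ being $d$-sparse and $s_i\le s_{\max}$ for every $i\in E$, write $\Gamma'\coloneqq(E,\bfs,C',W)$ and $\Gamma'_{-i}\coloneqq(E,\bfs,C',W_{-i})$, and let the mechanism deterministically output
\[
	\MEC(\Gamma)\;\coloneqq\;\bigl\{\,i\in E \;\big\vert\; i\in\algkgreedy(\Gamma'_{-i})\,\bigr\},
\]
the analogue of the set $S$ from \nalgprukna~but built from the \emph{plain} knapsack greedy set on the \emph{reduced} capacity $C'$. (We turn $\MEC$ into a full selection-mechanism family by returning $\emptyset$ on all other independence systems; the statement only concerns the restricted instance set.) Impartiality is then immediate, because whether $i\in\MEC(\Gamma)$ depends only on $W_{-i}$, hence not on the scores cast by $i$ itself.

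\emph{Feasibility.} I would first show that $\MEC(\Gamma)$ is independent, i.e.\ $s(\MEC(\Gamma))\le C$. By the argument in the proof of \Cref{lem:kgreedyex} — with $\algkgreedy,C'$ in place of $\algkgreedyex,C$, using that deleting the scores of an agent only decreases the densities of the other agents and hence only shrinks their sets $R_{\,\cdot}$, together with \cref{item:condition-kgreedy} of \Cref{lem:kgreedy} — one gets $\algkgreedy(\Gamma')\subseteq\MEC(\Gamma)$; in particular $\MEC(\Gamma)\ne\emptyset$ since $s_{i_1(\Gamma)}\le s_{\max}<C'$, so the last agent $i_\ell$ of $\MEC(\Gamma)$ in the greedy order is well defined. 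Partition $\MEC(\Gamma)$ into $\{i_\ell\}$, $N_\ell\coloneqq\{\,j\in\MEC(\Gamma)\setminus\{i_\ell\}\mid w_{i_\ell j}>0\,\}$, and $T\coloneqq\MEC(\Gamma)\setminus(\{i_\ell\}\cup N_\ell)$. As in the proof of \Cref{thm:knapsack-lb-dsparse}, every $j\in T$ satisfies $\rho_{W,s}(j)\succ\rho_{W,s}(i_\ell)$ and $w_{i_\ell j}=0$, whence $\rho_{W_{-i_\ell},s}(j)\succ\rho_{W_{-i_\ell},s}(i_\ell)$, so $T\subseteq R_{i_\ell}(\Gamma'_{-i_\ell})$; applying \cref{item:condition-kgreedy} of \Cref{lem:kgreedy} to $\Gamma'_{-i_\ell}$ (legitimate because $i_\ell\in\MEC(\Gamma)$ means $i_\ell\in\algkgreedy(\Gamma'_{-i_\ell})$) yields $s(T)+s_{i_\ell}\le C'$. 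Since $W$ is $d$-sparse, $|N_\ell|\le d$, hence $s(N_\ell)\le d\cdot s_{\max}$, and therefore
\[
	s(\MEC(\Gamma))\;=\;s(T)+s_{i_\ell}+s(N_\ell)\;\le\;C'+d\cdot s_{\max}\;=\;C .
\]

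\emph{Approximation.} From $\algkgreedy(\Gamma')\subseteq\MEC(\Gamma)$ and non-negativity of scores, $\score_W(\MEC(\Gamma))\ge\score_W(\algkgreedy(\Gamma'))$. The set $\algkgreedy(\Gamma')$ is a prefix $\{i_1(\Gamma),\dots,i_\kappa(\Gamma)\}$ of the greedy order (which is the same for $\Gamma$ and $\Gamma'$, as it does not depend on the capacity), with $\kappa\ge1$. If $\kappa=|E|$, then $E$ is feasible for $\Gamma$, so $\OPT(\Gamma)=E=\MEC(\Gamma)$ and the bound is trivial. Otherwise, maximality of $\kappa$ gives $\sum_{j=1}^{\kappa}s_{i_j(\Gamma)}+s_{i_{\kappa+1}(\Gamma)}>C'$, so $c\coloneqq\sum_{j=1}^{\kappa}s_{i_j(\Gamma)}>C'-s_{\max}=C-(d+1)s_{\max}$; hence \cref{item:performance-greedy} of \Cref{lem:kgreedy}, applied to $\Gamma$ with $\ell=\kappa$, gives
\[
	\score_W(\algkgreedy(\Gamma'))\;\ge\;\min\Bigl\{\tfrac{c}{C},1\Bigr\}\,\score_W(\OPT(\Gamma))\;\ge\;\Bigl(1-\tfrac{s_{\max}}{C}(d+1)\Bigr)\score_W(\OPT(\Gamma)).
\]

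\emph{Main obstacle.} I expect the feasibility step to be the heart of the argument: the shrunk capacity must be calibrated so that the $d\cdot s_{\max}$ of slack it gives up is exactly what is required to accommodate the at most $d$ out-neighbours $N_\ell$ of the last greedy agent \emph{on top of} the prefix $T\cup\{i_\ell\}$ (which, by \cref{item:condition-kgreedy} of \Cref{lem:kgreedy}, already fits in $C'$); this is where the $d$-sparsity of $W$ and the uniform size bound $s_i\le s_{\max}$ are both used essentially. Impartiality and the approximation guarantee are then routine adaptations of the template already developed for \nalgprukna~and of \Cref{lem:kgreedy}.
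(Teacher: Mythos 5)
Your mechanism is exactly the paper's \nalgpruknadet: select every $i$ with $i\in\algkgreedy(E,\bfs,C-d\cdot s_{\max},W_{-i})$, and your impartiality and approximation arguments coincide with the paper's (the containment $\algkgreedy(\Gamma')\subseteq\MEC(\Gamma)$ is \cref{item:lem-pruknadet-i} of \Cref{lem:pruknadet}, and the lower bound $s(\algkgreedy(\Gamma'))> C-(d+1)s_{\max}$ combined with \cref{item:performance-greedy} of \Cref{lem:kgreedy} is precisely how the paper derives the ratio). The one place you genuinely diverge is feasibility. The paper proves the reverse containment $\MEC(\Gamma)\subseteq\algkgreedy(E,\bfs,C,W)$ (\cref{item:lem-pruknadet-ii} of \Cref{lem:pruknadet}): for each selected $i$ it compares $R_i(\Gamma_{-i})$ with $R_i(\Gamma)$, charging the at most $d\cdot s_{\max}$ of size that $i$'s out-neighbours may lose from $R_i$ when $i$'s votes are deleted, so the whole selected set sits inside the full-capacity greedy set and $s(\MEC(\Gamma))\le C$ follows at once. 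You instead import the three-way decomposition from the randomized mechanism \nalgprukna, splitting $\MEC(\Gamma)$ into the last greedy agent $i_\ell$, its at most $d$ out-neighbours $N_\ell$, and the rest $T$, showing $s(T)+s_{i_\ell}\le C'$ via $T\subseteq R_{i_\ell}(\Gamma'_{-i_\ell})$ and $s(N_\ell)\le d\cdot s_{\max}$. Both arguments are correct and both spend the same $d\cdot s_{\max}$ slack; the paper's version is slightly cleaner in that it yields the structural fact that the output is always a subset of the true greedy set for capacity $C$, whereas yours makes more transparent why the calibration $C'=C-d\cdot s_{\max}$ is exactly what is needed to absorb the out-neighbours of the critical agent. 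No gaps.
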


The mechanism providing this approximation ratio builds upon those studied by \citet{cembrano2022optimal}. We refer to it as \nalgpruknadet\ for \mbox{$d$-sparse} instances, abbreviate it as~$\algpruknadet^d$, and formally introduce it in \Cref{alg:pruknadet}. Given~$d$ and \mbox{$s_{\max} \geq \max\{s_i \mid i \in E\}$}, the mechanism returns a set containing every agent~$i$ that belongs to the knapsack greedy set with a capacity restricted to~$C-d\cdot s_{\max}$ in the modified instance where its votes have been removed. For an illustration of the mechanism when~$d=1$, see \Cref{fig:pruknadet}.
\begin{algorithm}[h]
	\SetAlgoNoLine
	\KwIn{a knapsack instance~$\Gamma=(E,\bfs,C,W)$, where $W$ is \mbox{$d$-sparse} and~$s_i\leq s_{\max}$ for every~$i \in E$.}
	\KwOut{a subset~$T \subseteq E$ with~$s(T)\leq C$.}
	\KwRet $\{ i \in E \mid i \in \algkgreedy(E,\bfs,C-d\cdot s_{\max},W_{-i})\}$
	\caption{\nalgpruknadet\ for \mbox{$d$-sparse} instances ($\algpruknadet^d$)}
	\label{alg:pruknadet}
\end{algorithm}
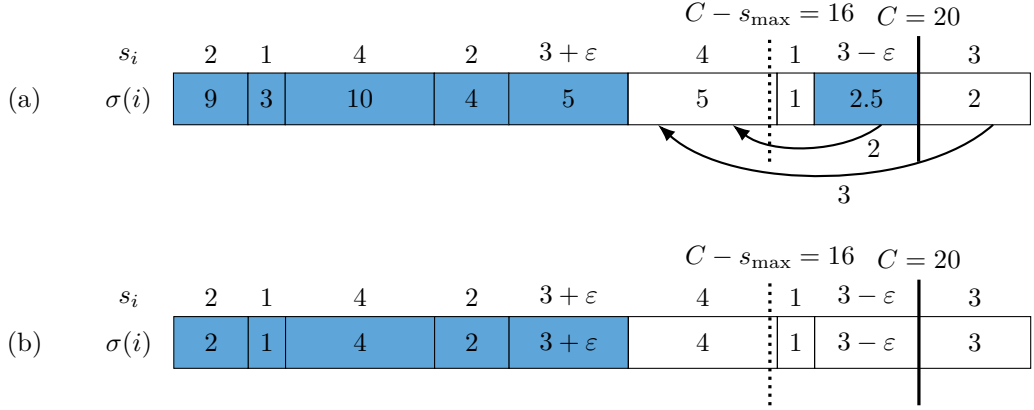
\begin{figure}[tb]
        \begin{subfigure}{\textwidth}
	\centering
	\begin{tikzpicture}[scale=0.98]
		\draw[fill = color1trans] (0,0) rectangle node[label={[label distance=0.1cm]90:\small$2$}] {\small$9$} (1,0.7);
		\draw[fill = color1trans] (1,0) rectangle node[label={[label distance=0.1cm]90:\small$1$}] {\small$3$} (1.5,0.7);
		\draw[fill = color1trans] (1.5,0) rectangle node[label={[label distance=0.1cm]90:\small$4$}] {\small$10$} (3.5,0.7);   
		\draw[fill = color1trans] (3.5,0) rectangle node[label={[label distance=0.1cm]90:\small$2$}] {\small$4$} (4.5,0.7);
		\draw[fill = color1trans] (4.5,0) rectangle node[label={[label distance=0.1cm]90:\small$3+\varepsilon$}] {\small$5$} (6.1,0.7);
		\draw (6.1,0) rectangle node[label={[label distance=0.1cm]90:\small$4$}] {\small$5$} (8.1,0.7);
		\draw (8.1,0) rectangle node[label={[label distance=0.1cm]90:\small$1$}] {\small$1$} (8.6,0.7);
		\draw[fill = color1trans] (8.6,0) rectangle node[label={[label distance=0.1cm]90:\small$3-\varepsilon$}] {\small$2.5$} (10,0.7);
		\draw (10,0) rectangle node[label={[label distance=0.1cm]90:\small$3$}] {\small$2$} (11.5,0.7);
		\draw [very thick, dotted](8,-0.5) -- (8,1.2) node[above] {\small$C-s_{\max}=16$};
		\draw [very thick, -](10,-0.5) -- (10,1.2) node[above] {\small$C=20$};
		\Text[x=-0.6,y=0.93]{\small$s_i$};
		\Text[x=-0.6,y=0.33]{\small$\score(i)$};
            \Text[x=-2,y=0.33]{\small(a)};
		\draw [thick, -Latex](11,0) .. controls (10,-.9) and (7.5,-.9) .. (6.5,0);
		\draw [thick, -Latex](9.5,0) .. controls (9,-.4) and (8,-.4) .. (7.5,0);
		\Text[x=9,y=-.95]{\small$3$};
		\Text[x=9.4,y=-.32]{\small$2$};
		\end{tikzpicture}
            \end{subfigure}
            \par\bigskip 
            \begin{subfigure}{\textwidth}
            \centering
            \begin{tikzpicture}[scale=0.98]
			\draw[fill = color1trans] (0,0) rectangle node[label={[label distance=0.1cm]90:\small$2$}] {\small$2$} (1,0.7);
			\draw[fill = color1trans] (1,0) rectangle node[label={[label distance=0.1cm]90:\small$1$}] {\small$1$} (1.5,0.7);
			\draw[fill = color1trans] (1.5,0) rectangle node[label={[label distance=0.1cm]90:\small$4$}] {\small$4$} (3.5,0.7);   
			\draw[fill = color1trans] (3.5,0) rectangle node[label={[label distance=0.1cm]90:\small$2$}] {\small$2$} (4.5,0.7);
			\draw[fill = color1trans] (4.5,0) rectangle node[label={[label distance=0.1cm]90:\small$3+\varepsilon$}] {\small$3+\varepsilon$} (6.1,0.7);
			\draw (6.1,0) rectangle node[label={[label distance=0.1cm]90:\small$4$}] {\small$4$} (8.1,0.7);
			\draw (8.1,0) rectangle node[label={[label distance=0.1cm]90:\small$1$}] {\small$1$} (8.6,0.7);
			\draw (8.6,0) rectangle node[label={[label distance=0.1cm]90:\small$3-\varepsilon$}] {\small$3-\varepsilon$} (10,0.7);
			\draw (10,0) rectangle node[label={[label distance=0.1cm]90:\small$3$}] {\small$3$} (11.5,0.7);
                \draw [very thick, dotted](8,-0.5) -- (8,1.2) node[above] {\small$C-s_{\max}=16$};
		      \draw [very thick, -](10,-0.5) -- (10,1.2) node[above] {\small$C=20$};
			\Text[x=-0.6,y=0.93]{\small$s_i$};
			\Text[x=-0.6,y=0.33]{\small$\score(i)$};
                \Text[x=-2,y=0.33]{\small(b)};
	\end{tikzpicture}
        \end{subfigure}
	\caption{Two examples of \nalgpruknadet\ for \mbox{$1$-sparse} instances with~$s_{\max}=4$. The notation follows that of \Cref{fig:knapsack-greedy}, votes between depicted agents are represented as arcs along with their scores, and~$\varepsilon>0$ is a small value. Colored agents are selected by the mechanism. Instance~(b), where agents have unit density and are thus sorted by index (not shown in the figure), illustrates the tightness of the analysis, as the guarantee of~$\frac{3}{5}$ implied by \Cref{thm:knapsack-lb-det} is reached by taking~$\varepsilon$ arbitrarily small.}
	\label{fig:pruknadet}
\end{figure}

Impartiality of \nalgpruknadet~follows directly, as the selection condition explicitly ignores the votes of an agent when determining its selection. The approximation ratio follows from the fact that every agent in the knapsack greedy set for a modified capacity of~$C-d\cdot s_{\max}$ remains in this set upon the deletion of its votes, along with \cref{item:performance-greedy} of \Cref{lem:kgreedy}. Finally, the correctness of the mechanism is shown by observing that only agents in the knapsack greedy set for the original instance, which constitute an independent set, are eligible for selection. 

Before proving \Cref{thm:knapsack-lb-det}, we state two simple properties in the following lemma.

\begin{lemma}
\label{lem:pruknadet}
	Let~$d\in \NN$ and~$\Gamma=(E,\bfs,C,W)$ be a knapsack instance such that~$W$ is \mbox{$d$-sparse} and~$s_i\leq s_{\max}$ for every~$i \in E$. Then, for every~$i \in E$ it holds that
	\begin{enumerate}[label=(\roman*)]
		\item if~$i\in \algkgreedy(E,\bfs,C-d\cdot s_{\max},W)$, then~$i\in \algkgreedy(E,\bfs,C-d\cdot s_{\max},W_{-v})$;\label[part]{item:lem-pruknadet-i}
		\item if~$i\in \algkgreedy(E,\bfs,C-d\cdot s_{\max},W_{-v})$, then~$i\in \algkgreedy(E,\bfs,C,W)$.\label[part]{item:lem-pruknadet-ii}
	\end{enumerate}
\end{lemma}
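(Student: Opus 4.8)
The plan is to reduce both statements to \cref{item:condition-kgreedy} of \Cref{lem:kgreedy}, which says that for a knapsack instance with capacity $\hat C$ we have $i\in\algkgreedy$ if and only if $\sum_{j\in R_i} s_j + s_i \le \hat C$, combined with a comparison of the greedy-predecessor set $R_i$ under $W$ and under $W_{-i}$. The key observation to set up first is the density monotonicity already used in the proof of \Cref{lem:kgreedyex}: removing the row of $i$ leaves $\score_{W_{-i}}(i)=\score_W(i)$ and weakly decreases $\score_{W_{-i}}(j)$ for every $j\ne i$, hence $\rho_{W_{-i},s}(i)=\rho_{W,s}(i)$ and $\rho_{W_{-i},s}(j)\preceq \rho_{W,s}(j)$ for $j\neq i$; consequently $R_i(\Gamma_{-i})\subseteq R_i(\Gamma)$, where I write $\Gamma_{-i}$ for the instance obtained by zeroing out $i$'s votes (note that $R_i$ does not depend on the capacity, only on $\bfs$ and $W$).

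For \cref{item:lem-pruknadet-i}, suppose $i\in\algkgreedy(E,\bfs,C-d\cdot s_{\max},W)$. By \cref{item:condition-kgreedy} of \Cref{lem:kgreedy} this means $\sum_{j\in R_i(\Gamma)}s_j + s_i \le C-d\cdot s_{\max}$, and since $R_i(\Gamma_{-i})\subseteq R_i(\Gamma)$ the same inequality holds with $R_i(\Gamma_{-i})$ in place of $R_i(\Gamma)$; applying \cref{item:condition-kgreedy} in the reverse direction to the instance $(E,\bfs,C-d\cdot s_{\max},W_{-i})$ yields $i\in\algkgreedy(E,\bfs,C-d\cdot s_{\max},W_{-i})$. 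This is essentially a repeat of the argument behind \Cref{lem:kgreedyex}, only for $\algkgreedy$ instead of $\algkgreedyex$ and at the reduced capacity.

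For \cref{item:lem-pruknadet-ii}, the extra ingredient is $d$-sparsity. I would first pin down which agents can lie in $R_i(\Gamma)\setminus R_i(\Gamma_{-i})$: if $w_{ij}=0$ then $\score_{W_{-i}}(j)=\score_W(j)$, so $\rho_{W_{-i},s}(j)=\rho_{W,s}(j)$, and since the densities of $i$ also coincide we get $j\in R_i(\Gamma)\iff j\in R_i(\Gamma_{-i})$. Hence every $j\in R_i(\Gamma)\setminus R_i(\Gamma_{-i})$ satisfies $w_{ij}>0$; as $W$ is $d$-sparse there are at most $d$ such agents, each of size at most $s_{\max}$, so $\sum_{j\in R_i(\Gamma)}s_j \le \sum_{j\in R_i(\Gamma_{-i})}s_j + d\cdot s_{\max}$. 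Now assume $i\in\algkgreedy(E,\bfs,C-d\cdot s_{\max},W_{-i})$; then $\sum_{j\in R_i(\Gamma_{-i})}s_j + s_i \le C-d\cdot s_{\max}$ by \cref{item:condition-kgreedy}, and combining the two bounds gives $\sum_{j\in R_i(\Gamma)}s_j + s_i \le C$, so \cref{item:condition-kgreedy} applied to $(E,\bfs,C,W)$ concludes $i\in\algkgreedy(E,\bfs,C,W)$.

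The only point requiring care — and the place I would slow down — is the tie-breaking order $\prec$ hidden inside the density comparisons: the inclusion $R_i(\Gamma_{-i})\subseteq R_i(\Gamma)$ and the equivalence $j\in R_i(\Gamma)\iff j\in R_i(\Gamma_{-i})$ for $j$ with $w_{ij}=0$ must be verified not only for strict inequalities of density values but also when values tie and the order $\prec_E$ decides; since the relevant $\rho$-values move weakly in a single direction while the index tie-break is fixed, these go through, but it is worth spelling out. Beyond that, both parts are routine consequences of \Cref{lem:kgreedy}.
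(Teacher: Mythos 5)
Your proposal is correct and follows essentially the same route as the paper: part~\cref{item:lem-pruknadet-i} is the identical argument via $R_i(\Gamma_{-i})\subseteq R_i(\Gamma)$ and \cref{item:condition-kgreedy} of \Cref{lem:kgreedy}, and for part~\cref{item:lem-pruknadet-ii} you prove the implication directly while the paper proves its contrapositive, but both rest on the same observation that $R_i(\Gamma)\setminus R_i(\Gamma_{-i})\subseteq\{j\mid w_{ij}>0\}$, so the two sums of sizes differ by at most $d\cdot s_{\max}$. Your remark about tie-breaking is well taken and goes through exactly as you say, since the relevant densities either coincide or move weakly in one direction while the order $\prec_E$ is fixed.
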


\begin{proof}
	Let~$d$ and~$\Gamma=(E,\bfs,C,W)$ be as in the statement.
	To see \cref{item:lem-pruknadet-i}, consider any agent \mbox{$i\in \algkgreedy(E,\bfs,C-d\cdot s_{\max},W)$}. From \cref{item:condition-kgreedy} of \Cref{lem:kgreedy}, we know that~$s_i + \sum_{j\in R_i(\Gamma)} s_j \leq C-d\cdot s_{\max}$. 
	Consider the modified instance \mbox{$\Gamma_{-i} \coloneqq (E,\bfs,C-d\cdot s_{\max},W_{-i})$}.
	Since
    \begin{align*}
        \score_{W_{-i}}(i) & = \score_W(i),\\
        \score_{W_{-i}}(j) & \leq \score_W(j) \qquad \text{for every } j \in E-i,
    \end{align*}
    it holds that~$\rho_{W_{-i},s}(i)=\rho_{W,s}(i)$ and~$\rho_{W_{-i},s}(j) \leq \rho_{W,s}(j)$ for every~$j \in E-i$. This implies that~$R_i(\Gamma_{-i}) \subseteq R_i(\Gamma)$. Therefore,
    \[
        s_i+\sum_{j\in R_i(\Gamma_{-i})}s_j \leq s_i+\sum_{j\in R_i(\Gamma)} s_j \leq C-d\cdot s_{\max}, 
    \]
    so \cref{item:condition-kgreedy} of \Cref{lem:kgreedy} implies that~$i\in \algkgreedy(\Gamma_{-i})$.
	
	To see \cref{item:lem-pruknadet-ii}, we consider any agent~$i \in E$ with~$i\notin \algkgreedy(\Gamma)$. \Cref{item:condition-kgreedy} of \Cref{lem:kgreedy} implies that
	\begin{equation}
		s_i + \sum_{j\in R_i(\Gamma)} s_j > C.\label[ineq]{eq:cap-violation}
	\end{equation}
	Consider now the instance~$\Gamma_{-i} \coloneqq (E,\bfs,C,W_{-i})$.
	For and agent~$j \in E$ with~\mbox{$w_{ij}=0$}, we know that~$\score_{W_{-i}}(j) = \score_W(j)$ and therefore~$\rho_{W_{-i},s}(j) = \rho_{W,s}(i)$. This implies that \mbox{$R_i(\Gamma)\cap \{j\in E \mid w_{ij}=0\} \subseteq R_i(\Gamma_{-i})$}. Therefore,
	\[
		s_i+\sum_{j\in R_i(\Gamma_{-i})}s_j  \geq s_i + \sum_{j\in R_i(\Gamma)}s_j - \sum_{j\in E: w_{ij}>0}s_j > C - d\cdot s_{\max}.
	\]
	Indeed, the first inequality follows directly from the previous inclusion, while the second one comes from \cref{eq:cap-violation}, the fact that~$W$ is \mbox{$d$-sparse}, and the inequality~$s_i\leq s_{\max}$ for every~$i \in E$. We conclude from \cref{item:condition-kgreedy} of \Cref{lem:kgreedy} that~$i\notin \algkgreedy(E,\bfs,C-d\cdot s_{\max},W_{-v})$.
\end{proof}

We are now ready to prove \Cref{thm:knapsack-lb-det}.

\begin{proof}[Proof of \Cref{thm:knapsack-lb-det}]
	Let~$d\in \NN$ be an integer; we claim the result for \nalgpruknadet\ for \mbox{$d$-sparse} instances. Let~$\Gamma=(E,\bfs,C,W)$ be a knapsack instance such that~$W$ is \mbox{$d$-sparse} and~$s_i\leq s_{\max}$ for every~$i \in E$, with~$s_{\max}(d+1)< C$.
	
	We first show that the mechanism is well defined, i.e., that~$s(\algpruknadet^d(\Gamma)) \leq C$. Indeed, we have that
	\[
		s(\algpruknadet^d(\Gamma)) = \sum_{\substack{i \in E:\\ i\in \algkgreedy(E,\bfs,C-d\cdot s_{\max},W_{-i})}} s_i \leq \sum_{i\in \algkgreedy(E,\bfs,C,W)} s_i \leq C,
	\]
	where the equality follows from the definition of~$\algpruknadet^d(\Gamma)$, the first inequality from \cref{item:lem-pruknadet-ii} of \Cref{lem:pruknadet}, and the last inequality from the definition of the knapsack greedy set.
	
	Impartiality follows directly from the definition of the mechanism, as an agent~$i\in E$ is selected if and only if it belongs to~$\algkgreedy(E,\bfs,C-d\cdot s_{\max},W_{-i})$, which is independent of its votes. Formally, for any knapsack instance~$\Gamma'=(E,\bfs,C,W')$ with a \mbox{$d$-sparse} score matrix~$W'$ and an agent~$i \in E$ such that~$W_{-i}=W'_{-i}$, it holds that
	\begin{align*}
		i \in \algpruknadet^d(\Gamma) & \quad \Longleftrightarrow \quad i\in \algkgreedy(E,\bfs,C-d\cdot s_{\max},W_{-i}) \\
		& \quad \Longleftrightarrow \quad v\in \algkgreedy(E,\bfs,C-d\cdot s_{\max},W'_{-i}) \\
		& \quad \Longleftrightarrow \quad i \in \algpruknadet^d(\Gamma').
	\end{align*}
	
	Finally, to prove the approximation ratio we claim that 
	\begin{equation}
		s(\algkgreedy(E,\bfs,C-d\cdot s_{\max},W)) \geq \min\{ s(E), C-(d+1)\cdot s_{\max}\}.\label[ineq]{eq:kgreedy-reduced}
	\end{equation}
	Indeed, letting~$\Gamma' \coloneqq (E,\bfs,C-d\cdot s_{\max},W)$ denote this instance with reduced capacity, we know from the definition of the knapsack greedy set that~$\algkgreedy(\Gamma') = \bigcup_{j=1}^{k(\Gamma')} \{i_j(\Gamma)\}$.
	If we have~$s(E) \leq C-d\cdot s_{\max}$, then~$\algkgreedy(\Gamma') = E$ and the claim follows directly. Otherwise, we know that 
	\[
		\sum_{j=1}^{k(\Gamma')+1} s_{i_j(\Gamma)} > C-d\cdot s_{\max}.
	\]
	Combining this inequality with the fact that~$s_i\leq s_{\max}$ for every~$i \in E$, we obtain
	\[
			s(\algkgreedy(E,\bfs,C-d\cdot s_{\max},W)) = \sum_{j=1}^{k(\Gamma')} s_{i_j(\Gamma)} = \sum_{j=1}^{k(\Gamma')+1} s_{i_j(\Gamma)} - s_{i_{k(\Gamma')+1}(\Gamma)} > C-(d+1)\cdot s_{\max},
	\] 
	thus \cref{eq:kgreedy-reduced} follows.
	
	We now conclude using \cref{eq:kgreedy-reduced}. 
	From \cref{item:lem-pruknadet-i} of \Cref{lem:pruknadet}, we know that
	\begin{align}
		\score_W(\algpruknadet^d(\Gamma)) & = \sum_{\substack{i \in E:\\ i\in \algkgreedy(E,\bfs,C-d\cdot s_{\max},W_{-i})}} \score_W(i) \nonumber \\
		& \geq \score_W(\algkgreedy(E,\bfs,C-d\cdot s_{\max},W)).\label[ineq]{eq:lb-pruknadet}
	\end{align}
	If~$s(E) \leq C-(d+1)\cdot s_{\max}$, \cref{eq:kgreedy-reduced} implies that
	\[
		\score_W(\algkgreedy(E,\bfs,C-d\cdot s_{\max},W)) = \score_W(E) = \score_W(\OPT(\Gamma)),
	\]
	so the mechanism is trivially optimal due to \cref{eq:lb-pruknadet}. Otherwise, \cref{eq:kgreedy-reduced} and \cref{item:performance-greedy} of \Cref{lem:kgreedy} imply that
	\begin{align*}
		\score_W(\algkgreedy(E,\bfs,C-d\cdot s_{\max},W)) & \geq \frac{C-(d+1)\cdot s_{\max}}{C} \score_W(\OPT(\Gamma)) \\
        & = \bigg( 1 - \frac{s_{\max}}{C}(d+1)\bigg) \score_W(\OPT(\Gamma)),
	\end{align*}
	so we conclude from \cref{eq:lb-pruknadet} once again.
\end{proof}

It is not hard to see that the analysis is again tight. Given~$d$,~$C$, and~$s_{\max}$, consider an instance where all agents have unit density (equal size as indegree). The first agents in the greedy order have a total size of~$C-(d+1)s_{\max}+\varepsilon$ for~$\varepsilon>0$, while the subsequent agents---among which the first one has size~$s_{\max}$---do not vote for previous agents.
The approximation ratio converges to the one stated in \Cref{thm:knapsack-lb-det} as~$\varepsilon$ approaches~$0$; an example is shown in \Cref{fig:pruknadet}b.

\section{Matroid Constraints}
\label{sec:matroids}

We now turn our attention to matroids, one of the most well-studied classes of independence systems where maximal-size independent sets have all the same cardinality.
Instances where the independence system is defined in this way will be called \emph{matroid instances}.
Our main result for general matroids is the following.

\begin{theorem}
\label{thm:matroids-lb}
	There exists an impartial and \mbox{$\frac{1}{2}$-optimal} selection mechanism on matroid instances with \mbox{$1$-sparse} score matrices.
\end{theorem}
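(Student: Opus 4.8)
The plan is to define a matroid analogue of \nalgprukna\ (hence of the plurality‑with‑runners‑up mechanism of \citet{tamura2014impartial}); call it \nalgprugen. Let $\alggreedy(\Gamma)$ be the \emph{matroid greedy basis} of $\Gamma=(E,\indfam,W)$: process the agents in order of non‑increasing score $\score_W(\cdot)$, breaking ties by $\prec_E$, and add an agent to the current set whenever independence is preserved. The key elementary fact I would record first is the usual characterisation via the closure operator $\spn$ of the matroid: writing $L_i(\Gamma)\coloneqq\{f\in E\mid \score_W(f)\succ\score_W(i)\}$ for the set of agents ranked strictly before $i$, one has $i\in\alggreedy(\Gamma)$ if and only if $i\notin\spn(L_i(\Gamma))$. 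The mechanism then builds $S(\Gamma)\coloneqq\{i\in E\mid i\in\alggreedy(E,\indfam,W_{-i})\}$ exactly as the loop in \Cref{alg:prukna-dsparse} builds its set, and outputs $\alggreedy(\Gamma)$ with probability $\tfrac12$ and $S(\Gamma)\setminus\alggreedy(\Gamma)$ with probability $\tfrac12$.

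\textbf{Key steps.} (1) \emph{$\alggreedy(\Gamma)\subseteq S(\Gamma)$} (the matroid counterpart of \Cref{lem:kgreedyex}): since $W$ is $1$‑sparse, replacing $W$ by $W_{-i}$ only lowers the score of the single agent $j_i$ that $i$ votes for and leaves $\score_W(i)$ unchanged, so $L_i(E,\indfam,W_{-i})\subseteq L_i(\Gamma)$, hence $i\notin\spn(L_i(\Gamma))$ implies $i\notin\spn(L_i(E,\indfam,W_{-i}))$. (2) \emph{Feasibility: $S(\Gamma)\setminus\alggreedy(\Gamma)$ is independent} — this is the crux. If $i\in S(\Gamma)\setminus\alggreedy(\Gamma)$ then $i$ must vote for exactly one agent $j_i$ (otherwise $W_{-i}=W$ and $i\in\alggreedy(\Gamma)$), and one checks, again using $1$‑sparsity and the tie‑breaking conventions, that $j_i\in L_i(\Gamma)$, that $j_i\in\alggreedy(\Gamma)$, and that $j_i$ is \emph{critical} for $i$ in the sense that $i\in\spn(L_i(\Gamma))$ but $i\notin\spn\bigl(L_i(\Gamma)\setminus\{j_i\}\bigr)$. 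Now suppose for contradiction that $C\subseteq S(\Gamma)\setminus\alggreedy(\Gamma)$ is a circuit, and let $i^\star$ be its $\prec_{\score_W}$‑last element; then $C\setminus\{i^\star\}\subseteq L_{i^\star}(\Gamma)$, and since $j_{i^\star}\in\alggreedy(\Gamma)$ is disjoint from $C$ we even get $C\setminus\{i^\star\}\subseteq L_{i^\star}(\Gamma)\setminus\{j_{i^\star}\}$, so $i^\star\in\spn(C\setminus\{i^\star\})\subseteq\spn\bigl(L_{i^\star}(\Gamma)\setminus\{j_{i^\star}\}\bigr)$, contradicting criticality. (3) \emph{Impartiality and $\tfrac12$‑optimality.} By construction $\alggreedy(\Gamma)$ and $S(\Gamma)\setminus\alggreedy(\Gamma)$ are disjoint, so $\MEC_i(\Gamma)=\tfrac12\,\llbracket i\in S(\Gamma)\rrbracket$; since membership in $S(\Gamma)$ depends only on $W_{-i}$, impartiality follows. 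Finally $\alggreedy(\Gamma)$ is a maximum‑score basis, so $\score_W(\alggreedy(\Gamma))\geq\score_W(\OPT(\Gamma))$ (scores are non‑negative), and it is returned with probability $\tfrac12$, giving $\EE[\score_W(\nalgprugen(\Gamma))]\geq\tfrac12\score_W(\OPT(\Gamma))$; this is best possible by \Cref{prop:ub-gral}.

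\textbf{Main obstacle.} The feasibility argument in step~(2) is the heart of the proof and the place where genuine matroid structure is needed — it replaces the elementary ``pack into $d+2$ knapsacks'' argument behind \Cref{thm:knapsack-lb-dsparse}. Within it, the delicate points are (i) establishing that every runner‑up $i\in S(\Gamma)\setminus\alggreedy(\Gamma)$ has a well‑defined critical voted‑for agent $j_i$ that moreover lies in $\alggreedy(\Gamma)$, which requires a careful comparison of the greedy orders of $\Gamma$ and $\Gamma_{-i}$ (only $j_i$'s rank can change, and only downward), and (ii) ensuring the two output sets are genuinely disjoint, so that each selected agent receives probability exactly $\tfrac12$ and no optimality is lost; the circuit/closure manipulation itself is then short once the right invariants are in place.
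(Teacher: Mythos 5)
Your proposal is correct, and while the mechanism itself (select with probability $\tfrac12$ every agent $i$ with $i\in\OPT(E,\indfam,W_{-i})$, i.e.\ your set $S(\Gamma)$) induces exactly the same marginals as the paper's \nalgprugen, your feasibility argument is genuinely different. The paper proves only that the marginal vector lies in the matroid independence polytope: it bounds $\bigl|\bigcup_{e\in E}B^{-e}\bigr|\leq 2|B|$ via the strong basis-exchange property, restricts to arbitrary $S\subseteq E$ to get $\sum_{e\in S}p_e\leq r(S)$, and then invokes Edmonds' integrality of the independence polytope (\Cref{lem:birkhoff-matroids}) to obtain some (non-explicit) decomposition into independent sets. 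You instead exhibit the decomposition explicitly: $S(\Gamma)$ is the disjoint union of the greedy basis $\OPT(\Gamma)$ and the runner-up set $S(\Gamma)\setminus\OPT(\Gamma)$, and the latter is shown independent by a direct circuit/closure argument. The key invariant — that each runner-up $i$ has a unique \emph{critical} voted-for agent $j_i\in\OPT(\Gamma)\cap L_i(\Gamma)$ with $i\in\spn(L_i(\Gamma))$ but $i\notin\spn(L_i(\Gamma)\setminus\{j_i\})$ — is correctly derived from $L_i(\Gamma)\setminus\{j_i\}\subseteq L_i(\Gamma_{-i})\subseteq L_i(\Gamma)$ (this is where $1$-sparsity is essential) together with the fact that $\OPT(\Gamma)\cap L_i(\Gamma)$ spans $L_i(\Gamma)$; taking the greedy-last element of a hypothetical circuit then yields the contradiction. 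Your route buys a fully constructive two-outcome lottery and avoids both the polytope machinery and the basis-exchange counting; the paper's route is more modular (the polytope lemma is reused conceptually elsewhere) and its counting bound $\bigl|\bigcup_e B^{-e}\bigr|\leq 2|B|$ does not need the explicit identification of critical elements. The remaining steps (impartiality from $S(\Gamma)$ depending only on $W_{-i}$, $\tfrac12$-optimality from $\OPT(\Gamma)\subseteq S(\Gamma)$, tightness from \Cref{prop:ub-gral}) coincide with the paper's.
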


As in the knapsack setting, our result is based on the \emph{plurality with runners-up} mechanism studied by \citet{tamura2014impartial} and \citet{cembrano2022optimal}.
The mechanism satisfying \Cref{thm:matroids-lb}, \nalgprugen, is referred to as \algprugen\ and formally introduced in \Cref{alg:prugen}. 
Roughly speaking, \algprugen\ selects with probability~$\frac{1}{2}$ agents that would belong to the maximum score independent set, after removal of the votes they cast.
Additionally, it is well-known that the maximum score independent set can be computed by a greedy algorithm \citep{oxley1992matroid}, thus contributing to the simplicity of \algprugen.

\begin{algorithm}[h]
	\SetAlgoNoLine
	\KwIn{a matroid instance~$\Gamma=(E,\indfam,W)$, where $W$ is \mbox{$1$-sparse}.}
	\KwOut{a vector~$p \in [0,1]^E$ with~$\sum_{e\in S}p_e \leq r(S)$ for every~$S\subseteq E$.}
	$\smash{p \gets 0^E}$\;
	\For{$e \in E$}{
		\If{$e \in \OPT(E,\indfam,W_{-e})$}{
			$p_e \gets \frac{1}{2}$\;
		}
	}
	{\bfseries return}~$p$
	\caption{\nalgprugen~($\algprugen$)}
	\label{alg:prugen}
\end{algorithm}%%

Impartiality of \nalgprugen\ is straightforward, as the selection of an agent~$e\in E$ depends on whether $e \in \OPT(E,\indfam,W_{-e})$ or not, which explicitly ignores the votes of this agent. The approximation ratio is not hard to see either, as we show that for an instance~$\Gamma$, all agents in~$\OPT(\Gamma)$ satisfy the former condition and are thus selected with probability~$\frac{1}{2}$. 
The correctness of the mechanism, in that it induces a probability distribution over independent sets, constitutes the most challenging claim.

To establish the correctness of the mechanism, we first need some basic definitions and results for matroids. 
For a matroid~$(E,\indfam)$ with rank function~$r\colon 2^E \to \NN_0$ and a subset~$S\subseteq E$, the \emph{span} of~$S$ is defined as
\[
	\spn(S) \coloneqq \{e \in E\mid r(S+e) = r(S)\},
\]
i.e., the set of all agents that, when added to~$S$, do not increase its rank.
Similarly, a base of $(E,\cI)$ is an independent set of maximum size.
The following lemma keeps track of some matroid-theoretic properties that will be used throughout.
In particular, it states that, for any instance $\Gamma=(E,\indfam,W)$ and any agent~$e\in E$, the union of maximum weight independent sets when an agent is omitted from the input is no larger than twice the original maximum weight independent set. 

\newpage
\begin{lemma}
\label{lem:matroid-basics}
Let~$(E,\indfam,W)$ be a matroid instance. Then,
\begin{enumerate}[label=(\roman*)]
    \item for every pair of bases~$B,B' \subseteq E$ and every~$x \in B\setminus B'$, there exists~$y \in B'\setminus B$ such that~$B'+x-y$ and~$B+y-x$ are bases.\label[part]{item:strong-basis-exchange}
	\item for every~$S,T\subseteq E$ with~$S\subseteq T$, it holds that~$\spn(S) \subseteq \spn(T)$;\label[part]{item:span-subset}
	\item $e_i(\Gamma)\in \OPT(\Gamma)$ if and only if~$e_i(\Gamma) \notin \spn(\{e_j(\Gamma) \mid j\in [i-1]\})$, for every $i \in [m]$;\label[part]{item:greedy-span}
    \item $\OPT(\Gamma)$ is a unique maximum weight basis when breaking ties according to the total order on $2^E$; \label[part]{item:OPT-is-basis}
	\item 
    Let~$B\subseteq E$ be the unique maximum weight basis of~$(E,\cI)$ with weights given by~$\sigma_W$. 
    For~$e \in E$, we consider~$B^{-e}\subseteq E-e$, the unique maximum weight basis of $(E-e,\cI|_{E-e})$ with weights given by~$\sigma_W$. 
    Then 
 $\big| \bigcup_{e \in E} B^{-e} \big| \leq 2 |B|$.\label[part]{item:greedy-runners-up}
\end{enumerate}	
\end{lemma}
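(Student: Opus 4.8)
The plan is to establish the five items of \Cref{lem:matroid-basics} mostly by quoting standard matroid theory, reserving the real work for \cref{item:greedy-runners-up}. Item \cref{item:strong-basis-exchange} is the \emph{symmetric (strong) basis exchange} property, a classical fact that I would simply cite (e.g. from \citet{oxley1992matroid}). Item \cref{item:span-subset} is immediate: if $e \in \spn(S)$ then $r(S) = r(S+e)$; by submodularity (or directly from $S \subseteq T$, $S+e \subseteq T+e$, and monotonicity) one gets $r(T) \le r(T+e) \le r(T) + r(S+e) - r(S) = r(T)$, so $e \in \spn(T)$. Item \cref{item:greedy-span} is the correctness invariant of the matroid greedy algorithm: processing agents in non-increasing order of $\sigma_W$ (with the fixed tie-break $\prec_E$), agent $e_i(\Gamma)$ is taken exactly when it is independent of the previously chosen prefix, and since everything before $e_i$ in the order is exactly $\{e_j(\Gamma) \mid j \in [i-1]\}$, independence of the prefix-plus-$e_i$ is equivalent to $e_i \notin \spn(\{e_j(\Gamma)\mid j\in[i-1]\})$. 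Item \cref{item:OPT-is-basis} follows because the greedy output is always a base (it is a maximal independent set, and in a matroid all maximal independent sets are bases), and the $\prec_E$ tie-breaking makes the maximizer unique; more carefully, one checks that among all maximum-weight bases the greedy one is the $\prec_E$-smallest, which is exactly what the $\arg\max$ convention from the preliminaries selects.

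The substantive claim is \cref{item:greedy-runners-up}: $\bigl|\bigcup_{e\in E} B^{-e}\bigr| \le 2|B|$. I would argue as follows. For $e \in B$, note $B$ itself may fail to lie in $E-e$, but $B^{-e}$ is a base of the deletion $(E-e, \cI|_{E-e})$, which has rank either $|B|$ or $|B|-1$; in either case $B^{-e} \subseteq E-e$. For $e \notin B$, the matroid $(E-e,\cI|_{E-e})$ still has rank $|B|$ and in fact $B$ is a maximum-weight base of it, so by uniqueness (item \cref{item:OPT-is-basis} applied to the deletion) $B^{-e} = B$. Hence
\[
    \bigcup_{e\in E} B^{-e} \;=\; B \;\cup\; \bigcup_{e \in B} B^{-e},
\]
and it remains to show $\bigl|\bigcup_{e\in B} B^{-e} \setminus B\bigr| \le |B|$, since then the union has size at most $|B| + |B| = 2|B|$. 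The natural route is to produce, for each $e \in B$, an injection-like correspondence: I claim that for each $e \in B$, the set $B^{-e} \setminus B$ has size at most one, and moreover the element it contributes (when it is nonempty) can be ``charged'' to the distinct slot $e$. The size-at-most-one part should come from a base-exchange / cardinality count: $B \setminus e$ is independent in $E-e$ and has size $|B|-1$, so $B^{-e}$ (of size $|B|$ or $|B|-1$) differs from $B \setminus e$ by at most one added element; thus $|B^{-e}\setminus(B\setminus e)| \le 1$, i.e. $|B^{-e}\setminus B|\le 1$. Summing the crude bound $|B^{-e}\setminus B|\le 1$ over $e\in B$ already gives $\bigl|\bigcup_{e\in B}(B^{-e}\setminus B)\bigr|\le |B|$, which suffices — so in fact a matching is not even needed, just the per-$e$ bound plus a union bound.

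The main obstacle will be the per-$e$ bound $|B^{-e}\setminus B|\le 1$, and specifically handling the case where deleting $e\in B$ drops the rank (so $|B^{-e}|=|B|-1$) versus the case where it does not (so $|B^{-e}|=|B|$). In the rank-drop case, $B^{-e}$ and $B\setminus e$ are both bases of $E-e$ of the same size $|B|-1$; to bound their symmetric difference one cannot use cardinality alone, so I would instead invoke optimality of $B^{-e}$ together with the exchange property \cref{item:strong-basis-exchange}: if $B^{-e}$ and $B\setminus e$ differed in two or more elements, a weight-improving exchange (or a tie-breaking-improving exchange, using $\prec_E$) between them would contradict the uniqueness of $B^{-e}$ as the $\prec_E$-minimal maximum-weight base of $E-e$. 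The clean way to phrase this: $B\setminus e$ is a base of $E-e$; $B^{-e}$ is the unique max-weight base; if $B^{-e}\ne B\setminus e$ pick $x\in B^{-e}\setminus(B\setminus e)$, apply \cref{item:strong-basis-exchange} to swap, and show the swap cannot strictly help nor can there be two independent such swaps without violating optimality/uniqueness — pinning $|B^{-e}\setminus B|\le 1$. I expect writing this exchange argument carefully (keeping track of the lexicographic tie-break so the uniqueness is genuinely usable) to be the delicate part; the rest is bookkeeping.
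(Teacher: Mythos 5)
Your treatment of \cref{item:strong-basis-exchange}--\cref{item:OPT-is-basis} and your overall decomposition of \cref{item:greedy-runners-up} (handle $e \notin B$ via uniqueness so that $B^{-e}=B$, then prove the per-element bound $|B^{-e}\setminus B|\le 1$ for $e \in B$ and take a union bound) match the paper's proof. The gap is in how you justify that per-element bound in the case where deleting $e\in B$ does \emph{not} drop the rank, i.e.\ $|B^{-e}|=|B|$. You claim that since $B\setminus e$ is independent of size $|B|-1$ in $E-e$, the basis $B^{-e}$ ``differs from $B\setminus e$ by at most one added element.'' That does not follow from cardinality: a basis of size $|B|$ in $E-e$ can be disjoint from $B\setminus e$ (take $U_{2,4}$ on $\{a,b,c,d\}$ with $B=\{a,b\}$ and $e=a$; then $\{c,d\}$ is a basis of $E-a$). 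Augmentation tells you \emph{some} one-element extension of $B\setminus e$ into $B^{-e}$ is a basis, not that $B^{-e}$ itself is such an extension. This is precisely the case where the optimality-plus-exchange argument is unavoidable, and it is the case your sketch leaves to a false cardinality count. (Conversely, the rank-drop case, where you plan to spend the delicate exchange work, is the easy one: there $e$ is a coloop, $B^{-e}+e$ is a basis of $E$, and comparing $\sigma_W(B)\ge\sigma_W(B^{-e})+\sigma_W(e)$ with $\sigma_W(B^{-e})\ge\sigma_W(B-e)$ forces $B^{-e}=B-e$ outright by uniqueness.)

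The missing argument for the equal-rank case is the one the paper gives: suppose $|B^{-e}\setminus B|=|B\setminus B^{-e}|\ge 2$, pick $e^1\in B\setminus B^{-e}$ with $e^1\ne e$ (possible since the difference has size at least $2$), and apply \cref{item:strong-basis-exchange} to obtain $e^2\in B^{-e}\setminus B$ with $B-e^1+e^2$ and $B^{-e}-e^2+e^1$ both bases (the latter lies in $E-e$ because $e^1\ne e$). Optimality of $B$ and $B^{-e}$ bounds each new basis from above, while the two swaps together preserve total weight, so both weights are tied; the lexicographic tie-break then makes one of the two new bases strictly smaller than $B$ or $B^{-e}$ in $\prec_E$, contradicting uniqueness. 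You have all the ingredients for this (you invoke exactly this style of argument for the other case), but as written the proposal applies the real tool where it is not needed and a non-argument where it is.
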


\begin{proof}
\Cref{item:strong-basis-exchange,item:span-subset} are well-known properties of matroids, while \cref{item:greedy-span,item:OPT-is-basis} are a direct consequence of the fact that the greedy algorithm computes an optimal solution; see, e.g., \citet{oxley1992matroid}.
Hence, we only prove \cref{item:greedy-runners-up}.

We begin by claiming that~$e \in B$ implies that $|B^{-e}\Delta B|\leq 2$.
We consider two cases. 

\noindent\emph{Case 1:}~$r(E-e)<r(E)$.  In this case,~$e$ is in every basis of~$(E,\cI)$.
In particular,~$B^{-e} + e$ is a basis of~$E$. 
Since~$B^{-e}$ are~$B$ optimal in~$(E-e,\cI)$ and~$(E,\cI)$ respectively, this implies that~$B=B^{-e}+e$.
Hence, we obtain that $|B\Delta B^{-e}|\leq 1$ and the claim follows.

\noindent\emph{Case 2:} $r(E-e)=r(E)$.
Suppose, for the sake of contradiction, that $|B^{-e}\Delta B| > 2$.
We observe the following:
\begin{equation}
|B^{-e}\setminus B|=|B^{-e}|-|B^{-e} \cap B|=  |B|-|B^{-e} \cap B| = |B \setminus B^{-e}| . \label[eqn]{eq:same-size-difference}
\end{equation}    
Combining \cref{eq:same-size-difference} with the fact that $|B^{-e}\Delta B| > 2$, we obtain that $|B^{-e}\setminus B|= |B\setminus B^{-e}|\geq 2$.
Thus, we can choose $e^1 \in B \setminus B^{-e}$ that is not~$e$.
By \cref{item:strong-basis-exchange}, we obtain that there is $e^2 \in B_{-e} \setminus B$ such that $B^1 \coloneqq B - e^1 + e^2$ and $B^2 \coloneqq B^{-e} - e^2 + e^1$ are bases.
Optimality of~$B$ and~$B^{-e}$ lets us obtain that $\sigma_W (B^1) \leq \sigma_W (B)$ and $\sigma_W (B^2) \leq \sigma_W (B_{-e})$.
Furthermore,
\[
\sigma_W (B^1) + \sigma_W (B^2) = \sigma_W(B) + \sigma_W (B^{-e}),
\]
implying that $\sigma_W (B) = \sigma_W(B^1)$ and $\sigma_W (B^{-e}) = \sigma_W(B^2)$.
Naturally, in the total order over~$E$,~$e^1$ is either smaller or larger than~$e^2$.
If~$e^1$ is smaller than~$e^2$, this implies that $B^{2} \prec_E B^{-e}$, contradicting the optimality of~$B^{-e}$.
Similarly, if~$e^2$ is smaller than~$e^1$, we obtain that~$B^1 \prec_E B$, contradicting the optimality of~$B$.
Therefore, we obtain that if~$e \in B$, then $|B^{-e} \Delta B|\leq 2$.

To conclude, we observe that (1) if~$e\notin B$, we have that~$B=B^{-e}$, and (2) by the previous claim if~$e \in B$, then $|B^{-e} \setminus B| \leq 1$.
The combination of these two observations yields
\[\Bigg|\bigcup_{e \in E} B^{-e}\Bigg|=\Bigg|\bigcup_{e \notin B} B^{-e} \cup \bigcup_{e \in B} B^{-e} \Bigg| = \Bigg|B \cup \bigcup_{e \in B} (B^{-e}\setminus B) \Bigg| \leq |B|+\sum_{e\in B}|B^{-e}\setminus B| \leq 2|B|,\]
as desired.
\end{proof}

The previous lemma allows us to conclude that, for any set~$S\subseteq E$, the total probability assigned by \nalgprugen\ to agents in this set is no larger than its rank~$r(S)$. 
We now show a second result used in the proof of \Cref{thm:matroids-lb}, stating that we can translate a probability distribution over agents into a probability distribution over independent sets, so that the previous result suffices for correctness; see \Cref{lem:birkhoff-matroids}. 
To prove this, we exploit the \emph{independence polytope} of a matroid, enabling us to represent our probability assignment as a convex combination of extreme points corresponding to independent sets.

\begin{lemma}
\label{lem:birkhoff-matroids}
    Let~$(E,\indfam)$ be a matroid with rank function~$r$ and let~$\bfp \in [0,1]^E$ be such that, for every~$S \subseteq E$,  it holds that $\sum_{e \in S} p_e \leq r(S)$. Then, there exists a probability distribution~$\bfx \in [0,1]^{2^E}$ such that~$\sum_{S\subseteq E}x_S = \sum_{S\in \indfam}x_S = 1$ and~$\sum_{S\subseteq E: e\in S}x_S = p_e$ for all~$e \in E$.
\end{lemma}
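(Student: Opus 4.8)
The plan is to prove Lemma~\ref{lem:birkhoff-matroids} by invoking the polyhedral description of the \emph{independence polytope} of a matroid. Recall that for a matroid $(E,\indfam)$ with rank function $r$, the independence polytope is
\[
    P(\indfam) \coloneqq \Bigl\{ y \in \RR^E_{\geq 0} \;\Big\vert\; \textstyle\sum_{e\in S} y_e \leq r(S) \text{ for all } S \subseteq E \Bigr\},
\]
and a classical theorem of Edmonds states that $P(\indfam)$ is exactly the convex hull of the incidence vectors $\{\mathbf{1}_I \mid I \in \indfam\}$ of the independent sets. Given $\bfp \in [0,1]^E$ satisfying $\sum_{e\in S}p_e \leq r(S)$ for every $S\subseteq E$, the hypothesis says precisely that $\bfp \in P(\indfam)$ (nonnegativity is part of $\bfp \in [0,1]^E$). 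Hence $\bfp$ is a convex combination of incidence vectors of independent sets: there exist independent sets $I_1,\dots,I_t \in \indfam$ and coefficients $\lambda_1,\dots,\lambda_t \geq 0$ with $\sum_{k=1}^t \lambda_k = 1$ and $\bfp = \sum_{k=1}^t \lambda_k \mathbf{1}_{I_k}$.

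From this convex combination I would define the desired distribution $\bfx \in [0,1]^{2^E}$ by setting $x_S \coloneqq \sum_{k : I_k = S} \lambda_k$ for each $S \subseteq E$ (summing the coefficients of all occurrences of $S$ among the $I_k$, and $x_S = 0$ if $S$ is not among them). Then $\sum_{S\subseteq E} x_S = \sum_{k=1}^t \lambda_k = 1$, and since every $I_k$ is independent we get $\sum_{S\in\indfam} x_S = 1$ as well, so all the mass sits on independent sets. Finally, for each $e\in E$,
\[
    \sum_{S\subseteq E : e\in S} x_S = \sum_{k : e \in I_k} \lambda_k = \sum_{k=1}^t \lambda_k (\mathbf{1}_{I_k})_e = \Bigl(\textstyle\sum_{k=1}^t \lambda_k \mathbf{1}_{I_k}\Bigr)_e = p_e,
\]
which is exactly the marginal condition required. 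This completes the argument.

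The only real content is the appeal to Edmonds' matroid polytope theorem; everything else is bookkeeping. I would cite it from a standard reference (e.g.\ \citet{oxley1992matroid,welsh2010matroid}, or Schrijver's book) rather than reprove it. If the paper prefers a self-contained argument, an alternative is to prove the claim by induction on $|E|$ or on $\sum_e p_e$ using matroid truncation/contraction, but this is considerably longer and I expect the authors simply quote the polytope characterization. So the ``main obstacle'' here is not really an obstacle: it is recognizing that the hypothesis is verbatim the defining inequality system of $P(\indfam)$, after which integral-vertex structure of the matroid polytope does all the work. One minor point worth a sentence in the write-up: the inequalities $p_e \leq 1$ follow from the rank constraint with $S = \{e\}$ (since $r(\{e\}) \leq 1$), so the stated hypotheses are consistent and indeed place $\bfp$ in $P(\indfam)$.
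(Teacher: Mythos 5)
Your proposal is correct and follows essentially the same route as the paper: the paper likewise observes that the hypothesis places $\bfp$ in the independence polytope, invokes Edmonds' theorem that this polytope equals the convex hull of incidence vectors of independent sets, and reads off the distribution $\bfx$ from the resulting convex combination. Your remark that $p_e \leq 1$ follows from $r(\{e\})\leq 1$ is a nice touch but not needed beyond consistency checking.
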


\begin{proof}
	Let~$(E,\indfam)$ and~$\bfp$ be as in the statement, and let
    \[
        \calQ \coloneqq \Bigg\{ \bfq \in [0,1]^E \;\Big\vert\; \sum_{e\in S} q_e \leq r(S) \text{ for every } S\subseteq E\Bigg\}.
    \]
    Further, let~$\calQ' \coloneqq \text{conv}(\{ \bfz^S \mid S \in \indfam\})$ be the\emph{ independence set polytope} of the matroid, where~$\text{conv}(A)$ denotes the convex hull of the set~$A$ and~$\bfz^S \in \{0,1\}^E$ is simply defined as
	\[
		z^S_e \coloneqq \begin{cases}
			1 & \text{if } e\in S,\\
			0 & \text{otherwise,}
		\end{cases} 
		\qquad \text{for } e \in E.
	\]
	It is well known that~$\calQ = \calQ'$~\citep{edmonds1970matroids}. Therefore, every point in~$\calQ$ can be written as a convex combination of points in~$\{ \bfz^S \mid S \in \indfam\}$. 
	
	Observe that~$\bfp$ belongs to~$\calQ$ by definition. Thus, there exists~$\bflambda \in [0,1]^{\indfam}$ such that
	\begin{equation}
		\sum_{S\in \indfam}\lambda_S \bfz^S = \bfp, \qquad \sum_{S\in \indfam} \lambda_S = 1.\label[eqs]{eq:convex-comb-matroids}
	\end{equation}
	Let~$\bfx \in [0,1]^{2^E}$ be defined as
	\[
		x_S \coloneqq \begin{cases}
			\lambda_S & \text{if } S\in \indfam,\\
			0 & \text{otherwise,} 
		\end{cases}
		\qquad \text{for } S\subseteq V.
	\]
    From this definition and \cref{eq:convex-comb-matroids}, we have that \mbox{$\sum_{S\subseteq E} x_S = \sum_{S \in \indfam}\lambda_S = 1$}, \mbox{$\sum_{S \in 2^E \setminus \indfam} x_S = 0$}, and
    \[
	\sum_{S\subseteq E: e\in S} x_S = \sum_{S\in \indfam: e\in S} \lambda_S = \sum_{S\in \indfam} \lambda_S z^S_e = p_e \qquad \text{for every } e \in E.
    \]
    We conclude that~$\bfx$ is a probability distribution over~$2^E$, whose support is a subset of~$\indfam$, that induces a distribution over agents in~$E$ equal to~$\bfp$.
\end{proof}

We are now ready to prove \Cref{thm:matroids-lb}.

\begin{proof}[Proof of \Cref{thm:matroids-lb}]
We claim the result for \nalgprugen. 
We start by showing that it is indeed a selection mechanism. 
Let~$\Gamma=(E,\indfam,W)$ be a matroid instance where~$W$ is a \mbox{$1$-sparse} score matrix.
Let~$S\subseteq E$ be a subset of agents; it is well known that ~$M|_S \coloneqq (S,\indfam|_S)$ is also a matroid.%
\footnote{$(S,\cI|_{S})$ is usually called the restriction of~$(E,\cI)$ on~$S$~\citep{oxley1992matroid}.}
Furthermore, we define 
$B^{S}$ as the unique maximum weight independent set of~$M|_S$ with weights given by~$\sigma_{W}(s)$ for~$s \in S$ (with ties broken according to the total order on~$2^E$).
Similarly, we define~$B^S_{-e}$ as the unique maximum weight independent set of~$M|_S$ with weights given by~$\sigma_{W}(s)$ for~$s \in S$ (again, with ties broken according to the total order on~$2^E$).

We claim that
\begin{equation}
	\big\{e\in S\mid e \in B^S_{-e}\big\} \subseteq \bigcup_{e\in S} B^{S-e}. \label[prop]{eq:claim-greedy}
\end{equation}

Indeed, let~$e \in S$ such that $e \in B^{S}_{-e}$.
Since~$W$ is \mbox{$1$-sparse}, there exists a unique~$e^* \in E$ such that~$w_{ee^*} > 0$.
Note that if $e^* \notin S$ or $e^* \in B^S_{-e}$, then $B^S_{-e} = B^{S}$. 
Hence, $e \in B^{S-f}$ for all~$f \in S-e$.
On the other hand, if $e^* \in S \setminus B^S_{-e}$, we have as a consequence that $B^S_{-e} = B^{S-e^*}$. 
Therefore, we have that in any case
$e \in \bigcup_{f \in S} B^{S-f}$,
and \cref{eq:claim-greedy} follows.

We now claim that
\begin{equation}
\{e\in S\mid e \in \OPT(E,\indfam, W_{-e})\} \subseteq \{e\in S \mid e\in B^{S}_{-e}\}  .\label[prop]{eq:greedy-restriction}
\end{equation}
To see this, let~$e \in E$ be an agent. 
We have that
\begin{align*}
	e \in \OPT(E,\indfam, W_{-e}) \cap S \quad \Longleftrightarrow & \quad e \in S \setminus \spn(\{f \in E\mid \sigma_{W_{-e}}(f) \succ \sigma_{W_{-e}}(e)\})\\
	 \quad \Longrightarrow & \quad e \in S \setminus \spn(\{f\in S \mid \sigma_{W_{-e}}(f) \succ \sigma_{W_{-e}}(e)\})\\
	 \quad \Longleftrightarrow & \quad e \in B_{-e}^S.
\end{align*}
Here, the equivalences follow from \cref{item:greedy-span} of \Cref{lem:matroid-basics} and the implication comes from \cref{item:span-subset} of this lemma. 
This finishes the proof of \cref{eq:greedy-restriction}.

We now use the previous properties to observe that
\begingroup
\allowdisplaybreaks
\begin{align*}
	\sum_{e\in S} \algprugen_e(\Gamma) 
    & = \frac{1}{2} |\{e\in S \mid e \in \OPT(E,\indfam, W_{-e})\}| \\
	& \leq  \frac{1}{2} |\{e\in S \mid e \in B_{-e}^S\}| \\
	& \leq \frac{1}{2} \Bigg| \bigcup_{e\in S} B^{S-e} \Bigg| \\
	& \leq \frac{1}{2} \cdot 2 |B^S|\\
	& \leq r(S).
\end{align*}
\endgroup
Indeed, the first equality and last inequality follow from the definition of \nalgprugen~and the fact that~$B^S$ is an independent set contained in~$S$, respectively.
The first inequality follows from \cref{eq:greedy-restriction}, the second one from \cref{eq:claim-greedy}, and the third one from \cref{item:greedy-runners-up} of \Cref{lem:matroid-basics}. 
We conclude from \Cref{lem:birkhoff-matroids} that \nalgprugen~is a selection mechanism.

Impartiality is straightforward from the definition of the mechanism, as the selection probability of each agent~$e \in E$ for an instance~$(E,\indfam,W)$ only depends on whether $e\in \OPT(E,\indfam,W_{-e})$ or not, which is independent of the scores that~$e$ assigns. 
Formally, we consider matroid instances \mbox{$\Gamma=(E,\indfam,W)$} and \mbox{$\Gamma'=(E,\indfam,W')$} that differ only on the scores that a given agent~$e \in E$ assigns, i.e., \mbox{$W_{-e} = W'_{-e}$}. 
By definition, we have that both~$\algprugen_e(\Gamma)$ and~$\algprugen_e(\Gamma')$ are values in~$\big\{0,\frac{1}{2}\big\}$.  
Therefore,
\begin{align*}
    \algprugen_e(\Gamma) = \frac{1}{2} \quad \Longleftrightarrow \quad & v \in \OPT(E,\indfam,W_{-e}) 
    \\ \Longleftrightarrow \quad & v \in \OPT(E,\indfam,W'_{-e}) \\ \Longleftrightarrow \quad & \algprugen_e(\Gamma') = \frac{1}{2}.
\end{align*}

It only remains to prove the approximation ratio.
Again, we consider a matroid instance $\Gamma=(E,\indfam,W)$ that is \mbox{$1$-sparse}. 
We claim that, for every~$e \in \OPT(E,\indfam,W)$, we have $e \in \OPT(E,\indfam,W_{-e})$. 
We note that we can easily conclude the guarantee from the claim.
Indeed, we have that~$\algprugen_e(\Gamma) = \frac{1}{2}$ for every~$e \in \OPT(E,\indfam,W)$, and consequently
\[
	\sum_{e \in E} \sigma_{W}(e) \algprugen_e(\Gamma) \geq \sum_{e \in \OPT(E,\indfam,W)} \frac{1}{2}\sigma_{W}(e) = \frac{1}{2} \OPT(E,\cI,W).
\]
We finish by proving the claim. 
To this end, let~$e \in \OPT(E,\indfam,W)$. 
Using \cref{item:greedy-span} of \Cref{lem:matroid-basics}, we obtain that
\begin{equation}
    e  \notin \spn(\{f \in E \mid  \sigma_{W}(f) \succ \sigma_{W}(e)) \label[eqn]{eq:span-w-wv}.
\end{equation}
Since~$\sigma_{W}(e)=\sigma_{W_{-e}}(e)$ and, for every~$f \in E$,~$\sigma_{W}(f)\geq \sigma_{W_{-e}}(f)$, we obtain that
\[
	\{f \in E\mid \sigma_{W_{-e}}(f) \succ \sigma_{W_{-e}}(e) \} \subseteq \{f \in E\mid \sigma_W(f) \succ \sigma_W(e)\}. 
\]
Thus, \cref{item:span-subset} of \Cref{lem:matroid-basics} implies that
\[
	\spn(\{f \in E\mid \sigma_{W_{-e}}(f)) \succ  \sigma_{W_{-e}}(e) \} \subseteq \spn(\{f \in E\mid \sigma_W(f) \succ \sigma_W(e)\}).
\]
Consequently, applying \cref{eq:span-w-wv} we obtain that $e \notin \spn(\{f \in E\mid \sigma_{W_{-e}}(f) \succ \sigma_{W_{-e}}(e))$. 
By using \cref{item:greedy-span} of \Cref{lem:matroid-basics} we conclude that $e \in \OPT(E,\cI, W_{-e})$ as claimed.
\end{proof}

It is not hard to see that this analysis is tight. 
Given a matroid~$(E,\indfam)$, we can consider a score matrix~$W$ with exactly~$r(E)$ agents with strictly positive total score. 
Each one is selected with a probability of~$\frac{1}{2}$ by \nalgprugen, resulting in the mechanism being only \mbox{$\frac{1}{2}$-optimal}.

The existence of a mechanism guaranteeing a better-than~$\frac{1}{4}$ approximation for matroids beyond \mbox{$1$-sparse} score matrices is a natural open question.
We conclude this section by providing a positive answer to it for binary score matrices in the widely-studied class of \emph{simple graphic matroids}.
Here, the ground set corresponds to the set of edges of a simple undirected graph and the independent sets correspond to its forests. 
Formally,~$(E,\indfam)$ is a simple graphic matroid if there exists a simple undirected graph~$G=(V,E)$ such that~$\indfam = \{S\subseteq E\mid S \text{ is a forest}\}$. 
Furthermore, we write~$(E,V,W)$ for the instances where~$(E,\indfam)$ is a simple graphic matroid (as~$\indfam$ is fully defined by the graph~$(V,E)$), and refer to them as \emph{simple graphic matroid instances}.

\begin{theorem}
\label{thm:graphic-matroids-lb}
    There exists an impartial and \mbox{$\frac{1}{3}$-optimal} selection mechanism on simple graphic matroid instances with binary score matrices.
\end{theorem}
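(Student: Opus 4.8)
The plan is to analyze the two-phase mechanism outlined in the introduction, which I will call \nalgmatroidpar{} ($\algmatroidpar$). Its first phase is \nalgvertex{} ($\algvertex$): draw a uniformly random permutation $\pi$ of $V$ and set $S_v \coloneqq \{\{v,u\} \in E \mid \pi(v) < \pi(u)\}$ for each $v \in V$, so that $E = \bigsqcup_{v\in V} S_v$ and each $S_v$ is the star centred at $v$ directed towards the $\pi$-later vertices; this phase does not look at $W$. Its second phase is \nalgparper{} ($\algparper$): draw a uniformly random permutation $\tau$ of $E$ and from each non-empty part $S_v$ select the edge $e \in S_v$ of maximum observed score, where the observed score of $e$ counts the scores cast for $e$ by edges outside $S_v$ together with the scores cast for $e$ by edges of $S_v$ preceding $e$ in $\tau$, ties broken by $\prec_E$; the output $F$ is the set of selected edges. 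I would first record that \nalgparper{} is exactly the $k$-partition-with-permutation mechanism of \citet{bjelde2017impartial}, now run on the (data-independent) partition produced by \nalgvertex{}, so that its impartiality and the quoted selection rule can be invoked as a black box.

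Feasibility — that $F$ is always a forest — is the easy part. If $F$ contained a cycle $C$, let $v$ be the $\pi$-earliest vertex on $C$; the two edges of $C$ incident to $v$ both have $v$ as their $\pi$-smaller endpoint, hence both lie in $S_v$, contradicting the fact that $F$ contains at most one edge of each part. Impartiality of \nalgmatroidpar{} then follows because $\pi$ is independent of $W$ and, conditioned on $\pi$, the selection of an edge $e \in S_v$ depends only on scores cast by edges other than $e$ (those coming from outside $S_v$ and those coming from $S_v$-edges preceding $e$ in $\tau$), which is precisely the impartiality statement of \citet{bjelde2017impartial}; averaging over $\pi$ preserves it.

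The substance of the proof is the $\tfrac13$-approximation. Fix an instance and write $O \coloneqq \OPT(E,V,W)$, which is a forest. For any realisation of $\pi$ and $\tau$, the edge $f_v$ selected from a non-empty $S_v$ has true score at least its observed score, which in turn is at least the observed score of any edge of $S_v$, and hence at least the score that any fixed $o \in S_v \cap O$ receives from edges lying in parts other than $S_v$. Summing this over the parts and taking expectations over $\pi$ and $\tau$ reduces the claim to showing that, in aggregate, a $\tfrac13$ fraction of $\sigma_W(O) = \sum_{o \in O}\sigma_W(o)$ is recovered. The main obstacle — and the reason the guarantee is $\tfrac13$ rather than $\tfrac12$ — is that a single part $S_v$ may contain several edges of $O$ (the edges of the $O$-star at $v$ whose other endpoint is $\pi$-later than $v$), while only one edge per part is selected. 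I would resolve this by charging each edge $o = \{a,b\} \in O$ to the part that contains it and analysing, over the random vertex order $\pi$, the probability that $o$ is the edge of its part whose external score is recovered and the probability that a given edge voting for $o$ lies outside that part; here the forest structure of $O$ (via an orientation of $O$ towards the roots of its trees, which gives an injection of $O$-edges into vertices) is what keeps the charging consistent, and the same-part scores discarded in the crude bound, together with the permutation subroutine's guarantee on each part, provide the slack absorbing the loss incurred on parts with several $O$-edges. I would close by exhibiting an instance showing the analysis is essentially tight and recalling that \Cref{thm:gral-indep-ub-plu} already caps the achievable ratio at $\tfrac56$. The hardest step I anticipate is making this charging argument precise while tracking the exact guarantee of \nalgparper{} on a part holding more than one optimal edge.
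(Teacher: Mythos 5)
Your mechanism is the paper's \nalgvertex\ (random vertex permutation~$\eta$ to split $E$ into stars, then the $k$-partition-with-permutation subroutine per part), and your feasibility and impartiality arguments match the paper's. The approximation argument, however, has a genuine gap at its core. You lower-bound the selected edge's score in a part $S_v$ by the \emph{external} score of an optimal edge $o\in S_v\cap O$, discarding all same-part votes, and hope to recover the loss later. This discards too much. Conditioned on $o=\{a,b\}$ lying in $S_a$, an edge $e=\{a,c\}$ voting for $o$ lies outside $S_a$ only with probability $\tfrac{1}{3}$ in the worst case (three vertices $a,b,c$ in random order with $\pi(a)<\pi(b)$: then $\PP[\pi(c)<\pi(a)]=\tfrac13$), so on instances where every voter is adjacent to the edge it votes for --- precisely the tight family of \Cref{prop:tightness-algvertex} --- your crude bound yields only about $\tfrac13\cdot\tfrac12=\tfrac16$ of $\score_W(O)$, not $\tfrac13$. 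The factor $\tfrac23$ per vote is only obtained by also counting same-part voters that precede the candidate in the edge permutation $\pi$ (each contributing an extra $\tfrac12$ of the $\tfrac13$ same-part probability, giving $\tfrac13+\tfrac13\cdot\tfrac12+\tfrac13=\tfrac56$ unconditionally, and $\geq\tfrac23$ after the correct conditioning). This is exactly the content of the paper's \Cref{lem:algmatroidpar-observed-actual}, whose computation $1-\tfrac12\cdot\tfrac{2}{|U|+3}=\tfrac{|U|+2}{|U|+3}\geq\tfrac23$ is carried out \emph{conditioned on the event that the edge in question is the maximum-score edge of its part}; that conditioning changes the probability that an adjacent voter shares the part, and your sketch never confronts this interaction.

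Relatedly, your plan to fold the ``several $O$-edges per part'' loss and the ``observed versus actual score'' loss into a single charging argument is where the hard work lives, and it is only gestured at. The paper decouples the two: \Cref{lem:algpartition-1-2} shows by induction on forests (using exactly the root-orientation injection you mention) that selecting the maximum \emph{actual}-score edge from each random star already gives $\tfrac12\,\score_W(\OPT)$, and \Cref{lem:algmatroidpar-observed-actual} separately shows that passing to observed scores costs a factor $\tfrac23$; the product gives $\tfrac13$. If you want to keep a one-shot charging argument you must (i) retain the same-part, $\pi$-earlier votes rather than only external ones, and (ii) perform the probability computation conditioned on which edge of the part is charged, tracking how that conditioning skews the positions of the adjacent voters in $\eta$. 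As written, the proposal does neither, so the $\tfrac13$ bound is not established.
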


We introduce some notation needed to define the mechanism satisfying \Cref{thm:graphic-matroids-lb}. 
We denote~$n\coloneqq|V|$, and for~$v\in V$ we write $\delta(v)\coloneqq\{e\in E: v\in e\}$ for the edges incident to~$v$ and $N(v)\coloneqq\{u\in V: \{u,v\}\in E\}$ for the neighbors of~$v$. 
For a set~$S$, a permutation of~$S$ is a bijection $\tau\colon [|S|]\to S$. 
For~$i\in [|S|]$ we say that~$\tau(i)\in S$ is the~$i$-th element of the permutation, and for~$e\in S$ we say that~$\tau^{-1}(e)\in [|S|]$ is the position of~$e$ in the permutation. 
We denote by~$\calP_S$ the set of all permutations of~$S$. 
For a permutation~$\tau\in \calP_S$ and a subset~$S'\subseteq S$, we let~$\tau|_{S'}$ denote the restriction of~$\tau$ to~$S'$, i.e.,
\[
    \tau|_{S'}(i) = e' \quad \Longleftrightarrow \quad |\{ e\in S' \mid \tau^{-1}(e) < \tau^{-1}(e')\}| = i-1 \qquad \text{for every } i\in [|S'|] \text{ and } e'\in S'.
\]
For a simple undirected graph~$G=(V,E)$ and a permutation~$\eta \in \calP_V$, we define the \emph{partition induced by~$\eta$} as $\bfP^{\eta}(G)\coloneqq(P^{\eta}_1(G),\ldots,P^{\eta}_{n}(G))$, where 
\[
    P^{\eta}_i(G) \coloneqq \{\{\eta(i),\eta(j)\} \in E\mid j\in [n] \text{ s.t.\ } j>i\} \qquad \text{for every } i\in [n].
\]
In simple words, given~$\eta$, for each~$i\in [n]$ we add to~$P^\eta_i(G)$ all edges incident to the vertex~$\eta(i)$ that have not been added to a previous set. For example, $P^{\eta}_1(G) = \delta(\eta(1))$ and~$P^\eta_n(G) = \emptyset$. We omit the dependence on~$G$ when it is clear from context.

We can now define our mechanism satisfying the conditions of \Cref{thm:graphic-matroids-lb}, denoted by \nalgvertex~and abbreviated as~$\algvertex$; see \Cref{alg:vertex} for a formal description.
We first sample permutations~$\eta\in \calP_V$ and~$\pi\in \calP_E$ uniformly at random and proceed in two stages.
First, we consider the partition ~$\bfP^{\eta}$. 
Given this partition, any subset of~$E$ containing at most one agent from each set~$P^{\eta}_1,\ldots,P^{\eta}_{n}$ constitutes a forest.
Thus, in the second stage, we select a subset of~$E$ by picking at most one agent from each part.
We do so by running a variant of a mechanism that \citet{bjelde2017impartial} called \emph{$k$-partition with permutation} for the $k$-selection setting.
More specifically, 
we define the \emph{observed score} as the total score of an edge from other sets of~$P^\eta$ and from edges in the same set that appear earlier in the permutation~$\pi$.
For each set, we consider the edges of the set one by one in the order given by~$\pi$, maintaining a current candidate (initially the first edge within the set) along with its observed score. 
If the observed score of an edge, excluding a potential vote from the current candidate, exceeds that of the candidate, we take it as the new candidate. 
After visiting all edges, the one that remains the candidate is added to the selected set. 
\begin{algorithm}[h!]
	\SetAlgoNoLine
	\KwIn{a simple graphic matroid instance~$\Gamma=(E,V,W)$.}
	\KwOut{a forest~$S$ of the graph $(V,E)$.}
	$\eta \gets$ permutation of~$V$ taken uniformly at random\;
        $\pi \gets$ permutation of~$E$ taken uniformly at random\;
        $S \gets \emptyset$\;
	\For(\tcp*[f]{we consider each set~$P^{\eta}_i$ separately}){$i \in [n]$}{
            $m_i\gets |P^{\eta}_i|$\;
            \If(\tcp*[f]{if $P^{\eta}_i=\emptyset$, continue with the next set}){$m_i=0$}{
                {\bfseries continue}
            }
		$e' \gets \pi|_{P^{\eta}_i}(1)$ \tcp*{initial candidate is the first agent w.r.t.\ $\pi$}
            $\score' \gets \score_{W,E\setminus P^{\eta}_i}(e')$\tcp*{observed score of current candidate}
		\For{$j \in \{2,\ldots, m_i\}$}{
			$e \gets \pi|_{P^{\eta}_i}(j)$\;
			$R \gets (E\setminus P^{\eta}_i) \cup (\{\pi|_{P^{\eta}_i}(k)\mid k\in [j-1]\}-e')$\tcp*{edges in other sets or before~$e$}
			\If(\tcp*[f]{check for higher observed score}){$\score_{W,R}(e) \geq \score'$}{
                    $\score' \gets \score_{W,R + e'}(e)$\tcp*{update observed score of current candidate}
				$e' \gets e$\tcp*{replace current candidate}
			}
		}
		$S \gets S+e'$\;
	}
	{\bfseries return}~$S$
	\caption{\nalgvertex~($\algvertex$)}
	\label{alg:vertex}
\end{algorithm}

For any realization of its internal randomness, the mechanism returns a forest of the graph~$(V,E)$ due to selecting one edge from each set of the partition: 
If there was a cycle, then two edges incident to the first vertex of the cycle according to the permutation~$\eta$ would be selected from the same set of~$P^\eta$, a contradiction. 
Impartiality follows from the impartiality of the mechanism by \citet{bjelde2017impartial}: 
The votes of an edge are considered only when the edge is no longer eligible.

The proof of the approximation ratio is more demanding and is proven in three steps. 
We begin by introducing some additional notation. 
Given a simple graphic matroid instance $\Gamma=(E,V,W)$ and permutations~$\eta\in \calP_V$  and~$\pi\in \calP_E$, we denote by~$\algvertex^{\eta,\pi}(\Gamma)$ the outcome of~$\algvertex(\Gamma)$ when the permutations realized at the beginning of \Cref{alg:vertex} are~$\eta$ and~$\pi$.
For a simple graphic matroid instance $\Gamma=(E,V,W)$, a fixed permutation~$\eta\in \calP_V$, and a fixed permutation~$\pi\in \calP_E$, we define the \emph{score matrix induced by~$\eta$ and~$\pi$}, $W^{\eta,\pi}\in \RR^{E\times E}_+$, as
\[
    W^{\eta,\pi}_{ef} \coloneqq \begin{cases}
    0 & \text{if there exists } i\in [n] \text{ s.t.\ } e,f\in P^\eta_i \text{ and } \pi^{-1}(e) > \pi^{-1}(f),\\
    W_{ef} & \text{otherwise,} \end{cases} \quad \text{for every } e,f\in E.
\]
In simple terms, this matrix is a copy of~$W$ but sets to~$0$ the score received by an edge from any edge in the same set of the partition induced by~$\eta$ which appears later in the permutation~$\pi$.
We also refer to the scores defined by this matrix as the \emph{observed scores} when the realizations of~$\eta$ and~$\pi$ are clear from the context. 
In the following, whenever expectations over~$\eta$ or~$\pi$ are taken, it is implicit that they are permutations of the sets of vertices and edges, respectively, taken uniformly at random. 
If any of these sets is different from~$V$ or~$E$, we make it explicit.

Our first lemma bounds the expected total score of the set returned by \nalgvertex\ in terms of the expected observed scores. 
In order to prove it, we split the expected total score of the set of edges selected by the mechanism into the expected total score of the edge selected from each set of the partition. We then apply the definition of the score matrix~$W^{\eta,\pi}$ along with a result from previous literature regarding the permutation mechanism \citep{bousquet2014near}.

\begin{lemma}\label{lem:algvertex-actual-observed}
    For every simple graphic matroid instance $\Gamma=(E,V,W)$ with a binary score matrix~$W$, it holds that
    \[
        \EE_{\eta,\pi}[\score_W(\algvertex(\Gamma))] \geq \EE_{\eta,\pi}\Bigg[ \sum_{i=1}^{n} \max\{ \score_{W^{\eta,\pi}}(e) \mid e\in P^{\eta}_i \} \Bigg].
    \]
\end{lemma}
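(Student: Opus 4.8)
The plan is to fix realizations of the permutations $\eta$ and $\pi$, decompose the total score of $\algvertex^{\eta,\pi}(\Gamma)$ as a sum over the parts $P^\eta_1,\dots,P^\eta_n$ of the partition, and show that the expected contribution of each part (now over the randomness in $\pi$ restricted to that part) is at least the expected maximum observed score within that part. First I would observe that, by construction, $\algvertex$ selects exactly one edge from each nonempty part $P^\eta_i$, and the edge it selects from $P^\eta_i$ is precisely the output of the single-winner permutation mechanism of \citet{bousquet2014near} (the ``permutation mechanism''), run on the sub-instance whose agents are the edges of $P^\eta_i$, where each edge $e\in P^\eta_i$ has an exogenous ``external'' score $\score_{W,E\setminus P^\eta_i}(e)$ coming from edges outside the part plus the internal scores $w_{fe}$ from edges $f\in P^\eta_i$; the inner loop of \Cref{alg:vertex} is exactly the update rule of that mechanism, with the order on $P^\eta_i$ given by $\pi|_{P^\eta_i}$. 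Hence, conditioning on $\eta$, the selected edge from $P^\eta_i$ and its realized observed score $\score_{W^{\eta,\pi}}(\cdot)$ depend only on $\pi|_{P^\eta_i}$, and these restrictions are independent uniform permutations of the parts.

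The key step is then the single-part estimate: for a fixed part $P=P^\eta_i$ with external scores $a_e\ge 0$ and internal binary scores $w_{fe}\in\{0,1\}$, if $\pi_P$ is a uniformly random permutation of $P$ and $e^\star=e^\star(\pi_P)$ is the edge selected by the permutation mechanism, then
\begin{equation*}
    \EE_{\pi_P}\bigl[\, a_{e^\star} + \textstyle\sum_{f\in P:\ \pi_P^{-1}(f)<\pi_P^{-1}(e^\star)} w_{fe^\star} \,\bigr] \;\ge\; \EE_{\pi_P}\Bigl[\max_{e\in P}\bigl(a_e + \textstyle\sum_{f\in P:\ \pi_P^{-1}(f)<\pi_P^{-1}(e)} w_{fe}\bigr)\Bigr].
\end{equation*}
Both sides are exactly $\score_{W^{\eta,\pi}}$ evaluated on the part: the left side is the observed score of the selected edge, the right side the maximum observed score over the part. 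This inequality is the content of the analysis of the permutation mechanism in \citet{bousquet2014near}: the permutation mechanism, when faced with edges whose observed score is a prefix-sum of $0/1$ internal votes plus a fixed external offset, selects (in expectation over the random internal order) an edge whose observed score is at least the expected maximum observed score. I would cite that result directly for the single-part bound, taking care that their ``votes'' correspond to our internal scores $w_{fe}$ and their ``external weight'' to $a_e$, and that binariness of $W$ is exactly the hypothesis under which their guarantee applies. Summing this inequality over $i\in[n]$ (the empty parts contribute $0$ on both sides) and taking expectation over $\eta$ then yields the claim, using that $\score_W(\algvertex^{\eta,\pi}(\Gamma))=\sum_{i=1}^n\bigl(a_{e^\star_i}+\sum_{f\prec e^\star_i} w_{fe^\star_i}\bigr)$ since the edge chosen from $P^\eta_i$ receives, in the final forest, exactly its external score plus the internal scores from earlier edges of its own part — no double counting occurs because the parts are disjoint and every pair of distinct edges lies in a well-defined order, either across parts (fully counted) or within a part (counted according to $\pi$).

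The main obstacle I anticipate is making the reduction to the permutation mechanism of \citet{bousquet2014near} fully rigorous: one must check that the inner loop in \Cref{alg:vertex}, with its bookkeeping variable $\score'$ updated via $\score_{W,R+e'}(e)$ and the exclusion of the candidate's own vote in the comparison $\score_{W,R}(e)\ge \score'$, is verbatim the permutation mechanism applied to the sub-instance, so that their expected-observed-score guarantee transfers. In particular one must confirm that ``observed score'' in their sense (external offset $+$ prefix of internal votes under the random order) coincides with our $\score_{W^{\eta,\pi}}(e)$ for $e\in P^\eta_i$, i.e.\ that $W^{\eta,\pi}_{fe}=W_{fe}$ exactly when $f\notin P^\eta_i$ or $f$ precedes $e$ in $\pi$, which is immediate from the definition of $W^{\eta,\pi}$. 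A secondary point to get right is the independence structure: the restrictions $\{\pi|_{P^\eta_i}\}_{i\in[n]}$ of a uniform $\pi\in\calP_E$ to the disjoint parts are themselves independent and uniform, so linearity of expectation lets us apply the single-part bound part by part and then sum, and finally average over $\eta$.
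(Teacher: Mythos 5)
Your proposal is correct and follows essentially the same route as the paper's proof: decompose over the parts $P^{\eta}_i$, recognize the inner loop of \Cref{alg:vertex} as the permutation mechanism of \citet{bousquet2014near} run with external offsets, invoke their guarantee that the selected edge attains the maximum observed score within its part, and use that the true score dominates the observed score. One small imprecision: your bookkeeping identity $\score_W(\algvertex^{\eta,\pi}(\Gamma))=\sum_{i}\bigl(a_{e^\star_i}+\sum_{f\prec e^\star_i}w_{fe^\star_i}\bigr)$ should be an inequality ``$\geq$'', since the selected edge may additionally receive votes from edges of its own part that appear later in $\pi$; as the slack is in the direction you need, the argument is unaffected.
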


\begin{proof}
    Let~$\Gamma=(E,V,W)$ be a simple graphic matroid instance,~$\eta\in \calP_V$, and~$\pi\in \calP_E$. 
    Let~$e_i$ denote the edge selected by \Cref{alg:vertex} from each non-empty set~$P^\eta_i$ when the permutations realized at the beginning are~$\eta$ and~$\pi$, for~$i\in [n]$. 
    We claim that
    \begin{equation}
        \score_{W^{\eta,\pi}}(e_i) = \max\{ \score_{W^{\eta,\pi}}(e) \mid e\in P^{\eta}_i \} \qquad \text{for every } i\in[n] \text{ with } P^\eta_i\neq \emptyset.\label[eq]{eq:claim-selected-score-i}
    \end{equation}
    If true, this directly implies the result, as
    \[
        \EE_{\eta,\pi}[\score_W(\algvertex(\Gamma))] 
        \geq \EE_{\eta,\pi}\Bigg[\sum_{i\in [n]: P^{\eta}_i\neq \emptyset} \score_{W^{\eta,\pi}}(e_i)\Bigg] = \EE_{\eta,\pi}\Bigg[ \sum_{i=1}^{n} \max\{ \score_{W^{\eta,\pi}}(e) \mid e\in P^{\eta}_i \} \Bigg],
    \]
    where we used the definition of \nalgvertex~and that $W_{ef}\geq W^{\eta,\pi}_{ef}$ for every~$e,f\in E$.

    In order to prove \cref{eq:claim-selected-score-i}, we fix~$i\in [n]$ such that~$P^{\eta}_i\neq \emptyset$.
    For each edge~$f\in P^\eta_i$, let~$e'(f)$ denote the edge~$e'$ (current candidate) as defined in \Cref{alg:vertex} when~$f$ is considered and~$R(f)$ denote the set~$R$ (observed score excluding that from current candidate) as defined in \Cref{alg:vertex} when~$f$ is considered. 
    From the definition of the score matrix~$W^{\eta,\pi}$, for each~$f\in P^\eta_i$ we have
    \begin{equation}
        \score_{W,R(f)+e'(f)}(f) = \score_{W^{\eta,\pi}}(f).\label[eq]{eq:algvertex-observed-score}
    \end{equation}
    That the mechanism selects from~$P^\eta_i$ an edge~$f$ maximizing the score from~$R(f)+e'(f)$, i.e., that
    \begin{equation}
        \score_{W,R(e_i)+e'(e_i)}(e_i) = \max\{\score_{W,R(f)+e'(f)}(f) \mid f\in P^\eta_i\},\label[eq]{eq:selected-score-permutation}
    \end{equation}
    follows directly from an equivalent result for the so-called \emph{permutation mechanism} in the unweighted setting, corresponding to the subroutine that our mechanism runs in each set of the partition without external votes \citep[][Theorem 3]{bousquet2014near}. \Cref{eq:claim-selected-score-i} then follows from \cref{eq:algvertex-observed-score,eq:selected-score-permutation}. 
\end{proof}

The expression on the right-hand side of the previous statement is clearly equivalent to the expected total score of the set output by a simple procedure when the permutations~$\eta$ and~$\pi$ are taken uniformly at random: From each set of the partition~$\bfP^{\eta}$, select an edge with maximum observed score~$W^{\eta,\pi}$. For convenience, we formally describe in \Cref{alg:matroidpar} the procedure that takes a partition and a score function as given and outputs, for each set of the partition, the edge with the maximum score. We call it \nalgmatroidpar\ and denote its output for edges~$E$, a partition~$\bfP$, and a score function $\phi\colon E\to \RR_+$ by~$\algmatroidpar(E,\bfP,\phi)$. 

\begin{algorithm}[h]
	\SetAlgoNoLine
	\KwIn{a set of edges~$E$, a partition~$\bfP=(P_1,\ldots,P_{n})$ of~$E$, and a function $\phi\colon E \to \RR_{+}$.}
	\KwOut{a set~$S\subseteq E$ with $|S\cap P_i| \leq 1$ for every $i\in [n]$.}
        $S \gets \emptyset$\;
	\For{$i \in [n]$}{
		$e'\gets \arg\max\{\phi(e) \mid e\in P_i\}$\;
		$S \gets S+e'$\;
	}
	{\bfseries return}~$S$
	\caption{\nalgmatroidpar~($\algmatroidpar$)}
	\label{alg:matroidpar}
\end{algorithm}

As discussed above, it is straightforward that 
\begin{equation}
     \EE_{\eta,\pi}[\score_{W^{\eta,\pi}}(\algmatroidpar(E,\bfP^{\eta},\score_{W^{\eta,\pi}}))] = \EE_{\eta,\pi}\Bigg[ \sum_{i=1}^{n} \max\{ \score_{W^{\eta,\pi}}(e) \mid e\in P^{\eta}_i \} \Bigg].\label[eq]{eq:equivalence-max-matroidpar}
\end{equation}
The following lemma bounds the left-hand side of this inequality, stating that considering the observed scores instead of the actual scores yields an expected loss of at most~$\frac{2}{3}$.

\begin{lemma}\label{lem:algmatroidpar-observed-actual}
    For every simple graphic matroid instance~$(E,V,W)$, it holds that
    \[
        \EE_{\eta,\pi}[\score_{W^{\eta,\pi}}(\algmatroidpar(E,\bfP^{\eta},\score_{W^{\eta,\pi}}))] \geq \frac{2}{3} \EE_{\eta}[\score_W(\algmatroidpar(E,\bfP^{\eta},\score_W))].
    \]
\end{lemma}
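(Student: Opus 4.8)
The plan is to lower--bound the left-hand side by the observed score of the \emph{actual} per--part maximizer in each part, taking the expectation over the random edge order $\pi$ first, and then to absorb the resulting loss through a per--vote argument over the random vertex order $\eta$. It is not enough to fix $\eta$ and argue part by part: for a single part the expected observed maximum can be as small as $\tfrac12$ of the actual maximum (consider a part of two edges, one of which carries the sole vote, directed to the other; that vote appears in $W^{\eta,\pi}$ only with probability $\tfrac12$), and the improvement to $\tfrac23$ comes precisely from the fact that such bad configurations are unlikely under a random $\eta$.

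For $\eta\in\calP_V$ and a non-empty part $P^\eta_i$ write $e^\eta_i\coloneqq\arg\max\{\score_W(e)\mid e\in P^\eta_i\}$, the edge picked by $\algmatroidpar(E,\bfP^\eta,\score_W)$ in that part, and for an edge $g$ let $P^\eta(g)$ be the part containing $g$; by definition of $\bfP^\eta$, if $g=\{u,v\}$ with $u$ before $v$ in $\eta$ then $P^\eta(g)$ is the set of edges incident to $u$ whose other endpoint is after $u$ in $\eta$. Since $e^\eta_i\in P^\eta_i$ we have $\max\{\score_{W^{\eta,\pi}}(e)\mid e\in P^\eta_i\}\geq\score_{W^{\eta,\pi}}(e^\eta_i)$, so the left-hand side is at least $\EE_{\eta,\pi}[\sum_i\score_{W^{\eta,\pi}}(e^\eta_i)]$. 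For fixed $\eta$, a vote $f\to e^\eta_i$ is zeroed in $W^{\eta,\pi}$ exactly when $f\in P^\eta_i$ and $\pi^{-1}(f)>\pi^{-1}(e^\eta_i)$, which for uniform $\pi$ happens with probability $\tfrac12$ when $f\in P^\eta_i$ and never otherwise, so
\[
\EE_\pi\bigl[\score_{W^{\eta,\pi}}(e^\eta_i)\bigm|\eta\bigr]=\score_W(e^\eta_i)-\tfrac12\,\score_{W,P^\eta_i}(e^\eta_i).
\]
Summing over $i$, taking $\EE_\eta$, and using $\EE_\eta[\sum_i\score_W(e^\eta_i)]=\EE_\eta[\score_W(\algmatroidpar(E,\bfP^\eta,\score_W))]$, it remains to prove
\[
\EE_\eta\Bigl[\textstyle\sum_i\score_{W,P^\eta_i}(e^\eta_i)\Bigr]\ \leq\ \tfrac23\,\EE_\eta\Bigl[\textstyle\sum_i\score_W(e^\eta_i)\Bigr].
\]
Expanding both sides over ordered pairs of distinct edges $(f,g)$ with $w_{fg}>0$ — the left side weighting $w_{fg}$ by $\PP_\eta[g\text{ is the representative of }P^\eta(g)\ \wedge\ f\in P^\eta(g)]$, the right side by $\PP_\eta[g\text{ is the representative of }P^\eta(g)]$ — this reduces to showing, for each such pair,
\[
\PP_\eta\bigl[f\in P^\eta(g)\bigm|g\text{ is the representative of }P^\eta(g)\bigr]\ \leq\ \tfrac23 .
\]
If $f$ and $g$ are vertex-disjoint the conditional event is empty; otherwise write $g=\{a,b\}$ and, without loss of generality, $f=\{a,c\}$ with $a,b,c$ distinct (the case $f=\{b,c\}$ is symmetric, with $a$ and $b$ exchanged), and note that $f\in P^\eta(g)$ holds precisely when $a$ precedes both $b$ and $c$ in $\eta$.

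To establish the displayed inequality, condition on the three positions occupied by $\{a,b,c\}$ in $\eta$ and on the order of all remaining vertices, leaving only the uniform assignment of $a,b,c$ to those slots. The two orderings in which $a$ comes first induce the \emph{same} part $P^\eta(g)$ — the edges incident to $a$ whose other endpoint lies strictly after $a$'s slot, which is the smallest of the three slots in both cases — so $g$ is the representative in both of them or in neither. If it is the representative in both, consider the ordering ``$c$, then $a$, then $b$'': here $P^\eta(g)$ is again the part of $a$, now with $a$ in its middle slot, which is a subset of the common part-of-$a$ above and still contains $g$ (because $b$ occupies the last slot); since the $\arg\max$ of $\score_W$ over a set, under the fixed tie-break $\prec_E$, is unchanged when one restricts to a subset that still contains it, $g$ is the representative in this ordering as well. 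Hence, whenever $g$ is the representative in one of the two $a$-first orderings, it is the representative in at least three of the six orderings, and the conditional probability above is at most $2/3$ under this conditioning; averaging over the conditioning yields the per--vote inequality, and hence the lemma.

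The delicate point is this last step: extracting the sharp constant $\tfrac23$ (rather than merely $\tfrac12$) rests on the two structural facts that the two $a$-first orderings yield identical parts — so the ``bad'' orderings can never split off exactly one good one — and that restricting a part cannot dislodge a surviving maximizer, which together upgrade ``$g$ is the representative in $2$ of the $6$ orderings'' to ``$g$ is the representative in at least $3$ of the $6$ orderings''. Everything else is bookkeeping with the two independent uniform permutations.
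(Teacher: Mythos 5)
Your proof is correct, and while it shares the paper's opening move---lower-bounding the left-hand side by the observed score of the \emph{actual} per-part maximizer $e^\eta_i$, so that everything reduces to showing that each incoming vote survives in $W^{\eta,\pi}$ with probability at least $\tfrac23$ conditioned on its recipient being chosen---it establishes that $\tfrac23$ by a genuinely different argument. The paper fixes a recipient $e^*=\{v,u^*\}$ and a voter $e=\{v,v'\}$, characterizes the conditioning event explicitly as the order constraint $\eta^{-1}(u)<\eta^{-1}(v)<\eta^{-1}(u^*)$ for all higher-scored competitors $u\in U$, splits into cases according to whether the voter outscores the recipient, and computes the survival probability exactly as $\tfrac{|U|+2}{|U|+3}\ge\tfrac23$. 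Your exchange argument over the $3!=6$ placements of the three relevant vertices never names the competitors: it only uses that (i) the two $a$-first orderings induce the identical part, and (ii) moving $a$ to the middle slot shrinks its part while keeping $g$, so that a surviving $\arg\max$ remains the $\arg\max$ under the fixed tie-break. This upgrades ``$g$ wins in the two bad orderings'' to ``$g$ wins in at least three orderings'' and yields $\PP[f\in P^\eta(g)\wedge g\text{ is the representative}]\le\tfrac23\,\PP[g\text{ is the representative}]$ slot-configuration by slot-configuration. Two further structural differences: you integrate out $\pi$ analytically first, via the identity $\EE_\pi[\score_{W^{\eta,\pi}}(e^\eta_i)]=\score_W(e^\eta_i)-\tfrac12\score_{W,P^\eta_i}(e^\eta_i)$, and you decompose over ordered pairs of edges rather than over (vertex, incident edge) pairs, merging the paper's two disjoint events (recipient lying in the part of either endpoint) into one. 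Both proofs invoke simplicity of the graph at the same point, to guarantee that the three vertices $a,b,c$ are distinct. One nitpick: state the per-pair bound as the joint inequality $M\le\tfrac23 N$ over the six orderings rather than as a conditional probability, since the conditional form is undefined when $N=0$ and does not average over slot configurations on its own; your argument in fact proves the joint form, so nothing is lost.
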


\begin{proof}
    
    Let~$(E,V,W)$ be a simple graphic matroid instance. We first observe that, by linearity of expectation and the law of total expectation,
    \begingroup
    \allowdisplaybreaks
    \begin{align}
        & \phantom{{}={}} \EE_{\eta,\pi}[\score_{W^{\eta,\pi}}(\algmatroidpar(E,\bfP^{\eta},\score_{W^{\eta,\pi}}))] \nonumber\\
        & = \EE_{\eta,\pi} \Bigg[ \sum_{i=1}^{n} \max\{\score_{W^{\eta,\pi}}(e) \mid e\in P^{\eta}_i\} \Bigg] \nonumber\\
        & = \sum_{v\in V} \EE_{\eta,\pi} \big[ \max\big\{\score_{W^{\eta,\pi}}(e) \mid e\in P^{\eta}_{\eta^{-1}(v)}\big\} \big] \nonumber\\
        & \geq \sum_{v\in V} \sum_{e^*\in \delta(v)} \EE_{\eta,\pi} \big[ \score_{W^{\eta,\pi}}(e^*) \mid e^* = \arg\max\big\{\score_W(e) \mid e\in P^{\eta}_{\eta^{-1}(v)}\big\} \big]  \nonumber\\[-10pt]
        & \phantom{= \sum_{v\in V} \sum_{e^*\in \delta(v)} \ }\cdot \PP\big[e^* = \arg\max\big\{\score_W(e) \mid e\in P^{\eta}_{\eta^{-1}(v)}\big\}\big],\label[ineq]{ineq:total-score-matroidpart}
    \end{align}
    \endgroup
    where the term associated to~$v\in V$ in the last expression corresponds to the expected observed score of an agent with a maximum total score from the set associated to~$v$; i.e., from the~$\eta^{-1}(v)$-th set of the partition. 
    In what follows, we focus on showing that, for all~$v\in V$ and~$e^*\in \delta(v)$,
    \begin{equation}
        \EE_{\eta,\pi} \big[ \score_{W^{\eta,\pi}}(e^*) \mid e^* = \arg\max\big\{\score_W(e) \mid e\in P^{\eta}_{\eta^{-1}(v)}\big\} \big] \geq \frac{2}{3} \score_W(e^*).\label[ineq]{claim:exp-score-v}
    \end{equation}
    If true, replacing in \cref{ineq:total-score-matroidpart} yields
    \begingroup
    \allowdisplaybreaks
    \begin{align*}
        \EE_{\eta,\pi}[\score_{W^{\eta,\pi}}(\algmatroidpar(E,\bfP^{\eta},\score_{W^{\eta,\pi}}))]
        & \geq \frac{2}{3} \sum_{v\in V} \sum_{e^*\in \delta(v)}\score_W(e^*) \cdot \PP\big[e^* = \arg\max\big\{\score_W(e) \mid e\in P^{\eta}_{\eta^{-1}(v)}\big\}\big]\\
        & = \frac{2}{3} \sum_{v\in V}\EE_{\eta}\big[\max\big\{\score_W(e) \mid e\in P^{\eta}_{\eta^{-1}(v)}\big\}\big] \\
        & = \frac{2}{3}\EE_{\eta}\Bigg[ \sum_{i=1}^{n} \max\{\score_{W}(e) \mid e\in P^{\eta}_i\} \Bigg]\\
        & = \frac{2}{3}\EE_{\eta}[\score_W(\algmatroidpar(E,\bfP^{\eta},\score_W))],
    \end{align*}
    \endgroup
    as claimed.

    We now prove \cref{claim:exp-score-v}, so we fix~$v\in V$ and~$e^*=\{v,u^*\}\in \delta(v)$. 
    For a permutation~$\eta\in \calP_V$, we write~$A^\eta$ for the event stating that~$e^*$ is the maximum in $P^\eta_{\eta^{-1}(v)}$; i.e.,
    \begin{align}
        A^{\eta}  \coloneqq  & \;\big[ e^* = \arg\max\big\{\score_W(e) \mid e\in P^{\eta}_{\eta^{-1}(v)}\big\} \big]\nonumber\\
         =  & \; [ \eta^{-1}(u) < \eta^{-1}(v) < \eta^{-1}(u^*) \text{ for all } u\in N(v) \text{ s.t.\ } \score_W(\{u,v\}) \succ \score_W(e^*)].\label[eq]{eq:alt-def-A}
    \end{align}
    We observe that
    %\begingroup
    %\allowdisplaybreaks
    \begin{align}
        & \phantom{{}={}} \EE_{\eta,\pi} \big[ \score_{W^{\eta,\pi}}(e^*) \mid A^{\eta} \big] \nonumber\\
        & = \EE_{\eta,\pi} \Bigg[ \sum_{e\in E} W_{ee^*}\llbracket W^{\eta,\pi}_{ee^*}=W_{ee^*} \rrbracket \;\Big\vert\; A^\eta \Bigg] \nonumber\\
        & = \EE_{\eta,\pi} \Bigg[ \sum_{e\in E\setminus \delta(v):W_{ee^*}>0} W_{ee^*}\llbracket W^{\eta,\pi}_{ee^*}=W_{ee^*} \rrbracket + \sum_{e\in \delta(v):W_{ee^*}>0} W_{ee^*}\llbracket W^{\eta,\pi}_{ee^*}=W_{ee^*} \rrbracket \;\Big\vert\; A^\eta \Bigg] \nonumber\\
        & = \sum_{e\in E\setminus \delta(v):W_{ee^*}>0} W_{ee^*}\PP[W^{\eta,\pi}_{ee^*}=W_{ee^*} \mid A^{\eta}] + \sum_{e\in \delta(v):W_{ee^*}>0} W_{ee^*}\PP[W^{\eta,\pi}_{ee^*}=W_{ee^*} \mid A^{\eta}].\label[eq]{eq:observed-max-score-cond}
    \end{align}
    %\endgroup
    Indeed, the first equality follows from the definition of $\score_{W^{\eta,\pi}}(e^*)$ and the score matrix~$W^{\eta,\pi}$, the second one from splitting the observed score of~$e^*$ between the score from edges in~$\delta(v)$ and the score from other edges, and the last one due to linearity of expectation. We now claim that
    \begin{align}
        \PP[W^{\eta,\pi}_{ee^*}=W_{ee^*} \mid A^{\eta}] & = 1 \qquad \text{for every } e\in E\setminus \delta(v) \text{ s.t.\ } W_{ee^*}>0,\label[eq]{eq:claim-prob-others}\\
        \PP[W^{\eta,\pi}_{ee^*}=W_{ee^*} \mid A^{\eta}] & \geq \frac{2}{3} \qquad \text{for every } e\in \delta(v) \text{ s.t.\ } W_{ee^*}>0.\label[ineq]{ineq:claim-prob-adjacent}
    \end{align}
    If true, replacing in \cref{eq:observed-max-score-cond} yields
    \[
        \EE_{\eta,\pi} \big[ \score_{W^{\eta,\pi}}(e^*) \mid A^{\eta} \big] \geq \sum_{e\in E\setminus \delta(v):W_{ee^*}>0} W_{ee^*} + \frac{2}{3}\sum_{e\in \delta(v):W_{ee^*}>0} W_{ee^*} \geq 
        \frac{2}{3} \score_W(e^*),
    \]
    concluding the proof of \cref{claim:exp-score-v}.

    It remains to prove \cref{eq:claim-prob-others,ineq:claim-prob-adjacent}. For the former, let $e\in E\setminus \delta(v)$ be such that~$W_{ee^*}>0$. Observe that the event~$A^\eta$ implies that $e^*\in P^\eta_{\eta^{-1}(v)}$, while $e\notin \delta(v)$ implies that $e\notin P^\eta_{\eta^{-1}(v)}$. Thus, from the definition of the score matrix~$W^{\eta,\pi}$ we conclude that 
    \[
        \PP[W^{\eta,\pi}_{ee^*}=W_{ee^*} \mid A^{\eta}] = 1,
    \]
    as claimed. In order to prove \cref{ineq:claim-prob-adjacent}, we let $e=\{v,v'\}\in \delta(v)$ be such that~$W_{ee^*}>0$ and distinguish two cases. If $\score_W(e) \succ \score_W(e^*)$, we know from \cref{eq:alt-def-A} that, given~$A^\eta$, we must have $\eta^{-1}(v')<\eta^{-1}(v)$ and thus, from the definition of~$W^{\eta,\pi}$,
    \[
        \PP[W^{\eta,\pi}_{ee^*}=W_{ee^*} \mid A^{\eta}] = 1.
    \]
    If $\score_W(e) \prec \score_W(e^*)$, let 
   ~$U$ denote the vertices adjacent to~$v$ such that~$\{u,v\}$ satisfies the opposite inequality, i.e.,
    \[
        U \coloneqq \{u\in N(v) \mid \score_W(\{u,v\}) \succ \score_W(e^*)\}.
    \]
    Then,
    \begingroup
    \allowdisplaybreaks
    \begin{align*}
        & \phantom{{}={}} \PP[W^{\eta,\pi}_{ee^*}=W_{ee^*} \mid A^{\eta}]\\
        & = 1- \PP\big[e\in P^\eta_{\eta^{-1}(v)} \text{ and } \pi^{-1}(e) > \pi^{-1}(e^*) \mid A^{\eta}\big]\\
        & = 1- \PP[\eta^{-1}(v)<\eta^{-1}(v') \text{ and } \pi^{-1}(e) > \pi^{-1}(e^*) \mid \eta^{-1}(u)<\eta^{-1}(v)<\eta^{-1}(u^*) \text{ for every } u\in U]\\
        & = 1-\frac{1}{2} \PP[\eta^{-1}(v)<\eta^{-1}(v') \mid \eta^{-1}(u)<\eta^{-1}(v)<\eta^{-1}(u^*) \text{ for every } u\in U]\\
        & = 1-\frac{1}{2} \cdot \frac{2}{|U|+3}\\
        & = \frac{|U|+2}{|U|+3}\\
        & \geq \frac{2}{3}.
    \end{align*}
    \endgroup
    Indeed, the first inequality follows from the definition of the score matrix~$W^{\eta,\pi}$, the second one from the definition of~$P^\eta$ and \cref{eq:alt-def-A}, the third one from the fact that the permutations~$\eta$ and~$\pi$ are drawn independently, and the following equalities and inequalities from simple computations and the fact that~$|U|\geq 0$. In particular, the fourth inequality uses that~$v'\neq u^*$ due to the graph~$(V,E)$ being simple.\footnote{If we allowed for cycles of length two, we would get~$\frac{1}{2}$ instead of~$\frac{2}{3}$, thus leading to a overall approximation ratio of~$\frac{1}{4}$ that is already attained by \nalgtwopar\ on general independence systems.} This concludes the proof of \cref{ineq:claim-prob-adjacent} and the proof of the lemma.
\end{proof}

The following lemma finally states that \nalgmatroidpar, when run on a partition~$\bfP^{\eta}$ with~$\eta$ taken uniformly at random, provides a \mbox{$\frac{1}{2}$-approximation} of the maximum total score of a forest with respect to the input score matrix, thus providing the last ingredient to conclude the bound provided by \nalgvertex.

\begin{lemma}\label{lem:algpartition-1-2}
    For every simple graphic matroid instance $\Gamma=(E,V,W)$, it holds that
    \[
        \EE_{\eta}[\score_W(\algmatroidpar(E,\bfP^{\eta},\score_W))] \geq \frac{1}{2} \score_W(\OPT(\Gamma)).
    \]
\end{lemma}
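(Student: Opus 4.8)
The plan is to exploit the combinatorial structure of the partition $\bfP^{\eta}$: every edge $e=\{u,v\}$ lies in exactly one part, namely $P^{\eta}_{\eta^{-1}(w)}$, where $w$ is the endpoint of $e$ appearing earlier in $\eta$. Since \nalgmatroidpar\ picks a maximum-$\score_W$ edge from each part, its output value is
\[
    \score_W(\algmatroidpar(E,\bfP^{\eta},\score_W)) = \sum_{i=1}^{n} \max\{\score_W(e) \mid e\in P^{\eta}_i\},
\]
with the convention that the maximum over an empty part is $0$ (consistent with all scores being nonnegative). It thus suffices to lower-bound the right-hand side in expectation over a uniformly random $\eta$.

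First I would fix a maximum-weight forest $F\coloneqq\OPT(\Gamma)$ and root each connected component of $F$. This produces an injective \emph{child-endpoint} map $h\colon F\to V$ that sends each edge $e\in F$ to the one of its two endpoints that is farther from the root of its component; injectivity holds because every non-root vertex of $F$ is the child endpoint of exactly one edge of $F$. The purpose of this orientation is that distinct edges of $F$ are charged to distinct vertices, hence to distinct parts of $\bfP^{\eta}$ — a direct charging of each optimal edge to ``the part containing it'' would fail, since one part may contain many optimal edges.

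Next, for each $v\in V$ set $Y_v\coloneqq\max\{\score_W(e)\mid e\in P^{\eta}_{\eta^{-1}(v)}\}\geq 0$, so that the displayed sum equals $\sum_{v\in V}Y_v$. For an edge $e=\{h(e),t(e)\}\in F$, the definition of $\bfP^{\eta}$ yields $e\in P^{\eta}_{\eta^{-1}(h(e))}$ precisely when $\eta^{-1}(h(e))<\eta^{-1}(t(e))$, an event of probability $\frac{1}{2}$ (as $\eta$ is uniform and $h(e)\neq t(e)$ in the simple graph), and on this event $Y_{h(e)}\geq\score_W(e)$. Using $Y_{h(e)}\geq 0$ off this event gives $\EE_{\eta}[Y_{h(e)}]\geq\frac{1}{2}\score_W(e)$. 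Summing over $e\in F$ and using injectivity of $h$ (so that $\sum_{v\in V}Y_v\geq\sum_{e\in F}Y_{h(e)}$, discarding the nonnegative contributions of vertices outside the image of $h$),
\[
    \EE_{\eta}[\score_W(\algmatroidpar(E,\bfP^{\eta},\score_W))] = \sum_{v\in V}\EE_{\eta}[Y_v] \geq \sum_{e\in F}\EE_{\eta}[Y_{h(e)}] \geq \frac{1}{2}\sum_{e\in F}\score_W(e) = \frac{1}{2}\score_W(\OPT(\Gamma)).
\]
I do not expect a serious obstacle: the only delicate point is ensuring the charging is one-to-one, which is exactly what rooting $F$ guarantees; the rest is a short computation with a uniformly random permutation.
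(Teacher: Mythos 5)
Your proof is correct, and it reaches the bound by a genuinely more direct route than the paper's. Both arguments share the same combinatorial kernel: restrict attention to the optimal forest $F^*\coloneqq\OPT(\Gamma)$, root each component, and use the injective map sending each edge to its child endpoint so that distinct optimal edges are charged to distinct parts of $\bfP^{\eta}$. The paper deploys this charging inside a strong induction on the number of vertices of the forest: it first passes to the restricted instance on $F^*$ (which requires observing that the induced permutation on $V|_{F^*}$ is uniform), then conditions on the first vertex $\eta(1)$, applies the induction hypothesis to $F^*\setminus\delta(\eta(1))$, and combines this with the charging inequality $\sum_{v}\max\{\score_W(e)\mid e\in\delta(v)\cap F^*\}\geq\score_W(F^*)$. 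You instead note that for each $e\in F^*$ the event that $e$ lands in the part indexed by its child endpoint is exactly the event that the child precedes the other endpoint in $\eta$, which has probability exactly $\frac{1}{2}$ since the graph is simple; linearity of expectation over the distinct charged parts, together with nonnegativity of the scores (so empty or uncharged parts contribute at least $0$), then yields the bound in one step, with no induction and no reduction to a restricted instance. The two conventions you invoke (maximum over an empty part is $0$; the two endpoints of an edge are distinct) both hold in this setting, so there is no gap. Your version is shorter and cleaner; the paper's inductive formulation offers no additional generality here, since your argument applies verbatim to any forest in place of $F^*$.
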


\begin{proof}
    Let~$\Gamma=(E,V,W)$ be a simple graphic matroid instance. Let~$F^*\coloneqq\OPT(\Gamma)$ denote the optimal forest. 
    Throughout this proof, for a subset of edges~$E'\subseteq E$ we let
    \[
        V|_{E'} \coloneqq \{v\in e\mid e\in E'\}
    \]
    denote the endpoints of edges in~$E'$,~$n_{E'}\coloneqq \big|V|_{E'}\big|$ denote its size, and $G|_{E'} \coloneqq (V',E')$ denote the subgraph of~$G$ induced by~$E'$. Furthermore, for a function~$\phi\colon:E\to \RR_+$, we denote by~$\phi|_{E'}$ its restriction to the set~$E'\subseteq E$, where~$\phi|_{E'}(e)\coloneqq\phi(e)$ for every~$e\in E'$.
    It is easy to see that
    \begin{align*}
        \EE_{\eta}[\score_W(\algmatroidpar(E,\bfP^{\eta}(G),\score_W))] & = \EE_{\eta}\Bigg[ \sum_{i=1}^{n} \max\{ \score_W(e) \mid e\in P^\eta_i(G)\} \Bigg]\\
        & \geq \EE_{\eta}\Bigg[ \sum_{i=1}^{n} \max\{ \score_W(e) \mid e\in P^\eta_i(G) \cap F^* \} \Bigg]\\
        & = \EE_{\eta\in \calP_{V|_{F^*}}}[\score_W(\algmatroidpar(F^*,\bfP^{\eta}(G|_{F^*}),\score_W|_{F^*}))].
    \end{align*}
    Thus, it suffices to show that
    \[
        \EE_{\eta\in \calP_{V|_{F^*}}}[\score_W(\algmatroidpar(F^*,\bfP^{\eta}(G|_{F^*}),\score_W|_{F^*}))] \geq \frac{1}{2} \score_W(F^*).
    \]
    We show that, in fact,
    \begin{equation}
        \EE_{\eta\in \calP_{V|_F}}[\score_W(\algmatroidpar(F,\bfP^{\eta}(G|_F),\score_W|_F))] \geq \frac{1}{2} \score_W(F) \qquad \text{for every forest } F\subseteq E.\label[ineq]{ineq:claim-forests}
    \end{equation}

    We proceed by (strong) induction on the size of the endpoints of~$F$, namely~$n_F$. The base case~$n_F=1$ is trivial, as it implies~$F=\emptyset$ and thus
    \[
        \EE_{\eta\in \calP_{V|_F}}[\score_W(\algmatroidpar(F,\bfP^{\eta}(G|_F),\score_W|_F))] = 0 = \frac{1}{2} \score_W(F).
    \]
    Let now~$F\subseteq E$ be a forest, and suppose that \cref{ineq:claim-forests} holds for every forest~$F'$ with $n_{F'}\leq n_{F}-1$. 
    Then,
    \begingroup
    \allowdisplaybreaks
    \begin{align}
        & \phantom{{}={}} \EE_{\eta\in \calP_{V|_F}}[\score_W(\algmatroidpar(F,\bfP^{\eta}(G|_F),\score_W|_F))] \nonumber \\
        & = \sum_{v\in V|_{F}} \EE_{\eta\in \calP_{V|_F}}[\score_W(\algmatroidpar(F,\bfP^{\eta}(G|_F),\score_W|_F) \mid v=\eta(1)] \PP[v=\eta(1)]\nonumber \\
        & = \frac{1}{n_F} \sum_{v\in V|_{F}} \Big( \max\{\score_W(e) \mid e\in \delta(v)\cap F\} \nonumber \\[-10pt]
        & \phantom{= \frac{1}{n_F} \sum_{v\in V|_{F}} \big( }\ + \EE_{\eta\in \calP_{V|_{F\setminus \delta(v)}}}\big[\score_W\big(\algmatroidpar(F\setminus \delta(v),\bfP^{\eta}(G|_{F\setminus \delta(v)}),\score_W|_{F\setminus \delta(v)}) \big)\big]\Big),\label[eq]{eq:induction-step}
    \end{align}
    \endgroup
    where we have used the rule of expectations, the definition of \nalgmatroidpar, and the fact that~$P^\eta_{\eta(1)}(G|_F)=\delta(\eta(1))$ for any permutation~$\eta\in \calP_{V|_F}$.
    We note at this point that, for every~$v\in V|_F$, we have $V|_{F\setminus \delta(v)} \subseteq V|_F -v$ and thus $n_{F\setminus \delta(v)} \leq n_F-1$. Hence, the induction hypothesis implies that
    \begin{equation}
        \EE_{\eta\in \calP_{V|_{F\setminus \delta(v)}}}\big[\score_W\big(\algmatroidpar(F\setminus \delta(v),\bfP^{\eta}(G|_{F\setminus \delta(v)}),\score_W|_{F\setminus \delta(v)}) \big)\big] \geq \frac{1}{2}\score_W(F\setminus \delta(v)).\label[ineq]{ineq:inductive-hyp}
    \end{equation}
    Furthermore, we observe that
    \begin{equation}
        \sum_{v\in V|_F} \max\{\score_W(e) \mid e\in \delta(v)\cap F\} \geq \sum_{e\in F} \score_W(e)\label[ineq]{ineq:charging-arg}
    \end{equation}
    due to a charging argument.
    Indeed, since~$F$ is a forest we can define an injective function $\kappa\colon F\to V|_F$ that maps each edge to one of its endpoints (i.e.,~$\kappa(e)\in e$ for every~$e\in F$), by mapping each edge of each tree to its endpoint that is farthest from the root. 
    Then, 
    \begin{align*}
        \sum_{v\in V|_F} \max\{\score_W(e) \mid e\in \delta(v)\cap F\} & \geq \sum_{v\in \kappa(F)} \max\{\score_W(e) \mid e\in \delta(v)\cap F\} \\
        & \geq \sum_{v\in \kappa(F)} \score_W(\kappa^{-1}(v)) \\
        & = \sum_{e\in F} \score_W(e).
    \end{align*}
    From \cref{ineq:charging-arg} we obtain that
    \begin{equation}
        \sum_{v\in V|_F} \max\{\score_W(e) \mid e\in \delta(v)\cap F\} \geq \sum_{e\in F} \score_W(e) = \frac{1}{2}\sum_{v\in V|_F} \score_W(\delta(v)).\label[ineq]{ineq:max-score-lb}
    \end{equation}
    By combining \cref{eq:induction-step,ineq:inductive-hyp,ineq:max-score-lb}, we conclude that
    \begin{align*}
        \EE_{\eta\in \calP_{V|_F}}[\score_W(\algmatroidpar(F,\bfP^{\eta}(G|_F),\score_W|_F))] & \geq \frac{1}{2n_F}\sum_{v\in V|_F} (\score_W(\delta(v)) + \score_W(F\setminus \delta(v))) = \frac{1}{2}\score_W(F).\qedhere
    \end{align*}
\end{proof}

Combining the previous lemmas immediately yields the \mbox{$\frac{1}{3}$-optimality} of \nalgvertex. 
We now show this formally, along with the proof of the correctness and impartiality of this mechanism.

\begin{proof}[Proof of \Cref{thm:graphic-matroids-lb}]
We claim the result for \nalgvertex.
We first check that it is indeed a selection mechanism on simple graphic matroid instances.
To this end, let~$\Gamma=(E,V,W)$ be a simple graphic matroid instance, where $W$ is a binary score matrix.
We claim that, for all permutations~$\eta\in \calP_V$ and~$\pi\in \calP_E$, the output of the mechanism for these realizations, namely~$\algvertex^{\eta,\pi}(\Gamma)$, is a forest of the graph~$G=(V,E)$. 
Since the mechanism simply draws these permutations uniformly at random and returns $\algvertex^{\eta,\pi}(\Gamma)$, this suffices to conclude. 
To show the claim, fix~$\eta\in \calP_V$ and~$\pi\in \calP_E$. 
Suppose towards a contradiction that $\algvertex^{\eta,\pi}(\Gamma)$ contains a cycle, i.e., that there exists~$k\in [n]$ and vertices $\{v_j \mid j\in [k]\} \subseteq V$ such that~$\{v_j,v_{j+1}\}\in \algvertex^{\eta,\pi}(\Gamma)$ for each~$j\in [k]$ 
(we consider $v_{k+1}\coloneqq v_1$ and $v_0\coloneqq v_k$ for simplicity). 
Let $j' \coloneqq \arg\min\{ \eta^{-1}(v_j) \mid j\in [k]\}$ be the index of the first vertex in this set according to the permutation~$\eta$. 
This implies that both of its incident edges in the cycle are assigned to its set in the partition, i.e., 
$\{\{v_{j'-1},v_{j'}\}, \{v_{j'},v_{j'+1}\}\} \subseteq \algvertex^{\eta,\pi}(\Gamma) \cap P^{\eta}_{\eta^{-1}(v_{j'})}$. 
This contradicts the fact that at most one edge is selected from each set, i.e., $|\algvertex^{\eta,\pi}(\Gamma) \cap P^{\eta}_i| \leq 1$ for all~$i\in [n]$.

Impartiality of the mechanism follows directly from the fact that, for each realization of the permutations~$\eta$ and~$\pi$, the vote of an edge~$e$ only affects 
(1) the selection of edges in other sets and 
(2) the selection of edges in the same set that appear later in the permutation~$\pi$, provided that~$e$ is not the candidate at that point and is, therefore, not eligible for selection anymore. 
For further details, we refer the reader to the proof of the impartiality of the \emph{$k$-partition with permutation} mechanism by \citet[][Theorem 5.1]{bjelde2017impartial}, which, given a partition and a permutation of the agents, proceeds in the same way as \nalgvertex.

Finally, we combine the previous lemmas to conclude the approximation ratio. 
For a simple graphic matroid instance~$\Gamma=(E,V,W)$ with $W$ a binary score matrix, we have that
\begin{align*}
    \EE_{\eta,\pi}[\score_W(\algvertex(\Gamma))] & \geq \EE_{\eta,\pi}[\score_{W^{\eta,\pi}}(\algmatroidpar(E,\bfP^{\eta},\score_{W^{\eta,\pi}}))]\\
    & \geq \frac{2}{3} \EE_{\eta}[\score_W(\algmatroidpar(E,\bfP^{\eta},\score_W))]\\
    & \geq \frac{1}{3} \score_W(\OPT(\Gamma)),
\end{align*}
where the first inequality follows from \Cref{lem:algvertex-actual-observed} and \cref{eq:equivalence-max-matroidpar}, the second one from \Cref{lem:algmatroidpar-observed-actual}, and the last one from \Cref{lem:algpartition-1-2}. 
\end{proof}

The previous analysis is tight: \nalgvertex\ is not better than \mbox{$\frac{1}{3}$-optimal} on instances like the one depicted in \Cref{fig:bad-case-vertex-partition}.
Intuitively, each edge~$\{u,v_j\}$ is selected with non-marginal probability only if it belongs to~$P^\eta_{\eta^{-1}(v_j)}$, which occurs with probability~$\frac{1}{2}$. Given this fact, it is selected with probability~$\smash{\frac{2}{3}}$, as it is certainly selected when~$v_{j+1}$ is not in~$P^\eta_{\eta^{-1}(v_j)}$ (which happens with probability~$\smash{\frac{1}{3}}$ conditioned on the former event), and it is selected with probability~$\smash{\frac{1}{2}}$ otherwise.
We state this in the following proposition.
\begin{figure}
% \begin{wrapfigure}[20]{r}{.43\linewidth}
\centering
	\begin{tikzpicture}
		\draw (0,0) node[circle,inner sep=0, minimum size=0.6cm,draw=black,ultra thick](0){\small$u$};
		\foreach \i/\j in {0/1,1/2,2/3,3/4,4/5,5/6,6/7}{\draw (\i*45:2.5cm) node[circle,inner sep=0, minimum size=0.6cm,draw=black,ultra thick](\j){\small$v_\j$};}
		\foreach \i in {19,20,21,22,23}{
			\draw (\i*360/24:2.5cm) node[circle,inner sep=0, minimum size=0.1cm,fill=black](8){};
			\draw (\i*360/24:1.5cm) node[circle,inner sep=0, minimum size=0.1cm,fill=color1](9){};
		}
		\foreach \i in {1,2,3,4,5,6,7}{\draw[-,draw,ultra thick] (0) edge (\i);}
		\foreach \i/\j in {1/2,2/3,3/4,4/5,5/6,6/7}{\draw[-,draw,ultra thick] (\i) edge (\j);}
		\foreach \i/\j/\k in {1.48/0.55/45, 0.64/1.44/90, -0.55/1.48/135, -1.44/0.64/180, -1.48/-0.55/225, -0.64/-1.44/270}
		{\begin{scope}[shift={(\i,\j)},rotate=\k]
			\draw [thick, -Latex, color1](0.75,0) .. controls (0.25,.4) and (-0.25,.4) .. (-0.75,0);
		\end{scope}}
	\end{tikzpicture}
	\caption{Example of a graph~$(V,E)$ (in black) and (binary) score matrix (represented by light blue arcs) such that \nalgvertex~is only \mbox{$\frac{1}{3}$-optimal} on the instance~$(E,V,W)$ when the number of vertices~$\{v_j\}_{j\in [n-1]}$ tends to infinity.}
\label{fig:bad-case-vertex-partition}
\end{figure}
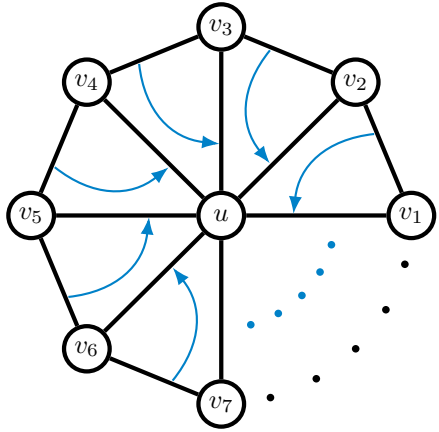
% \end{wrapfigure}

\begin{proposition}\label{prop:tightness-algvertex}
    For every~$\varepsilon>0$, there exists a simple graphic matroid instance~$\Gamma$ with a binary score matrix such that \nalgvertex\ is not \mbox{$\big(\frac{1}{3}+\varepsilon\big)$-optimal} on~$\Gamma$.
\end{proposition}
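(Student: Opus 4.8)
The plan is to analyze the family of simple graphic matroid instances underlying \Cref{fig:bad-case-vertex-partition} and to show that the approximation ratio of \nalgvertex\ on them converges to $\tfrac13$. Concretely, for an integer $n\ge 3$ let $G_n=(V_n,E_n)$ have vertices $V_n=\{u,v_1,\dots,v_{n-1}\}$, \emph{spoke} edges $f_j\coloneqq\{u,v_j\}$ for $j\in[n-1]$, and \emph{path} edges $e_j\coloneqq\{v_j,v_{j+1}\}$ for $j\in[n-2]$, and let $W$ be the binary score matrix with $w_{e_jf_j}=1$ for all $j\in[n-2]$ and every other entry $0$; set $\Gamma_n=(E_n,V_n,W)$. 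A forest of a graph on $n$ vertices has at most $n-1$ edges and the only edges of positive score are $f_1,\dots,f_{n-2}$, so the maximum score of a forest is $n-2$ (attained, e.g., by the spanning star $\{f_1,\dots,f_{n-1}\}$), i.e.\ $\score_W(\OPT(\Gamma_n))=n-2$. Since \nalgvertex\ always returns a forest, $\score_W(\algvertex(\Gamma_n))$ equals the number of selected spokes among $f_1,\dots,f_{n-2}$; writing $q_j\coloneqq\PP_{\eta,\pi}[f_j\in\algvertex(\Gamma_n)]$, it therefore suffices to show $\sum_{j=1}^{n-2}q_j\le 1+\tfrac{n-2}{3}$, which bounds the ratio on $\Gamma_n$ by $\tfrac13+\tfrac1{n-2}$, so that any $n>2+1/\varepsilon$ proves the claim.

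To bound $q_j$ I would split according to the block of $\bfP^{\eta}$ that contains $f_j$: it is $P^{\eta}_{\eta^{-1}(u)}$ if $u$ precedes $v_j$ in $\eta$, and $P^{\eta}_{\eta^{-1}(v_j)}$ otherwise. The first case is harmless in aggregate, since $P^{\eta}_{\eta^{-1}(u)}\subseteq\delta(u)$ contains only spokes and \nalgvertex\ selects exactly one edge per block; hence for every realization of $(\eta,\pi)$ at most one of $f_1,\dots,f_{n-2}$ is selected from $u$'s block, contributing at most $1$ to $\sum_{j=1}^{n-2}q_j$. For the second case the plan is to prove
\[
\PP_{\eta,\pi}\bigl[\, f_j\in\algvertex(\Gamma_n) \bigm| \eta^{-1}(v_j)<\eta^{-1}(u) \,\bigr]=\tfrac23,
\]
which, together with $\PP[\eta^{-1}(v_j)<\eta^{-1}(u)]=\tfrac12$, contributes exactly $\tfrac13$ to $q_j$ for each $j$, giving the desired bound.

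The heart of the argument --- and the main obstacle --- is establishing this value $\tfrac23$. Conditioning on $\eta^{-1}(v_j)<\eta^{-1}(u)$, the block $P^{\eta}_{\eta^{-1}(v_j)}$ containing $f_j$ consists of $f_j$ together with some subset of $\{e_{j-1},e_j\}$, and since $e_{j-1},e_j$ receive no score their observed score is $0$; moreover the observed score $W^{\eta,\pi}_{e_jf_j}$ of $f_j$ equals $1$ unless $e_j$ lies in the same block (equivalently $\eta^{-1}(v_j)<\eta^{-1}(v_{j+1})$) \emph{and} $\pi^{-1}(e_j)>\pi^{-1}(f_j)$. If this observed score is $1$, then $f_j$ uniquely maximizes the observed score in its block and is selected by \cref{eq:claim-selected-score-i}; if it is $0$, then $e_j$ is in the block with $\pi^{-1}(e_j)>\pi^{-1}(f_j)$, all observed scores in the block are $0$, the running value $\score'$ in \Cref{alg:vertex} stays $0$ so the replacement test $\score_{W,R}(e)\geq\score'$ always passes, and the inner loop returns the $\pi$-last edge of the block --- which is not $f_j$. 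Hence, given $\eta^{-1}(v_j)<\eta^{-1}(u)$, the edge $f_j$ is selected exactly when $\eta^{-1}(v_{j+1})<\eta^{-1}(v_j)$, or when $\eta^{-1}(v_j)<\eta^{-1}(v_{j+1})$ and $\pi^{-1}(e_j)<\pi^{-1}(f_j)$; looking only at the relative $\eta$-order of $u,v_j,v_{j+1}$ shows the first event has conditional probability $\tfrac13$, and independence of $\eta$ and $\pi$ gives the second conditional probability $\tfrac23\cdot\tfrac12$, for a total of $\tfrac23$. The delicate point, which one must verify carefully, is exactly this degenerate all-zero tie inside a block: that the ``$\geq$''-update rule of \Cref{alg:vertex} breaks it toward the $\pi$-last edge, hence against $f_j$.
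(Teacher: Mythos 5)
Your proposal is correct and follows essentially the same route as the paper: the same wheel-like instance (you use a path on the $v_j$'s where the paper closes it into a cycle, which only shifts $\score_W(\OPT)$ from $n-1$ to $n-2$), the same split on whether $u$ precedes $v_j$ in $\eta$, the same computation of the conditional selection probability $\tfrac13\cdot 1+\tfrac23\cdot\tfrac12=\tfrac23$ from the relative order of $u,v_j,v_{j+1}$ and the $\pi$-tiebreak, and the same observation that $u$'s block contributes at most one spoke in total. Your handling of the hub block via a per-realization count (bounding $\sum_j \PP[f_j\in\algvertex,\ f_j\in P^\eta_{\eta^{-1}(u)}]$ by $\EE[|\algvertex\cap P^\eta_{\eta^{-1}(u)}|]\le 1$) is in fact a slightly cleaner piece of bookkeeping than the paper's sum of differently-conditioned probabilities, and your explicit verification of the all-zero tie being broken toward the $\pi$-last edge is exactly the point the paper also relies on.
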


\begin{proof}
For~$n\in \NN$, we consider a simple graphic matroid instance like the one depicted in \Cref{fig:bad-case-vertex-partition}. 
Specifically, we let~$V\coloneqq \{u\}\cup \{v_j \mid j\in [n-1]\}$,
\[
    E \coloneqq \{\{u,v_j\}\mid j\in [n-1]\} \cup \{\{v_j,v_{j+1}\}\mid j\in [n-1]\},
\]
and a binary score matrix~$W\in \{0,1\}^{E\times E}$ given by
\[
    W_{ef} \coloneqq \begin{cases}
        1 & \text{if there exists } j\in [n-1] \text{ s.t.\ } e=\{v_j,v_{j+1}\} \text{ and } f=\{u,v_j\},\\
        0 & \text{otherwise.}
    \end{cases}
\]
As usual, we write~$v_n \coloneqq v_1$ and~$v_0 \coloneqq v_{n-1}$ for simplicity. 
We also write~$e_j\coloneqq \{v_j,v_{j+1}\}$ and $f_j\coloneqq \{u,v_j\}$ for each~$j\in [n-1]$. 
We consider the instance~$\Gamma=(E,V,W)$. 
As before, we denote by $\algvertex^{\eta,\pi}(\Gamma)$ the outcome of $\algvertex(\Gamma)$ when the permutations realized at the beginning of \Cref{alg:vertex} are~$\eta\in \calP_V$  and~$\pi\in \calP_E$. 
Then, for each~$j\in [n-1]$
\begin{align}
    & \phantom{{}={}} \PP[f_j \in \algvertex^{\eta,\pi}(\Gamma)]\nonumber\\
    & = \PP[f_j \in \algvertex^{\eta,\pi}(\Gamma) \mid \eta^{-1}(v_j)<\eta^{-1}(u)] \PP[\eta^{-1}(v_j)<\eta^{-1}(u)] \nonumber\\
    & \phantom{{}={}} + \PP[f_j \in \algvertex^{\eta,\pi}(\Gamma) \mid \eta^{-1}(v_j)>\eta^{-1}(u)] \PP[\eta^{-1}(v_j)>\eta^{-1}(u)]\nonumber\\
    & = \frac{1}{2} \big( \PP[f_j \in \algvertex^{\eta,\pi}(\Gamma) \mid \eta^{-1}(v_j)<\eta^{-1}(u)] + \PP[f_j \in \algvertex^{\eta,\pi}(\Gamma) \mid \eta^{-1}(v_j)>\eta^{-1}(u)] \big).\label[eq]{eq:algvertex-prob-vj}
\end{align}
We now study the terms of the right-hand side separately. 
To compute the first one, we fix~$j\in [n-1]$ and note that $\eta^{-1}(v_j)<\eta^{-1}(u)$ implies that $f_j \in P^\eta_{\eta^{-1}(v_j)}$. 
We claim that $f_j \in \algvertex^{\eta,\pi}(\Gamma)$ if and only if~$W^{\eta,\pi}_{e_jf_j}=W_{e_jf_j}$. 
The fact that $W^{\eta,\pi}_{e_jf_j}=W_{e_jf_j}$ implies $f_j \in \algvertex^{\eta,\pi}(\Gamma)$ follows from the fact that $\score_W(e)=0$ for every $e\in \delta(v_j)-f_j$. 
For the other implication, note that if $W^{\eta,\pi}_{e_jf_j}=0$, then $\score_{W^{\eta,\pi}}(e)=0$ for every~$e\in P^\eta_{\eta^{-1}(v_j)}$, but~$e_j\in P^\eta_{\eta^{-1}(v_j)}$ is such that~$\pi^{-1}(e_j)>\pi^{-1}(f_j)$ and thus~$f_j$ cannot be selected. 
Therefore,
\begin{align}
    & \phantom{{}={}} \PP[f_j \in \algvertex^{\eta,\pi}(\Gamma) \mid \eta^{-1}(v_j)<\eta^{-1}(u)] \nonumber\\
    & = \PP\big[W^{\eta,\pi}_{e_jf_j}=W_{e_jf_j} \mid \eta^{-1}(v_j)<\eta^{-1}(u)\big]\nonumber\\
    & = \PP\big[W^{\eta,\pi}_{e_jf_j}=W_{e_jf_j} \mid \eta^{-1}(v_{j+1})<\eta^{-1}(v_j)<\eta^{-1}(u)\big] \PP[\eta^{-1}(v_{j+1})<\eta^{-1}(v_j) \mid \eta^{-1}(v_j)<\eta^{-1}(u)]\nonumber\\
    & \phantom{{}={}} + \PP\big[W^{\eta,\pi}_{e_jf_j}=W_{e_jf_j} \mid \eta^{-1}(v_j)<\eta^{-1}(v_{j+1}) \text{ and } \eta^{-1}(v_j)<\eta^{-1}(u)\big]\nonumber\\
    & \phantom{{}={} +} \cdot \PP[\eta^{-1}(v_j)<\eta^{-1}(v_{j+1}) \mid \eta^{-1}(v_j)<\eta^{-1}(u)]\nonumber\\
    & = \frac{1}{3} \PP\big[W^{\eta,\pi}_{e_jf_j}=W_{e_jf_j} \mid e_j\in P^\eta_{\eta^{-1}(v_{j+1})} \text{ and } f_j\in P^\eta_{\eta^{-1}(v_{j})}\big] \nonumber\\
    & \phantom{{}={}} + \frac{2}{3} \PP\big[W^{\eta,\pi}_{e_jf_j}=W_{e_jf_j} \mid \{e_j,f_j\} \subseteq P^\eta_{\eta^{-1}(v_j)}\big]\nonumber\\
    & = \frac{1}{3}\cdot 1 + \frac{2}{3} \cdot \frac{1}{2}\nonumber\\
    & = \frac{2}{3}.\label[eq]{eq:algvertex-prob-vj-before-u}
\end{align}
On the other hand, $\eta^{-1}(v_j)>\eta^{-1}(u)$ for some~$j\in [n-1]$ implies that $f_j \in P^\eta_{u}$. Since at most one edge is selected from~$P^\eta_u$, we have that
\begin{align}
    \sum_{j=1}^{n-1} \PP[f_j \in \algvertex^{\eta,\pi}(\Gamma) \mid \eta^{-1}(v_j)>\eta^{-1}(u)] \leq 1.\label[eq]{eq:algvertex-prob-vj-after-u}
\end{align}
Combining \cref{eq:algvertex-prob-vj,eq:algvertex-prob-vj-before-u,eq:algvertex-prob-vj-after-u}, we obtain that
\[
    \EE[\score_W(\algvertex(\Gamma))] = \sum_{j=1}^{n-1}\PP[f_j \in \algvertex^{\eta,\pi}(\Gamma)] \leq \frac{1}{2} \Bigg(\sum_{j=1}^{n-1} \frac{2}{3} + 1\Bigg) = \frac{1}{3}(n-1)+\frac{1}{2}.
\]
Since $\{f_j\mid j\in [n-1]\}$ is a forest, we have $\score_W(\OPT(\Gamma))=n-1$, and thus
\[
    \frac{\EE[\score_W(\algvertex(\Gamma))]}{\score_W(\OPT(\Gamma))} \leq \frac{1}{3}+\frac{1}{2(n-1)}.
\]
We conclude that, for any~$\varepsilon>0$, we can take~$n$ large enough so that \nalgvertex\ is not \mbox{$\big(\frac{1}{3}+\varepsilon\big)$-optimal} on~$\Gamma$.
\end{proof}

\printbibliography

\end{document}